
\documentclass[10pt,a4paper,oneside]{article}

\usepackage{amsmath,amssymb,amsfonts,amsthm}
\usepackage{mathtools}
\usepackage{bm}
\usepackage{algorithm,algorithmic}
\usepackage{tikz}
\usepackage{relsize}
\usepackage{apacite}
\usepackage{graphicx}
\usepackage{subcaption}
\usepackage[ddmmyyyy]{datetime}


\newcommand{\dist}{{\textstyle \mathsmaller{\varDelta}}}


\newtheorem{definition}{Definition}[section]
\newtheorem{proposition}{Proposition}[section]
\newtheorem{lemma}{Lemma}[section]


\title{Randomized Optimal Transport on a Graph:\\
framework and new distance measures\\
\vspace{5pt}
\normalsize{(draft preprint paper accepted for publication in Network Science journal)}}

\author{	Guillaume Guex \\
		{\small ICTEAM, Universit\'{e} catholique de Louvain, Belgium} \and 
		Ilkka Kivim\"aki \\
		{\small Department of Computer Science, Aalto University, Finland \&} \\
		{\small ICTEAM, Universit\'{e} catholique de Louvain, Belgium} \and
		Marco Saerens \\
		{\small ICTEAM, Universit\'{e} catholique de Louvain, Belgium} }

\date{18/02/2019}		

\begin{document}


\maketitle

\begin{abstract}
The recently developed bag-of-paths (BoP) framework consists in setting a Gibbs-Boltzmann distribution on all feasible paths of a graph. This probability distribution favors short paths over long ones, with a free parameter (the temperature $T$) controlling the entropic level of the distribution. This formalism enables the computation of new distances or dissimilarities, interpolating between the shortest-path and the resistance distance, which have been shown to perform well in clustering and classification tasks. In this work, the bag-of-paths formalism is extended by adding two independent equality constraints fixing starting and ending nodes distributions of paths (margins). When the temperature is low, this formalism is shown to be equivalent to a relaxation of the optimal transport problem on a network where paths carry a flow between two discrete distributions on nodes. The randomization is achieved by considering free energy minimization instead of traditional cost minimization. Algorithms computing the optimal free energy solution are developed for two types of paths: hitting (or absorbing) paths and non-hitting, regular, paths, and require the inversion of an $n \times n$ matrix with $n$ being the number of nodes. Interestingly, for regular paths on an undirected graph, the resulting optimal policy interpolates between the deterministic optimal transport policy ($T \rightarrow 0^{+}$) and the solution to the corresponding electrical circuit ($T \rightarrow \infty$). Two distance measures between nodes and a dissimilarity between groups of nodes, both integrating weights on nodes, are derived from this framework. \\

\noindent {\bf Keywords}: Network Science, Optimal Transportation, Bag of Paths, Randomized Shortest Path, Distances between Nodes, Link Analysis.
\end{abstract}


\section{Introduction}

\subsection{General introduction and motivation}
Today, network data are studied in many different areas of science, including applied mathematics, computer science, social science, physics, chemistry, pattern recognition, applied statistics, data mining and machine learning, to name a few (see, e.g., \cite{barabasi2016network,chung2006complex,estrada2012structure,fouss2016algorithms,kolaczyk2009statistical,lewis2009network,newman2010networks,silva2016machine,thelwall2004link,wasserman1994social}).
In this context, one key problem is the definition of distances between nodes taking both direct and indirect connections into account \cite{chebotarev2011class,chebotarev2012walk,chebotarev2013studying,fouss2016algorithms,herbster2009predicting,franccoisse2016bag,lu2011link,alamgir2011phase,yen2008family}.
This problem is faced in many applications such as link prediction, community detection, node classification, and network visualization, among others.

Now, it has been shown that the standard shortest path distance and the resistance distance \cite{klein1993resistance} suffer from important drawbacks in some situations, which sometimes hinders their use as distance measures between nodes. More precisely, the shortest path distance does not integrate the concept of high connectivity between the two nodes (it only considers the shortest paths, see, e.g., \cite{fouss2016algorithms}), while the resistance distance provides useless results when dealing with large graphs (the so-called ``lost-in-space effect" \cite{vonLuxburg2010getting,vonLuxburg2014hitting}).
Another drawback of the shortest path distance is that it provides lots of ties when comparing distances, especially on unweighted undirected graphs.

In order to avoid the drawbacks of the shortest path and resistance distances, new families of distance measures, interpolating between these two extremes, were recently suggested based on a \emph{bag-of-paths} (BoP) \emph{framework} \cite{franccoisse2016bag,kivimaki2014developments,Lebichot-2014,mantrach2009sum}.
This framework defines a Gibbs-Boltzmann probability distribution over paths on a graph, which focuses on the shortest paths, but spreads also on longer paths and random walks.
The spread of the distribution is controlled by a temperature parameter monitoring the balance between choosing low-cost paths and a pure random behaviour.
Different distance measures between nodes are then derived based on this distribution; other ones are described in the next, related work, subsection.

Following this previous work, the effort is pursued in this paper with the introduction of \emph{weighted} distance measures derived from new bag-of-paths (BoP) models.
The weights of the distances are determined by introducing equality constraints on the path distribution margins, i.e., the a priori probabilities over starting nodes and ending nodes of paths.
In other words, the new model assumes that the user knows only where paths on average start and where they on average end but not how they are distributed otherwise.
In the original BoP model, which was developed for the unweighted BoP-based distances, the starting and ending node distributions are instead unconstrained, and can be inferred directly from the probabilities of the paths \cite{franccoisse2016bag}.
The model proposed in the current work will be called the \emph{margin-constrained bag-of-paths framework} (abbreviated as cBoP).
More precisely, the work defines two models by considering two different types of paths -- the first one is based on regular, non-hitting paths, and the second on hitting paths, i.e.\ paths where the ending node cannot appear as an intermediate node.

Weighting the nodes of a network for determining distances can be important in applications where each node represents a whole collection of items (like cities where nodes could be weighted by population). Moreover, in some situations, it could be beneficial to weigh nodes by the reciprocal of their degree in order to avoid the \emph{hubness effect} \cite{radovanovic2010hubs}. This will be investigated experimentally in further work.
In addition to defining weighted distances between nodes on a graph, a dissimilarity measure between groups of nodes on the graph is derived from the margin-constrained BoP framework.

The margin-constrained BoP model can also be understood as defining a \emph{randomized policy} for the \emph{optimal transport problem on a graph} \cite{ahuja1993network,kantorovich1942translocation,villani2003topics,villani2008optimal}, because the starting and ending node distributions can be considered as \emph{supply} and \emph{demand} distributions for goods meant to be transported over the graph.
The randomization is achieved by finding the probability mass on the set of paths connecting starting and ending nodes that minimizes free energy (a balance between expected cost and entropy), subject to margin constraints corresponding to the predefined supply and demand distributions.
As is common for such formulations, minimizing this objective function results in a Gibbs-Boltzmann probability mass on paths.

As discussed in more detail in \cite{saerens2009randomized}, randomization from optimality can prove useful for several reasons, both in the context of transportation, as well as when measuring distance:
\begin{itemize}
\item If the environment is changing over time (non-stationary), the system
could benefit from randomization by performing continual exploration. 
\item A deterministic policy makes behavior totally predictable; on the contrary, randomness introduces unpredictability and therefore renders interception
more difficult. Randomization has proved useful for this reason
in game theory (see, e.g., \cite{Osborne-2004}).
\item A randomized policy spreads the traffic on multiple paths, therefore reducing the danger of congestion. 
\item A distance measure accounting for all paths -- and thus integrating the concept of high connectivity -- can be more useful, e.g.\ in social network analysis, than relying on the best paths only.
\end{itemize}

\subsection{Related work}
The model proposed in this work builds on and extends previous work dedicated to the bag-of-paths (BoP) framework \cite{franccoisse2016bag,mantrach2009sum}, as well as the \emph{randomized-shortest-path} (RSP) framework \cite{akamatsu1996cyclic,kivimaki2014developments,kivimaki2016two,saerens2009randomized,yen2008family}, and their variants \cite{bavaud2012interpolating,guex2015flow,guex2016interpolating}; see also \cite{Zhang-2013} for a related proposition, called path integral.

The main motivation for using such models can be understood as follows \cite{Lebichot-2018}.
Most of the traditional network measures are essentially based on two different paradigms about movement or communication occurring in the network: optimal communication based on shortest paths, and random communication based on a random walk on the graph.
For instance, shortest path distance, as well as the standard betweenness centrality \cite{Freeman-1977} are defined from shortest paths, while resistance distance and random walk centrality \cite{Brandes-2005b,newman2005measure} are based on random walks (which have a strong analogy with electrical flow on the network \cite{Snell-1984}).

However, in reality, communication or movements over a network seldom occur either optimally or purely randomly.
The BoP and RSP frameworks both relax these assumptions by interpolating between shortest paths and a pure random walk based on a temperature parameter. 
This enables the definition of measures with increased adaptability given by the temperature parameter.
In addition to defining distances interpolating between the shortest path and resistance distances, \cite{franccoisse2016bag,kivimaki2014developments,yen2008family}, the models can also be used to define a centrality measure interpolating between a shortest path-based betweenness and the random walk betweenness \cite{kivimaki2016two}.

Besides the works mentioned above, other new families of distances have recently been developed integrating information on both the proximity (shortest-path distance) and amount of connectivity between nodes (captured, e.g., by the resistance distance) \cite{chebotarev2011class,chebotarev2012walk,chebotarev2013studying,fouss2016algorithms,Hashimoto-2015,herbster2009predicting,Li-2013,lu2011link,nguyen2016new,alamgir2011phase}. Many of these measures indeed interpolate (up to a constant scaling factor) between the shortest path distance (or length) and the resistance distance, therefore alleviating the previously mentioned lost-in-space effect.

A short discussion of the standard, deterministic, optimal transport on a graph problem appears in Section \ref{sec::transp}.
Methods based on the optimal transport problem using entropic regularization have recently been investigated in a number of pattern recognition and machine learning tasks (e.g., \cite{Courty-2017,Solomon-2014}). For instance, \cite{Cuturi2013,ferradans2014regularized,guex2017transportation} propose to regularize the standard objective function of the classical discrete optimal transport problem with an entropy term. They show on various problems, including image processing problems, that the resulting algorithm is much faster than the original one. Note that discrete entropy-regularized optimal transport problems were previously studied in economics, transportation science and operations research (see, e.g., \cite{Wilson-1970,Erlander-1990,Kapur-1989,Kapur-1992,Fang-1997}). The main difference with these previous contributions is that the present work defines the different quantities, such as entropy or cost, over \emph{full paths} on the network by adopting a sum-over-paths formalism.

Finally, the hubness effect, mentioned earlier, has been studied recently in various works
\cite{radovanovic2010existence,radovanovic2010hubs,suzuki2012investigating,suzuki2013centering,tomasev2014role,hara2015localized}.
Hubness is a problem faced with high-dimensional data, e.g.\ when building nearest-neighbor graphs, as some nodes may become over-represented as hubs  in such graphs due to concentration of distances in high-dimensional spaces.
The weighting of distances provided by the margin-constrained BoP framework could help alleviate this effect in graph-based data analysis applications.

\subsection{Main contributions}
This work defines weighted distance measures between graph nodes by developing a \emph{margin-constrained} bag-of-paths model.
This model can be interpreted as a solution to the optimal transport problem on a graph involving a regularization term.
The problem is tackled by using Kullback-Leibler divergence (also called relative entropy \cite{Cover-2006}) as regularization term.
Furthermore, two types of paths are considered: regular paths and hitting, absorbing, paths.

The optimal randomized policy consists in the assignment of a probability distribution on the set of choices (deciding to follow an available edge) for each node of the network. It therefore defines (optimal) biased transition probabilities ``attracting" the agents to the destination nodes.
Furthermore, the model depends on a temperature parameter monitoring the balance between \emph{exploitation} (expected cost) and \emph{exploration} (entropy of paths) so that the solutions interpolate between the classical deterministic optimal transport solution (pure exploitation) and the random walk on the graph provided a priori by the user (pure exploration). Low temperatures correspond to (randomized) near-optimal solutions while high temperatures simply provide the predefined random walk behavior.
Note that, when considering hitting paths, the model reduces to the standard randomized shortest path model when there is only one unique initial node and one unique destination node.

The first contribution consists in deriving the probability distribution over paths minimizing expected cost under relative entropy regularization and margin constraints, for both regular and hitting paths.
Once the probability distribution over paths is derived, all the quantities of interest, such as
\begin{itemize}
\item the policy (optimal routing transition probabilities),
\item the flow over the network based on the a priori starting and ending node distributions of paths,
\item a weighted distance measure between nodes, and
\item a dissimilarity between groups of nodes
\end{itemize}
can be defined and computed by simple matrix expressions.

Note that the present work is partly a re-interpretation of \cite{guex2016interpolating} in which the author already studied a similar optimal transport on a graph problem regularized by an entropic term. There, the entropic term at the node level was defined by considering, on each node, the relative entropy between the desired transition probabilities (the policy) and the reference transition probabilities corresponding to a natural random walk on the graph. Then, the global entropic regularization term was defined as a weighted sum of the relative entropies over all nodes. As in \cite{saerens2009randomized}, the weighting factor is set to the expected number of visits to the node, therefore putting more emphasis on frequently visited nodes. In the current work, we adopt a paths-based formalism and the entropic term is instead defined according to the relative entropy over \emph{path distributions}.

Interestingly, it was found that the model derived in \cite{guex2016interpolating} is exactly equivalent to one of the two models introduced in this work, the one dealing with regular, non-hitting, paths, in the sense that they provide the same routing policy. Therefore, in comparison with \cite{guex2016interpolating}, the present work reformulates the problem in terms of probabilities and relative entropy over paths in the network, instead of transition probabilities on nodes. It also introduces another algorithm for solving the problem and it derives a new algorithm for dealing with hitting paths.

%

In short, the main contributions of this paper are
\begin{itemize}
  \item the development of a new \emph{margin-constrained bag-of-paths} framework, considering fixed probability distributions on starting and ending nodes, 
  \item the introduction of a randomized solution to the optimal transport on a graph problem for both regular and hitting paths, 
  \item the definition of a new distance measure between nodes and a dissimilarity between groups of nodes derived from this framework, and
  \item some illustrative simulations to explore the potential of the framework.
\end{itemize}

The remaining of the paper is as follows. Section \ref{Sec_formalism01} develops the formalism and derives the solution to the margin-constrained bag-of-paths problem on a graph for regular paths, while Section \ref{Sec_formalism_hitting_paths01} extends the model to hitting paths. Then, Section \ref{sec::transp} discusses the connections with the standard optimal transport on a graph problem. The derived distances are introduced in Section \ref{Sec_distance01}. Section \ref{Sec_case_studies01} provides some illustrative simulations. Finally, Section \ref{Sec_conclusion01} is the conclusion.


\section{The margin-constrained bag-of-paths formalism}
\label{Sec_formalism01}


\subsection{Background and notation}
\label{Subsec_background01}
This Subsection first sets the notation and terminology of the paper, after which the standard bag-of-paths (BoP) and randomized shortest-paths (RSP) frameworks are briefly reviewed (note that a discussion of the standard optimal transport problem is deferred to Section \ref{sec::transp}).
Then the margin-constrained BoP (cBop) framework, and the relevant related results, are presented.
Note that in this section regular, non-hitting, paths are considered whereas Section \ref{Sec_formalism_hitting_paths01} restricts the set of paths to hitting, or absorbing, paths where the ending node may appear only once as final node.

\paragraph{Notation.}
In this paper, we always assume a weighted, strongly connected, directed graph $\mathcal{G} = (\mathcal{V},\mathcal{E})$, with set of nodes $\mathcal{V} = \{1,2,\ldots,n\}$ and set of edges $\mathcal{E} = \{(i,j)\}$ containing $m$ edges in total. The nonnegative weights on edges, noted $a_{ij}$, represent local affinities between nodes, and are contained in the \emph{weighted adjacency matrix} $\mathbf{A}$. Edge weights define a natural \emph{reference transition probabilities matrix} $\mathbf{P}^\mathrm{ref}$ of a standard random walk on $\mathcal{G}$, with $\mathbf{P}^\mathrm{ref} = \mathbf{D}^{-1}\mathbf{A}$, where $\mathbf{D}$ is the diagonal matrix containing row sums of $\mathbf{A}$ (outdegrees). The Markov chain defined by these transition probabilities is assumed to be regular. Elementwise, we have
\begin{equation}
p^\mathrm{ref}_{ij} = \frac{a_{ij}}{ \sum_{k \in \mathcal{V}} a_{ik} }.
\label{Eq_define_transition_probabilities_elementwise01}
\end{equation}
\noindent Along with weights, nonnegative \emph{edges costs}, noted $c_{ij}$, are also provided. These costs are contained in the \emph{cost matrix} $\mathbf{C}$, and can be defined either independently from weights $a_{ij}$, or, e.g., thanks to $c_{ij} = 1/a_{ij}$. We define a $t$-length \emph{path} on the graph $\mathcal{G}$, denoted by $\wp$, as a sequence of nodes $\wp = (i_0,\dots,i_t)$, where $t>0$ and $(i_{\tau-1},i_\tau) \in \mathcal{E}$ for all $\tau = 1,\dots,t$. Note that a node can appear several times on the path (including the ending node). We denote a path starting in node $i$ and ending in node $j$ by $\wp_{ij}$. The \emph{likelihood} $\tilde{\pi}^\mathrm{ref}(\wp)$ of a $t$-length path $\wp= (i_0,\dots,i_t)$ starting in $i_0$ and ending in $i_t$ is defined by $\tilde{\pi}^\mathrm{ref}(\wp) \triangleq \Pi_{\tau = 1}^t p^\mathrm{ref}_{i_{\tau-1},i_{\tau}}$ and its \emph{cost} $\tilde{c}(\wp)$ by $\tilde{c}(\wp) \triangleq \sum_{\tau = 1}^t c_{i_{\tau-1},i_{\tau}}$. We further denote respectively by $\mathcal{P}_{ij}$ and $\mathcal{P}$, the \emph{set of paths starting in $i$ and ending in $j$} and the  \emph{set of all paths} in $\mathcal{G}$, also named the \emph{bag-of-paths}, with $\mathcal{P} = \cup_{i,j \in \mathcal{V}} \, \mathcal{P}_{ij}$. By convention, zero-length paths starting and ending in the same node with a zero cost are also included in the set of paths (see \cite{franccoisse2016bag} for details). 

All vectors will be column vectors and denoted in lowercase bold while matrices are in uppercase bold.

\paragraph{The bag-of-paths and the randomized-shortest-path frameworks.}

The context defined above states the usual background of the \emph{bag-of-paths} framework developed in \cite{franccoisse2016bag,mantrach2009sum}. In these works, a probability distribution over the set of all paths, $\mathrm{P}(\wp)$  with $\wp \in \mathcal{P}$, was constructed in order to favor paths of low cost subject to a constant relative entropy constraint. It provides the probability of drawing a particular path $\wp$ from a bag, with replacement. The distribution can equivalently be obtained by solving the following problem (see \cite{kivimaki2014developments})\footnote{Alternatively, it can also be obtained by following a maximum entropy argument \cite{Jaynes-1957,Cover-2006,Kapur-1992}.}:
\begin{equation}
\vline \begin{array}{ll@{}ll}
\underset{\{ \mathrm{P}(\wp) \}_{\wp \in \mathcal{P}} }{\text{minimize}}  & \text{FE}(\mathrm{P}) = \displaystyle\sum\limits_{\wp \in \mathcal{P}} \mathrm{P}(\wp) \tilde{c}(\wp) + T \sum_{\wp \in \mathcal{P}} \mathrm{P}(\wp) \log\left( \frac{\mathrm{P}(\wp)}{\mathrm{P^\mathrm{ref}}(\wp)} \right) \\
\text{subject to}& \sum_{\wp \in \mathcal{P}} \mathrm{P}(\wp) = 1,
\end{array} \label{old_prob}
\end{equation}
where $T > 0$, the \emph{temperature}, is a free parameter defining the relative entropy level \cite{Cover-2006}, and $\mathrm{P^\mathrm{ref}}(\wp)$ is the natural, reference, probability of a path $\wp$ depending on $\tilde{\pi}^\mathrm{ref}(\wp)$ (see Equation (\ref{Eq_define_transition_probabilities_elementwise01})) and to be discussed later. $\text{FE}(\mathrm{P})$ is called the (relative) \emph{free energy}, due to its similarity with the statistical physics quantity. It corresponds to the expected cost, or energy, to which the relative entropy weighted by temperature is added. Strictly speaking, a non-negativity constraint should also be added to (\ref{old_prob}), but this is not necessary since the resulting probability distribution will automatically be non-negative.
Indeed, following \cite{Jaynes-1957,Cover-2006,Kapur-1992} and \cite{franccoisse2016bag,kivimaki2014developments,mantrach2009sum} for the paths formalism, the solution is a standard Gibbs-Boltzmann distribution of the form
\begin{equation*}
\mathrm{P}(\wp)
= \frac{\mathrm{P}^{\mathrm{ref}}(\wp)\exp\left[-\theta \tilde{c}(\wp)\right]}{{\displaystyle \sum_{\wp' \in\mathcal{P}}}\mathrm{P}^{\mathrm{ref}}(\wp')\exp\left[-\theta \tilde{c}(\wp')\right]},
\label{Eq_Gibbs_Boltzmann_probability_distribution01}
\end{equation*}
where $\tilde{c}(\wp)$ is the total cumulated cost along path $\wp$. It provides the probability of choosing any path $\wp \in \mathcal{P}$ in the network.

In addition, the \emph{randomized shortest path} framework \cite{saerens2009randomized,kivimaki2014developments,yen2008family}; inspired by \cite{akamatsu1996cyclic} restricts the set of paths to hitting paths (see next section for details) connecting only two predefined nodes $i$ and $j$. This defines optimal randomized policies for reaching node $j$ from $i$, ranging from shortest paths to a random walk. A method for computing the RSP on large sparse graphs by restricting the set to paths with a finite predefined length was developed in \cite[Section 4]{mantrach2011semi}.


\subsection{Problem definition}
\label{Subsec_problemDefinition01}

In this Section, the BoP framework developed in \cite{franccoisse2016bag,mantrach2009sum} is extended into a \emph{margin-constrained bag-of-paths framework} by introducing two additional density vectors on nodes, provided by the user: $\bm{\sigma}_\mathrm{in}$ and $\bm{\sigma}_\mathrm{out}$, with $\sum_{i \in \mathcal{V}} \sigma^\mathrm{in}_i = \sum_{i \in \mathcal{V}} \sigma^\mathrm{out}_i = 1$ and $\sigma^\mathrm{in}_i,\sigma^\mathrm{out}_i \geq 0, \; \forall i \in \mathcal{V}$. These vectors define desired \emph{constraints on the distribution margins} of our bag-of-paths probabilities, i.e.
\begin{eqnarray}
\mathrm{P}(S = i) \triangleq \sum_{j \in \mathcal{V}} \sum_{\wp_{ij} \in \mathcal{P}_{ij}} \mathrm{P}(\wp_{ij}) = \sigma^\mathrm{in}_i & \forall i \in \mathcal{V},  \label{constr_in} \\
\mathrm{P}(E = j) \triangleq \sum_{i \in \mathcal{V}} \sum_{\wp_{ij} \in \mathcal{P}_{ij}} \mathrm{P}(\wp_{ij}) = \sigma^\mathrm{out}_j & \forall j \in \mathcal{V}, \label{constr_out}
\end{eqnarray}
where $S$ and $E$ denote random variables containing respectively the starting and the ending nodes of the drawn path. In turn, it means that we want to constrain the probability of picking a path in the BoP starting from $i$ to value $\sigma^\mathrm{in}_i$ and the probability of picking a path ending in $j$ to value $\sigma^\mathrm{out}_j$. The intuition is as follows: the model assumes that we are carrying a unit of goods in the network from the set of nodes $\mathcal{I}n = \{ i \in \mathcal{V}: \sigma^\mathrm{in}_i > 0 \}$ (supply nodes) to the set $\mathcal{O}ut = \{ j \in \mathcal{V}: \sigma^\mathrm{out}_j > 0 \}$ (demand nodes) in an optimal way by minimizing a balance between expected cost and relative entropy of paths. A discussion of this model in the light of optimal transport on a graph appears later in Section \ref{sec::transp}.

Altogether, we extend problem (\ref{old_prob}) and seek the \emph{optimal paths probability distribution}, $\{ \mathrm{P}(\wp) \}, \wp \in \mathcal{P}$, solving 
\begin{equation}
\label{init_prob}
\vline \begin{array}{ll@{}ll}
\underset{\{ \mathrm{P}(\wp) \}_{\wp \in \mathcal{P}}} {\text{minimize}} & \text{FE}(\mathrm{P}) =  \displaystyle\sum\limits_{\wp \in \mathcal{P}} \mathrm{P}(\wp) \tilde{c}(\wp) + T \sum_{\wp \in \mathcal{P}} \mathrm{P}(\wp) \log \left( \frac{\mathrm{P}(\wp)}{\mathrm{P^\mathrm{ref}}(\wp)} \right) \\
\text{subject to} & \sum_{j \in \mathcal{V}} \sum_{\wp_{ij} \in \mathcal{P}_{ij}} \mathrm{P}(\wp_{ij}) = \sigma^\mathrm{in}_i \qquad \forall i \in \mathcal{V} \\
          & \sum_{i \in \mathcal{V}} \sum_{\wp_{ij} \in \mathcal{P}_{ij}} \mathrm{P}(\wp_{ij}) = \sigma^\mathrm{out}_j \qquad \forall j \in \mathcal{V}. \\ 
\end{array}
\end{equation}
Note that as we have $\sum_{i \in \mathcal{V}} \sigma^\mathrm{in}_i = \sum_{i \in \mathcal{V}} \sigma^\mathrm{out}_i = 1$, the constraint $\sum_{\wp \in \mathcal{P}} \mathrm{P}(\wp) = 1$ in (\ref{old_prob}) can be dropped. The goal of this problem is to find a probability distribution with fixed margins, such that it favors the paths of least cost when $T \to 0$, and the paths with high likelihood $\tilde{\pi}^\mathrm{ref}(\wp)$ when $T \to \infty$. 

In order for the high temperature bounds to be consistent, $\mathrm{P^\mathrm{ref}}(\wp)$ must be defined according to $\tilde{\pi}^\mathrm{ref}(\wp)$. In the usual, unconstrained, BoP formalism with a uniform a priori probability of choosing the starting and ending node, the definition is simply $\mathrm{P^\mathrm{ref}}(\wp) = \tilde{\pi}^\mathrm{ref}(\wp) / \sum_{\wp' \in \mathcal{P}} \tilde{\pi}^\mathrm{ref}(\wp')$ \cite{franccoisse2016bag}\footnote{Note that non-uniform prior probabilities on the starting and ending node are briefly discussed in \cite{franccoisse2016bag}.}. However, defining reference probabilities is not as trivial in the margin-constrained setting studied in this work because of the constraints. This definition is the goal of the next section (note that the reader mainly interested in the randomized optimal transport problem can simply assume that the reference transition probabilities as given, skip the Section \ref{Sec_non_hitting_reference_probabilities01} and go directly to Section \ref{sec:cBoPComputation}). The solution of problem (\ref{init_prob}) is then stated and proved in Section \ref{sec:cBoPComputation}.


\subsection{Reference probabilities with fixed margins}
\label{Sec_non_hitting_reference_probabilities01}

The reference probability of a path, $\mathrm{P}^\mathrm{ref}(\wp)$, should have appropriate margins in order to ensure the convergence $\mathrm{P}^\star(\wp) \to \mathrm{P}^\mathrm{ref}(\wp)$ when $T \to \infty$ (pure random walk). In other words, for consistency, the reference probabilities of paths should also satisfy the constraints,
\begin{eqnarray}
\mathrm{P}^\mathrm{ref}(S = i) = \sum_{j \in \mathcal{V}} \sum_{\wp_{ij} \in \mathcal{P}_{ij}} \mathrm{P}^\mathrm{ref}(\wp_{ij}) = \sigma^\mathrm{in}_i & \forall i \in \mathcal{V},  \label{ref_constr1} \\
\mathrm{P}^\mathrm{ref}(E = j)  = \sum_{i \in \mathcal{V}} \sum_{\wp_{ij} \in \mathcal{P}_{ij}} \mathrm{P}^\mathrm{ref}(\wp_{ij}) = \sigma^\mathrm{out}_j & \forall j \in \mathcal{V}, \label{ref_constr2}
\end{eqnarray}
which further implies that the path probabilities $\mathrm{P}^\mathrm{ref}(\wp_{ij})$ sum to one.

In  \cite{franccoisse2016bag}, the reference distribution of a path is simply set proportional to path likelihood $\tilde{\pi}^\mathrm{ref}(\wp)$, which can be interpreted as follows: the starting distribution $\mathrm{P}^\mathrm{ref}(S = i)$ is defined as uniform and the ending distribution $\mathrm{P}^\mathrm{ref}(E = j)$ is equal to the proportion of time spent in each node for the Markov chain $S_t$, defined by transition probabilities matrix $\mathbf{P}^\mathrm{ref}$, when $t \to \infty$ (stationarity). Obviously, defining the reference probability in a similar way here would lead to a problem: the ending distribution $\mathrm{P}^\mathrm{ref}(E = j)$ depends entirely on the transition matrix of the Markov chain defined by $\mathbf{P}^\mathrm{ref}$ and will generally not yield the desired distribution $\bm{\sigma}_\mathrm{out}$.

To address this problem we introduce a new \emph{killed} Markov process which will lead to the desired ending distribution, while being as similar as possible, in a certain sense, to the original chain. More precisely, we design this killed Markov process in such a way that the random walker encounters exactly the same probabilities of jumping to any adjacent node as the original transition probabilities ($\mathbf{P}^\mathrm{ref}$) as long as he survives.


\subsubsection{A particular killed Markov process}

Let us first define a killed random process which will be helpful later.
\begin{definition}
\label{def::killed_markov_process}
From the reference, regular, Markov process $S_t$ defined on the state space $\mathcal{V}$, with initial distribution $\bm{\pi}_\mathrm{in}$ and transition matrix $\mathbf{P}^\mathrm{ref}$, a \textbf{killed Markov process}, denoted by $K_t$, is defined as a new process with the same initial distribution and following a \textbf{substochastic} transition matrix $\mathbf{\widehat{P}}^\mathrm{ref}$, given by 
\begin{equation}
\mathbf{\widehat{P}}^\mathrm{ref}  \triangleq (\mathbf{I} - \mathbf{Diag}(\bm{\alpha}))\mathbf{P}^\mathrm{ref},
\label{prob_ref_alpha}
\end{equation}
where the \textbf{probability to be killed} after visiting node $i$ is $\alpha_i$, with $0 \leq \alpha_i \leq 1 \; \forall i \in \mathcal{V}$. $\mathbf{Diag}(\bm{\alpha})$ is the diagonal matrix containing vector $\bm{\alpha}$ on its diagonal. In other words, the vector $\bm{\alpha} = (\alpha_i)$ contains the \textbf{killing rate} of each node. This killed Markov process can be seen as adding a new (virtual) absorbing state, the \textbf{cemetery} $\omega$, and following the rules
\begin{align}
\mathrm{P}(K_0 = i) &= \pi^\mathrm{in}_i, \\
\mathrm{P}(K_{t+1} = j | K_{t} = i) &= \widehat{p}_{ij}^\mathrm{ref} = (1 - \alpha_{i}) p_{ij}^{\mathrm{ref}} , \\
\mathrm{P}(K_{t+1} = \omega | K_{t} = i) &= 1 - \widehat{p}_{i \bullet}^\mathrm{ref} \triangleq \alpha_i, \\
\mathrm{P}(K_{t+1} = \omega | K_{t} = \omega) &= 1,
\end{align}
\noindent where the dot in $\widehat{p}_{i \bullet}^\mathrm{ref}$ means summation over the second index (over the set of nodes $\mathcal{V}$).
\end{definition}

We observe that $\widehat{p}^\mathrm{ref}_{ij}/\widehat{p}^\mathrm{ref}_{i\bullet} = p^\mathrm{ref}_{ij}$ for any $\bm{\alpha}$. This means that this killing process will behave similarly to the original process as long as it survives, thus arguing in favor of the similarity requirement between the two chains. However, unlike the reference Markov chain, this killed Markov process possesses an \emph{``ending'' distribution}.

\begin{definition}
\label{def::ending_distribution}
A killed Markov process, as defined in Definition \ref{def::killed_markov_process}, possesses an \textbf{ending distribution}, denoted by $\bm{\pi}_\mathrm{out} = (\pi_i^\mathrm{out})$ and given by
\begin{align}
\pi_i^\mathrm{out} \triangleq \mathrm{P}(K_{M-1} = i) = \sum_{t=0}^\infty \mathrm{P}(K_{t} = i, K_{t+1} = \omega),
\end{align}
where $M$ is the random variable containing the time where the process is killed (it reaches the cemetery state $\omega$).
\end{definition}

This quantity denotes the probability of being killed in state $i$ when starting from initial nodes with probabilities $\pi_i^\mathrm{in}$: it sums up the probability of being killed after $0, 1, 2, \dots$ steps.

Interestingly, it is possible to find the vector of killing rates $\bm{\alpha}$ corresponding to a desired ending distribution $\bm{\pi}_\mathrm{out} = \bm{\sigma}_\mathrm{out}$. This is important as it will allow us to design a proper reference probability distribution satisfying the predefined margins. But we first need the following preliminary lemma.
\begin{lemma}
\label{lemma::number_visits}
The expected number of visits to $j$ before being killed, given that the process started from state $i$, that is, the quantity ${\bar{n}_{ij}^\mathrm{ref}} = \sum_{t=0}^\infty \mathrm{P}(K_{t} = j | K_0 = i)$, is given by element $i$, $j$ of matrix $\mathbf{\widehat{Z}^\mathrm{ref}}  \triangleq \sum_{t=0}^\infty ( \mathbf{\widehat{P}}^\mathrm{ref} )^t = (\mathbf{I} - \mathbf{\widehat{P}}^\mathrm{ref})^{-1}$ which is well-defined for a substochastic matrix and a strongly connected graph. In other words, ${\bar{n}_{ij}^\mathrm{ref}} = {\widehat{z}_{ij}^\mathrm{ref}}$. 
\end{lemma}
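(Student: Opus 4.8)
The plan is to recognise $\mathbf{\widehat{Z}}^\mathrm{ref}$ as the classical \emph{fundamental matrix} of a transient (substochastic) Markov chain and to assemble the identity in three moves: first identify the per-step survival probabilities with matrix powers, then sum them, and finally justify the Neumann-series inversion. I would keep the probabilistic bookkeeping light, since the content is really the convergence of $\sum_{t\ge0}(\mathbf{\widehat{P}}^\mathrm{ref})^t$.

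First I would show, by induction on $t$ and the Markov property, that the probability of sitting in node $j$ at step $t$ having started in $i$ and not yet having been killed is the $(i,j)$ entry of $(\mathbf{\widehat{P}}^\mathrm{ref})^t$. The base case $t=0$ gives $\mathbf{I}$. For the inductive step I condition on the position $k\in\mathcal{V}$ at step $t$ and use that $\mathrm{P}(K_{t+1}=j\mid K_t=k)=\widehat{p}_{kj}^\mathrm{ref}$ for $k,j\in\mathcal{V}$, together with the fact that the cemetery $\omega$ never returns to $\mathcal{V}$, so that absorbed trajectories contribute nothing to node $j$:
\begin{equation*}
\mathrm{P}(K_{t+1}=j\mid K_0=i)=\sum_{k\in\mathcal{V}}\mathrm{P}(K_t=k\mid K_0=i)\,\widehat{p}_{kj}^\mathrm{ref}=\big[(\mathbf{\widehat{P}}^\mathrm{ref})^{t+1}\big]_{ij}.
\end{equation*}
This is exactly the matrix-product recursion.

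Next I would write the number of visits to $j$ as $\sum_{t\ge0}\mathbf{1}[K_t=j]$, so that by monotone convergence (Tonelli, all terms nonnegative) the expected number of visits is the sum of the per-step probabilities, and by the previous step
\begin{equation*}
\bar{n}_{ij}^\mathrm{ref}=\sum_{t=0}^\infty \mathrm{P}(K_t=j\mid K_0=i)=\sum_{t=0}^\infty\big[(\mathbf{\widehat{P}}^\mathrm{ref})^t\big]_{ij}=\Big[\sum_{t=0}^\infty(\mathbf{\widehat{P}}^\mathrm{ref})^t\Big]_{ij},
\end{equation*}
the last equality being the harmless interchange of the entrywise sum with extraction of entry $i,j$, valid once the matrix series converges.

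The hard part is the final step: showing that the Neumann series converges and equals $(\mathbf{I}-\mathbf{\widehat{P}}^\mathrm{ref})^{-1}$, which amounts to proving $\rho(\mathbf{\widehat{P}}^\mathrm{ref})<1$. I would obtain this from substochasticity plus strong connectivity. The row sums of $\mathbf{\widehat{P}}^\mathrm{ref}$ are $1-\alpha_i\le1$, with at least one strict inequality (i.e.\ some $\alpha_i>0$, the nondegenerate case of interest; if all $\alpha_i=0$ the matrix is stochastic and the claim fails). Because $\mathcal{G}$ is strongly connected, from any node one reaches a node with positive killing rate within at most $N=n$ steps, and is then killed with positive probability; hence the probability of surviving $N$ steps is bounded by some $1-\delta<1$ uniformly in the starting node, and surviving $kN$ steps has probability at most $(1-\delta)^k$. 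Since the maximal row sum of $(\mathbf{\widehat{P}}^\mathrm{ref})^{t}$ is precisely the worst-case survival probability, $\|(\mathbf{\widehat{P}}^\mathrm{ref})^{t}\|_\infty\to0$, which forces $\rho(\mathbf{\widehat{P}}^\mathrm{ref})<1$. The standard telescoping identity $(\mathbf{I}-\mathbf{\widehat{P}}^\mathrm{ref})\sum_{t=0}^{T}(\mathbf{\widehat{P}}^\mathrm{ref})^t=\mathbf{I}-(\mathbf{\widehat{P}}^\mathrm{ref})^{T+1}\to\mathbf{I}$ then shows both that the series converges and that its limit is $(\mathbf{I}-\mathbf{\widehat{P}}^\mathrm{ref})^{-1}$. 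Combining the three steps yields $\bar{n}_{ij}^\mathrm{ref}=\widehat{z}_{ij}^\mathrm{ref}$, as claimed.
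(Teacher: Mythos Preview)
Your proof is correct and entirely standard: identify the $t$-step survival-in-$j$ probabilities with $[(\mathbf{\widehat{P}}^\mathrm{ref})^t]_{ij}$, sum via Tonelli, and justify the Neumann series through $\rho(\mathbf{\widehat{P}}^\mathrm{ref})<1$ obtained from strict substochasticity plus strong connectivity. The paper itself does not give a proof at all---it simply cites a textbook (Fouss et al., 2016, Section~1.5.7) for this classical fundamental-matrix identity---so your argument is precisely the kind of derivation that reference would contain, and in that sense you have filled in what the paper deliberately outsourced.
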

\begin{proof}
See \cite[Section 1.5.7]{fouss2016algorithms}.
\end{proof}

Then, the killing rates and the ending distribution are related by the following proposition. This will allow us to determine the killing rates in order to satisfy a predefined ending distribution.
\begin{proposition}
\label{prop::piout_to_alpha}
For a killed Markov process on a strongly connected graph (see Definition \ref{def::killed_markov_process}), the following equality is satisfied
\begin{equation}
\bm{\alpha} = \bm{\pi}_\mathrm{out} \div \mathbf{\bar{n}^\mathrm{ref}},
\label{a_from_n}
\end{equation}
where $\div$ is the elementwise division and column vector $\mathbf{\bar{n}^\mathrm{ref}} = ({\bar{n}_j^\mathrm{ref}})$ with ${\bar{n}_j^\mathrm{ref}} = \sum_{i \in \mathcal{V}} \pi^\mathrm{in}_i \bar{n}_{ij}^\mathrm{ref}$, holds the expected number of visits to state $i$ before being killed which can be computed thanks to 
\begin{equation}
\big( \mathbf{I} - (\mathbf{P}^{\mathrm{ref}})^\top \big) \mathbf{\bar{n}^\mathrm{ref}} = \bm{\pi}_\mathrm{in} - (\mathbf{P}^{\mathrm{ref}})^\top \bm{\pi}_\mathrm{out}, \label{Eq_number_of_visits}
\end{equation}
where $\mathbf{P}^{\mathrm{ref}}$ is the transition probability matrix defined in Equation (\ref{Eq_define_transition_probabilities_elementwise01}).
\end{proposition}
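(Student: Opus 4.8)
The plan is to split the statement into its two claims and prove them in turn, using the first claim as an ingredient for the second. I would first establish the elementwise relation (\ref{a_from_n}), i.e.\ $\pi_i^\mathrm{out} = \alpha_i\,\bar{n}_i^\mathrm{ref}$, directly from the probabilistic definitions, and only then derive the linear system (\ref{Eq_number_of_visits}) by manipulating the fundamental matrix $\mathbf{\widehat{Z}}^\mathrm{ref}$.

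For the first claim, I would start from Definition \ref{def::ending_distribution} and factorize each term of the ending distribution through the Markov property: since the killing probability out of node $i$ is $\alpha_i$ regardless of the history, $\mathrm{P}(K_t = i, K_{t+1} = \omega) = \mathrm{P}(K_{t+1}=\omega\mid K_t = i)\,\mathrm{P}(K_t = i) = \alpha_i\,\mathrm{P}(K_t = i)$. Summing over $t$ gives $\pi_i^\mathrm{out} = \alpha_i \sum_{t=0}^\infty \mathrm{P}(K_t = i)$. It then remains to identify the residual sum with $\bar{n}_i^\mathrm{ref}$: expanding $\mathrm{P}(K_t = i) = \sum_{j}\pi_j^\mathrm{in}\,\mathrm{P}(K_t = i\mid K_0 = j)$ and exchanging the (absolutely convergent) sums over $t$ and $j$ yields $\sum_{t=0}^\infty \mathrm{P}(K_t=i) = \sum_j \pi_j^\mathrm{in}\,\bar{n}_{ji}^\mathrm{ref} = \bar{n}_i^\mathrm{ref}$, exactly the definition stated in the proposition, with the visit counts supplied by Lemma \ref{lemma::number_visits}. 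This establishes $\pi_i^\mathrm{out} = \alpha_i\,\bar{n}_i^\mathrm{ref}$, hence $\mathbf{Diag}(\bm{\alpha})\,\mathbf{\bar{n}^\mathrm{ref}} = \bm{\pi}_\mathrm{out}$ and, after elementwise division, (\ref{a_from_n}).

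For the second claim, I would rewrite the defining relation $\bar{n}_j^\mathrm{ref} = \sum_i \pi_i^\mathrm{in}\,\widehat{z}_{ij}^\mathrm{ref}$ compactly as $\mathbf{\bar{n}^\mathrm{ref}} = (\mathbf{\widehat{Z}}^\mathrm{ref})^\top \bm{\pi}_\mathrm{in}$. Since Lemma \ref{lemma::number_visits} gives $\mathbf{\widehat{Z}}^\mathrm{ref} = (\mathbf{I} - \mathbf{\widehat{P}}^\mathrm{ref})^{-1}$, transposing yields $(\mathbf{I} - (\mathbf{\widehat{P}}^\mathrm{ref})^\top)\,\mathbf{\bar{n}^\mathrm{ref}} = \bm{\pi}_\mathrm{in}$. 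I would then substitute the factorization $(\mathbf{\widehat{P}}^\mathrm{ref})^\top = (\mathbf{P}^\mathrm{ref})^\top(\mathbf{I} - \mathbf{Diag}(\bm{\alpha}))$ coming from (\ref{prob_ref_alpha}) and expand, obtaining $\mathbf{\bar{n}^\mathrm{ref}} - (\mathbf{P}^\mathrm{ref})^\top(\mathbf{I} - \mathbf{Diag}(\bm{\alpha}))\,\mathbf{\bar{n}^\mathrm{ref}} = \bm{\pi}_\mathrm{in}$. The key step is then to use the first claim in the form $\mathbf{Diag}(\bm{\alpha})\,\mathbf{\bar{n}^\mathrm{ref}} = \bm{\pi}_\mathrm{out}$ to eliminate the diagonal term, which collapses $(\mathbf{I} - \mathbf{Diag}(\bm{\alpha}))\,\mathbf{\bar{n}^\mathrm{ref}}$ into $\mathbf{\bar{n}^\mathrm{ref}} - \bm{\pi}_\mathrm{out}$ and directly produces (\ref{Eq_number_of_visits}).

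I expect the only real subtlety to be bookkeeping rather than conceptual. Care is needed with the transposes, since the visit-count vector is obtained from the left action of $\bm{\pi}_\mathrm{in}$ on $\mathbf{\widehat{Z}}^\mathrm{ref}$, so that $\mathbf{\bar{n}^\mathrm{ref}} = (\mathbf{\widehat{Z}}^\mathrm{ref})^\top \bm{\pi}_\mathrm{in}$ and not $\mathbf{\widehat{Z}}^\mathrm{ref}\bm{\pi}_\mathrm{in}$; and the interchange of the infinite sum over $t$ with the finite sum over $j$ must be justified, which is legitimate because the Neumann series $\sum_t (\mathbf{\widehat{P}}^\mathrm{ref})^t$ converges absolutely, as guaranteed by Lemma \ref{lemma::number_visits} for a substochastic matrix on a strongly connected graph. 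Finally, I would note that the system (\ref{Eq_number_of_visits}) involves the singular matrix $\mathbf{I} - (\mathbf{P}^\mathrm{ref})^\top$, so I would only claim that $\mathbf{\bar{n}^\mathrm{ref}}$ \emph{satisfies} it, as asserted, without invoking uniqueness of the solution.
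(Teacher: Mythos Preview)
Your proposal is correct and follows essentially the same route as the paper. The only cosmetic difference is in the first claim: the paper first computes the full joint distribution $\mathrm{P}(K_0=i,K_{M-1}=j)=\pi_i^{\mathrm{in}}\widehat{z}_{ij}^{\mathrm{ref}}\alpha_j$ and then marginalizes over the starting node, whereas you factor out $\alpha_i$ directly and identify the residual sum with $\bar{n}_i^{\mathrm{ref}}$; the second claim is handled identically in both, including the use of $\mathbf{Diag}(\bm{\alpha})\mathbf{\bar{n}^\mathrm{ref}}=\bm{\pi}_\mathrm{out}$ to eliminate the diagonal term.
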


\begin{proof}
First, let us observe that the joint distribution for the starting node and the ending node of the killing process is  
\begin{align}
\mathrm{P}(K_0 = i, K_{M-1} = j) &= \sum_{t=0}^\infty \mathrm{P}(K_0 = i, K_{t} = j, K_{t+1} = \omega) \nonumber \\
&= \sum_{t=0}^\infty \mathrm{P}(K_{t+1} = \omega | K_{t} = j) \mathrm{P}(K_{t} = j | K_0 = i) \mathrm{P}(K_0 = i) \nonumber \\
&= \pi^\mathrm{in}_{i}  \left[ \sum_{t=0}^\infty ( \mathbf{\widehat{P}}^\mathrm{ref} )^t \right]_{ij} \alpha_{j}
= \pi^\mathrm{in}_{i}  \left[ \mathbf{\widehat{Z}^\mathrm{ref}} \right]_{ij} \alpha_{j} \nonumber \\
&= \pi^\mathrm{in}_{i} \widehat{z}_{ij}^\mathrm{ref} \alpha_{j},
\label{Eq_joint_probability_killed01}
\end{align}
where $\mathbf{\widehat{Z}^\mathrm{ref}}$ is defined by Lemma \ref{lemma::number_visits}. Thus the ending distribution (see Definition \ref{def::ending_distribution}) reads
\begin{gather}
\pi^\mathrm{out}_j = \sum_{i \in \mathcal{V}} \mathrm{P}(K_0 = i, K_{M-1} = j) =  \sum_{i \in \mathcal{V}} \pi^\mathrm{in}_{i} \widehat{z}^\mathrm{ref}_{ij} \alpha_{j}. \label{A_fe}
\end{gather}
We know from Lemma \ref{lemma::number_visits} that $\widehat{z}^\mathrm{ref}_{ij}$ can be interpreted as the expected number of times the killed Markov process visits node $j$ when starting from $i$. So if we define  
\begin{equation}
{\bar{n}_j^\mathrm{ref}} \triangleq \sum_{i \in \mathcal{V}} \pi^\mathrm{in}_i \widehat{z}^\mathrm{ref}_{ij} = \sum_{i \in \mathcal{V}} \pi^\mathrm{in}_i \bar{n}_{ij}^\mathrm{ref}  \label{n_ref_def},
\end{equation}
the column vector $\mathbf{\bar{n}^\mathrm{ref}} = ({\bar{n}_j^\mathrm{ref}})$ holds the expected number of times the process is in $j$ before being killed. Then, (\ref{A_fe}) directly provides
\begin{equation}
\bm{\alpha} = \bm{\pi}_\mathrm{out} \div \mathbf{\bar{n}^\mathrm{ref}}.
\label{a_from_n_bis}
\end{equation}
The second part of the proposition is obtained by starting from the definition of $\mathbf{\bar{n}^\mathrm{ref}}$ and following (\ref{prob_ref_alpha}),
\begin{align*}
\mathbf{\bar{n}^\mathrm{ref}}& = (\mathbf{\widehat{Z}^\mathrm{ref}})^\top \bm{\pi}_\mathrm{in} = \big( \mathbf{I} - (\mathbf{\widehat{P}}^{\mathrm{ref}})^\top \big)^{-1} \bm{\pi}_\mathrm{in},   \\
\Rightarrow \quad \mathbf{\bar{n}^\mathrm{ref}}&  - (\mathbf{\widehat{P}}^{\mathrm{ref}})^\top \mathbf{\bar{n}^\mathrm{ref}} = \bm{\pi}_\mathrm{in}, \\
\Rightarrow \quad \mathbf{\bar{n}^\mathrm{ref}}&  - \big( (\mathbf{P}^{\mathrm{ref}})^\top - (\mathbf{P}^{\mathrm{ref}})^\top \mathbf{Diag}(\bm{\alpha}) \big) \mathbf{\bar{n}^\mathrm{ref}} = \bm{\pi}_\mathrm{in}.
\end{align*}
Then, using $\mathbf{Diag}(\bm{\alpha}) \, \mathbf{\bar{n}^\mathrm{ref}} = \bm{\pi}_\mathrm{out}$ (see Equation (\ref{a_from_n})),
\begin{equation*}
\big( \mathbf{I} - (\mathbf{P}^{\mathrm{ref}})^\top \big) \mathbf{\bar{n}^\mathrm{ref}} = \bm{\pi}_\mathrm{in} - (\mathbf{P}^{\mathrm{ref}})^\top \bm{\pi}_\mathrm{out}, 
\end{equation*}
which provides the same results as the expression derived in \cite[proposition 1]{guex2016interpolating} from another perspective.
\end{proof}
\noindent The Equation (\ref{a_from_n}) simply states that the probability of being killed in state $i$ is equal to the expected number of visits to $i$ times the probability of jumping to the cemetery state from $i$, $\pi^{\mathrm{out}}_{i} = \bar{n}_{i}^{\mathrm{ref}} \, \alpha_{i}$. This is similar to the computation of the absorption probabilities when starting from a transient state in an absorbing Markov chain \cite{Snell-1984,Grinstead-1997}.

In conclusion, it is possible to determine $\mathbf{\bar{n}^\mathrm{ref}}$ and then $\bm{\alpha}$ through Proposition \ref{prop::piout_to_alpha} by considering an additional free parameter.  Indeed, as $(\mathbf{I} - (\mathbf{P}^{\mathrm{ref}})^\top)$ is rank-deficient (its rank is $n-1$, as the initial reference chain is regular), we have (see, e.g., \cite{Graybill-1983})
\begin{equation}
\mathbf{\bar{n}^\mathrm{ref}} = \big( \mathbf{I} - (\mathbf{P}^{\mathrm{ref}})^\top \big)^{+} \big( \bm{\pi}_\mathrm{in} - (\mathbf{P}^{\mathrm{ref}})^\top \bm{\pi}_\mathrm{out} \big) + \epsilon \, \bm{\pi}_\text{ref},
\label{Eq_computing_number_of_visits01}
\end{equation}
where $\big( \mathbf{I} - (\mathbf{P}^{\mathrm{ref}})^\top \big)^{+}$ denotes the Moore-Penrose pseudoinverse, $\bm{\pi}_\text{ref}$ is the stationary distribution of the regular, reference, Markov chain defined by $\mathbf{P}^\mathrm{ref}$ (i.e. a vector summing to 1 and generating the null-space of $( \mathbf{I} - (\mathbf{P}^{\mathrm{ref}})^\top)$), and $\epsilon$ is an additional free parameter, named \emph{persistence}, such that
\begin{equation*}
\epsilon \geq \max_i \left( \frac{ \pi^\mathrm{out}_i - \mathbf{e}_{i}^\top \big( \mathbf{I} - (\mathbf{P}^{\mathrm{ref}})^\top \big)^{+} \big(\bm{\pi}_\mathrm{in} - (\mathbf{P}^{\mathrm{ref}})^\top \bm{\pi}_\mathrm{out} \big)  }{\pi^\mathrm{ref}_i} \right).
\end{equation*}
This last inequality ensures that $\bar{n}^\mathrm{ref}_i \geq \pi^\mathrm{out}_i$ and thus $\alpha_i \leq 1, \, \forall i \in \mathcal{V}$. Intuitively, the persistence parameter reflects the difficulty for a process to be killed in nodes where $\alpha_i>0$ (called killing nodes), and thus affects the expected length of the paths (see \cite{guex2016interpolating} for a discussion). In \cite{guex2016interpolating}, the persistence is shown to have an electrical interpretation within the well-known analogy between random-walk models and electrical models \cite{Snell-1984}: it corresponds to the lowest electrical potential that can be defined on nodes. However, the effect of the persistence on the behavior of the model is beyond the scope of this work, and we will set it to its lower bound in our simulations.

\begin{algorithm}[t!]
\caption[Killing probabilites]
{Killing rates for non-hitting paths}
\algsetup{indent=2em, linenodelimiter=.}
\begin{algorithmic}[1]
\small
\REQUIRE $\,$ \\
 -- The $n\times n$ non-negative adjacency matrix $\mathbf{A}$ of a strongly connected directed graph, containing edge affinities.\\
 -- An input distribution vector $\bm{\sigma}_\mathrm{in}$ of size $n$.\\
 -- An output distribution vector $\bm{\sigma}_\mathrm{out}$ of size $n$.\\
 -- A scalar persistence gap parameter $\epsilon_{\mathrm{g}} > 0$. \\
 
\ENSURE $\,$ \\
 -- The killing rates vector $\bm{\alpha}$ of size $n$. \\
 -- The expected number of visits to each node for the reference random walk, $\mathbf{\bar{n}^\mathrm{ref}}$. \\

~\\

\STATE $\mathbf{D} \leftarrow \mathbf{Diag}(\mathbf{A}\mathbf{e})$ \COMMENT{the out-degree matrix} \\
\STATE $\mathbf{P}^\mathrm{ref} \leftarrow \mathbf{D}^{-1} \mathbf{A}$ \COMMENT{the reference transition probabilities matrix} \\
\STATE $\mathbf{Q} \leftarrow \mathbf{I} - (\mathbf{P}^\mathrm{ref})^\top$ \COMMENT{the auxiliary matrix $\mathbf{Q}$} \\
\STATE $\bm{\pi} \leftarrow \mathbf{null}(\mathbf{Q}) $; $\bm{\pi} \leftarrow \bm{\pi}/\mathrm{sum}(\bm{\pi})$ \COMMENT{computes the stationary distribution of the Markov chain whose transition matrix is $\mathbf{P}^\mathrm{ref}$} ($\bm{\pi}$ spans the null space of $\mathbf{Q}$ and its elements are non-negative and sum to one) \\
\STATE $\mathbf{Q}^{+} \leftarrow \text{pinv}(\mathbf{Q})$ \COMMENT{the Moore-Penrose pseudoinverse matrix of $\mathbf{Q}$} \\
\STATE $\mathbf{\bar{n}_0^\mathrm{ref}} \leftarrow \mathbf{Q}^{+}(\bm{\sigma}_\mathrm{in} - (\mathbf{P}^\mathrm{ref})^\top \bm{\sigma}_\mathrm{out})$ \COMMENT{the initial expected number of visits to each node for the reference random walk} \\
\STATE $\epsilon \leftarrow \max( (\bm{\sigma}_\mathrm{out} - \mathbf{\bar{n}_0^\mathrm{ref}}) \div \bm{\pi}) + \epsilon_{\mathrm{g}}$ \COMMENT{the value of the persistence parameter} \\
\STATE $\mathbf{\bar{n}^\mathrm{ref}} \leftarrow \mathbf{\bar{n}_0^\mathrm{ref}} + \epsilon \bm{\pi}$ \COMMENT{the expected number of visits to each node for the reference random walk} \\
\STATE $\bm{\alpha} \leftarrow \bm{\sigma}_\mathrm{out} \div \mathbf{\bar{n}^\mathrm{ref}}$ \COMMENT{the killing rates vector; $\div$ is the elementwise division} \\

\RETURN $\bm{\alpha}$, $\mathbf{\bar{n}^\mathrm{ref}} $
\end{algorithmic}
\label{Alg_kill_prob} 
\end{algorithm}



\subsubsection{Defining the reference probabilities over paths from the killed process}

We will now define the reference probabilities $\mathrm{P}^\mathrm{ref}(\wp)$ over the set of paths, with the help of a killed Markov process, in order to obtain the desired starting and ending distributions $\bm{\sigma}_\mathrm{in}$ and $\bm{\sigma}_\mathrm{out}$, that is, $\sum_{j \in \mathcal{V}} \sum_{\wp_{ij} \in \mathcal{P}_{ij}} \mathrm{P}^\mathrm{ref}(\wp_{ij}) = \sigma^\mathrm{in}_i$ and $\sum_{i \in \mathcal{V}} \sum_{\wp_{ij} \in \mathcal{P}_{ij}} \mathrm{P}^\mathrm{ref}(\wp_{ij}) = \sigma^\mathrm{out}_j$. We show that it suffices to set $\bm{\pi}_\mathrm{in} = \bm{\sigma}_\mathrm{in}$ and $\bm{\pi}_\mathrm{out} = \bm{\sigma}_\mathrm{out}$ in the previously defined killed Markov process and deduce the reference probabilities from it.
\begin{proposition}
\label{prop::reference_probabiliies}
If the reference probabilities $\mathrm{P}^\mathrm{ref}(\wp)$, $\wp \in \mathcal{P}$, are set to
\begin{equation}
\mathrm{P}^\mathrm{ref}(\wp_{ij}) \triangleq \sigma^\mathrm{in}_i \widehat{\pi}^\mathrm{ref}(\wp_{ij}) \, \alpha_j \text{ for each } i,j \in \mathcal{V}, \label{prob_ref}
\end{equation}
where $\widehat{\pi}^\mathrm{ref}(\wp_{ij})$ is the modified likelihood of the path $\wp_{ij}$, defined by
\begin{equation}
\widehat{\pi}^\mathrm{ref}(\wp)  \triangleq \prod_{\tau = 1}^{t} \widehat{p}^\mathrm{ref}_{k_{\tau-1} k_\tau} \text{ for any path } \wp = (k_{0}, k_{1}, \dots, k_{t}), 
\end{equation}
with $\mathbf{\widehat{P}}^\mathrm{ref} = (\mathbf{I} - \mathbf{Diag}(\bm{\alpha}))\mathbf{P}^\mathrm{ref}$, and where $\bm{\alpha}$ is computed from Proposition \ref{prop::piout_to_alpha} (with $\bm{\pi}_\mathrm{in} = \bm{\sigma}_\mathrm{in}$ and $\bm{\pi}_\mathrm{out} = \bm{\sigma}_\mathrm{out}$), then we have 
\begin{eqnarray*}
\mathrm{P}^\mathrm{ref}(S = i) = \sum_{j \in \mathcal{V}} \sum_{\wp_{ij} \in \mathcal{P}_{ij}} \mathrm{P}^\mathrm{ref}(\wp_{ij}) = \sigma^\mathrm{in}_i & \forall i \in \mathcal{V},  \\
\mathrm{P}^\mathrm{ref}(E = j)  = \sum_{i \in \mathcal{V}} \sum_{\wp_{ij} \in \mathcal{P}_{ij}} \mathrm{P}^\mathrm{ref}(\wp_{ij}) = \sigma^\mathrm{out}_j & \forall j \in \mathcal{V}.
\end{eqnarray*}
Moreover, these reference probabilities are properly scaled as they sum to one.
\end{proposition}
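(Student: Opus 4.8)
The plan is to reduce everything to a single combinatorial identity: the sum over all paths from $i$ to $j$ of the modified likelihood equals the expected number of visits $\widehat{z}^\mathrm{ref}_{ij}$ of the killed process. Indeed, by definition $\widehat{\pi}^\mathrm{ref}(\wp)$ is a product of entries of $\mathbf{\widehat{P}}^\mathrm{ref}$ taken along the path, so the sum of $\widehat{\pi}^\mathrm{ref}$ over all $t$-length paths from $i$ to $j$ is exactly $[(\mathbf{\widehat{P}}^\mathrm{ref})^t]_{ij}$ (the zero-length path, for $i=j$, supplying the $t=0$ term equal to $1$). Summing over all path lengths and invoking Lemma \ref{lemma::number_visits} through the geometric series $\sum_{t=0}^\infty (\mathbf{\widehat{P}}^\mathrm{ref})^t = (\mathbf{I} - \mathbf{\widehat{P}}^\mathrm{ref})^{-1} = \mathbf{\widehat{Z}^\mathrm{ref}}$ yields $\sum_{\wp_{ij} \in \mathcal{P}_{ij}} \widehat{\pi}^\mathrm{ref}(\wp_{ij}) = \widehat{z}^\mathrm{ref}_{ij}$. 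This is the one genuinely load-bearing step; once it is in place, the three claims follow by direct matrix manipulation.

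For the starting margin, I would substitute definition (\ref{prob_ref}) and factor out the quantities not depending on $\wp$, obtaining $\mathrm{P}^\mathrm{ref}(S = i) = \sigma^\mathrm{in}_i \sum_{j} \widehat{z}^\mathrm{ref}_{ij}\,\alpha_j$, i.e.\ in vector form the $i$-th entry of $\sigma^\mathrm{in}_i (\mathbf{\widehat{Z}^\mathrm{ref}} \bm{\alpha})$. Hence it suffices to establish $\mathbf{\widehat{Z}^\mathrm{ref}} \bm{\alpha} = \mathbf{e}$, equivalently $\bm{\alpha} = (\mathbf{I} - \mathbf{\widehat{P}}^\mathrm{ref}) \mathbf{e}$. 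This follows from the observation that the row sums of $\mathbf{\widehat{P}}^\mathrm{ref}$ are $\widehat{p}^\mathrm{ref}_{i \bullet} = (1 - \alpha_i) \sum_j p^\mathrm{ref}_{ij} = 1 - \alpha_i$, using that $\mathbf{P}^\mathrm{ref}$ is stochastic, so $\mathbf{\widehat{P}}^\mathrm{ref} \mathbf{e} = \mathbf{e} - \bm{\alpha}$ and therefore $(\mathbf{I} - \mathbf{\widehat{P}}^\mathrm{ref}) \mathbf{e} = \bm{\alpha}$. This gives $\mathrm{P}^\mathrm{ref}(S = i) = \sigma^\mathrm{in}_i$.

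For the ending margin, the same substitution gives $\mathrm{P}^\mathrm{ref}(E = j) = \alpha_j \sum_{i} \sigma^\mathrm{in}_i \widehat{z}^\mathrm{ref}_{ij}$. Recognizing the sum as $\bar{n}^\mathrm{ref}_j$, i.e.\ definition (\ref{n_ref_def}) with $\bm{\pi}_\mathrm{in} = \bm{\sigma}_\mathrm{in}$, turns this into $\mathrm{P}^\mathrm{ref}(E = j) = \alpha_j \bar{n}^\mathrm{ref}_j$, and Proposition \ref{prop::piout_to_alpha} — which states $\bm{\alpha} = \bm{\pi}_\mathrm{out} \div \mathbf{\bar{n}^\mathrm{ref}}$, hence $\alpha_j \bar{n}^\mathrm{ref}_j = \pi^\mathrm{out}_j$ — closes the argument since $\bm{\pi}_\mathrm{out} = \bm{\sigma}_\mathrm{out}$. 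Finally, the normalization is immediate: summing the already-proved starting-margin identity over $i$ gives $\sum_{\wp \in \mathcal{P}} \mathrm{P}^\mathrm{ref}(\wp) = \sum_i \sigma^\mathrm{in}_i = 1$.

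I expect no serious obstacle; the only point demanding care is the combinatorial identity of the first paragraph, where one must correctly account for the zero-length path convention and justify the termwise interchange of the path-sum with the matrix-power expansion (absolute convergence being guaranteed by substochasticity on a strongly connected graph, exactly the hypothesis under which $\mathbf{\widehat{Z}^\mathrm{ref}}$ is well-defined in Lemma \ref{lemma::number_visits}). The starting-margin identity $\mathbf{\widehat{Z}^\mathrm{ref}} \bm{\alpha} = \mathbf{e}$ carries the small conceptual surprise: it states that, from any node, the total probability of eventually being killed equals one, which is precisely what makes $\bm{\sigma}_\mathrm{in}$ propagate unchanged to the starting margin.
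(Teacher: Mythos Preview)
Your proposal is correct and follows essentially the same route as the paper: both hinge on the identity $\sum_{\wp_{ij} \in \mathcal{P}_{ij}} \widehat{\pi}^\mathrm{ref}(\wp_{ij}) = \widehat{z}^\mathrm{ref}_{ij}$, after which the margins and normalization drop out. The only stylistic difference is that the paper identifies $\sigma^\mathrm{in}_i \widehat{z}^\mathrm{ref}_{ij} \alpha_j$ with the joint law $\mathrm{P}(K_0 = i, K_{M-1} = j)$ of the killed process and reads off both marginals from that probabilistic interpretation, whereas you verify the starting margin by the direct matrix computation $\mathbf{\widehat{Z}^\mathrm{ref}} \bm{\alpha} = \mathbf{e}$ (whose probabilistic content, as you note, is exactly that absorption is almost sure) --- a minor but pleasant explicitness.
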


\begin{proof}
Observe that if we set the reference probabilities to
$\mathrm{P}^\mathrm{ref}(\wp_{ij}) \triangleq \pi^\mathrm{in}_i \widehat{\pi}^\mathrm{ref}(\wp_{ij}) \, \alpha_j$,
then, from (\ref{Eq_joint_probability_killed01}), we have for the probability of picking a paths $\wp_{ij} \in \mathcal{P}_{ij}$ in the bag of all paths starting in $i$ ($S=i$) and ending in $j$ ($E = j$) (see \cite{franccoisse2016bag})
\begin{align*}
\mathrm{P}(S = i, E = j) &= \sum_{\wp_{ij} \in \mathcal{P}_{ij}} \mathrm{P}^\mathrm{ref}(\wp_{ij}) = \pi^\mathrm{in}_{i}  \sum_{\wp_{ij} \in \mathcal{P}_{ij}} \widehat{\pi}^\mathrm{ref}(\wp_{ij}) \, \alpha_{j} \\
&=  \pi^\mathrm{in}_{i} \left[ \sum_{t=0}^\infty (\mathbf{\widehat{P}}^\mathrm{ref})^t \right]_{ij} \alpha_{j} = \mathrm{P}(K_0 = i, K_{M-1} = j),
\end{align*}
and therefore,
\begin{align*}
\mathrm{P}(S = i) &=  \mathrm{P}(K_0 = i) = \pi^\mathrm{in}_i , \\
\mathrm{P}(E = j) &=  \mathrm{P}( K_{M-1} = j) = \pi^\mathrm{out}_j .
\end{align*}

Moreover, it also follows that $\sum_{\wp \in \mathcal{P}} \mathrm{P}^\mathrm{ref}(\wp) = \sum_{i,j=1}^{n} \sum_{\wp_{ij} \in \mathcal{P}_{ij}} \mathrm{P}^\mathrm{ref}(\wp_{ij}) = \sum_{i,j=1}^{n} \mathrm{P}(K_0 = i, K_{M-1} = j) = 1$ so that the reference probability distribution over paths is properly scaled.
\end{proof}
Note that the last quantity, $\mathrm{P}(S = i, E = j)$, was called the bag-of-paths probability matrix and played a key role in the bag-of-paths framework (see \cite{franccoisse2016bag} for details). It is also called the coupling matrix in optimal transportation (see later).
The procedure for computing the vectors $\bm{\alpha}$ and $\mathbf{\bar{n}^\mathrm{ref}}$ allowing to obtain a desired ending distribution is sumarized in Algorithm \ref{Alg_kill_prob}. In addition to the predefined margins, it takes as input a persistence gap parameter $\epsilon_{\mathrm{g}}$ indicating to which extend persistence of flow is present in the network (see the discussion following Equation (\ref{Eq_computing_number_of_visits01})).


\subsection{Computation of the optimal probability distribution over paths}
\label{sec:cBoPComputation}
In this section, now that we have found a proper reference distribution, we focus on the computation of the optimal probability distribution $\mathrm{P}^\star(\wp)$ solving problem (\ref{init_prob}). This solution is obtained through its Lagrange parameter vectors, which can be obtained from the constraints.

\subsubsection{The optimal path probabilities}

The optimal probability distribution is obtained by the following proposition.

\begin{proposition}
\label{prop::optimal_sol}
For reference path probabilities $\mathrm{P^\mathrm{ref}}(\wp)$ as defined in Proposition \ref{prop::reference_probabiliies}, the minimization problem $(\ref{init_prob})$, as stated in Section \ref{Subsec_problemDefinition01}, is solved by
\begin{equation}
\mathrm{P}^\star(\wp_{ij}) = \mu^\mathrm{in}_{i}  \sigma^\mathrm{in}_{i} \mu^\mathrm{out}_{j}   \alpha_{j}  \widehat{\pi}^\mathrm{ref}(\wp_{ij}) \exp(-\beta \tilde{c}(\wp_{ij} ))  \text{ for each } i,j \in \mathcal{V}, 
\label{prob_ref_comp}
\end{equation}
where $\beta \triangleq 1 / T$ is the inverse temperature parameter, and $\mu^\mathrm{in}_i \triangleq \exp( - \beta \lambda^\mathrm{in}_i)$, $\mu^\mathrm{out}_j \triangleq \exp( - \beta \lambda^\mathrm{out}_j)$ are two vectors derived from the Lagrange parameter vectors $\bm{\lambda}_\mathrm{in}$, $\bm{\lambda}_\mathrm{out}$, associated with the constraints.
\end{proposition}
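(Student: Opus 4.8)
The plan is to treat (\ref{init_prob}) as a convex program and apply the method of Lagrange multipliers: the objective $\text{FE}(\mathrm{P})$ is convex in the variables $\{\mathrm{P}(\wp)\}_{\wp\in\mathcal{P}}$ (the expected-cost term is linear and the relative-entropy term is convex), while both families of margin constraints (\ref{constr_in})--(\ref{constr_out}) are affine, so a stationary point of the Lagrangian will automatically be the global minimizer. First I would attach a multiplier $\lambda^\mathrm{in}_i$ to each starting-margin constraint and a multiplier $\lambda^\mathrm{out}_j$ to each ending-margin constraint, forming
\begin{equation*}
\begin{aligned}
\mathcal{L} = {} & \sum_{\wp \in \mathcal{P}} \mathrm{P}(\wp)\tilde{c}(\wp) + T\sum_{\wp \in \mathcal{P}} \mathrm{P}(\wp)\log\frac{\mathrm{P}(\wp)}{\mathrm{P^\mathrm{ref}}(\wp)} \\
& {} + \sum_{i}\lambda^\mathrm{in}_i\Big(\sum_{j}\sum_{\wp_{ij}}\mathrm{P}(\wp_{ij}) - \sigma^\mathrm{in}_i\Big) + \sum_{j}\lambda^\mathrm{out}_j\Big(\sum_{i}\sum_{\wp_{ij}}\mathrm{P}(\wp_{ij}) - \sigma^\mathrm{out}_j\Big).
\end{aligned}
\end{equation*}
Differentiating with respect to the probability of a single path $\wp_{ij}\in\mathcal{P}_{ij}$, and using that each such path contributes to exactly one starting constraint (index $i$) and one ending constraint (index $j$), the stationarity condition reads $\tilde{c}(\wp_{ij}) + T\big(\log(\mathrm{P}(\wp_{ij})/\mathrm{P^\mathrm{ref}}(\wp_{ij})) + 1\big) + \lambda^\mathrm{in}_i + \lambda^\mathrm{out}_j = 0$. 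Solving for $\mathrm{P}(\wp_{ij})$ and writing $\beta = 1/T$ gives the Gibbs--Boltzmann form $\mathrm{P}^\star(\wp_{ij}) = \mathrm{P^\mathrm{ref}}(\wp_{ij})\, e^{-1}\, e^{-\beta\lambda^\mathrm{in}_i}\, e^{-\beta\lambda^\mathrm{out}_j}\, \exp(-\beta\tilde{c}(\wp_{ij}))$.

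Next I would substitute the reference probabilities from Proposition \ref{prop::reference_probabiliies}, namely $\mathrm{P^\mathrm{ref}}(\wp_{ij}) = \sigma^\mathrm{in}_i\,\widehat{\pi}^\mathrm{ref}(\wp_{ij})\,\alpha_j$ from Equation (\ref{prob_ref}), and introduce $\mu^\mathrm{in}_i \triangleq \exp(-\beta\lambda^\mathrm{in}_i)$ and $\mu^\mathrm{out}_j \triangleq \exp(-\beta\lambda^\mathrm{out}_j)$. Because the two constraint families are linearly dependent (both margins sum to one), the pair $(\bm{\lambda}_\mathrm{in},\bm{\lambda}_\mathrm{out})$ is determined only up to an additive constant; I would exploit this gauge freedom to absorb the stray factor $e^{-1}$ into, say, $\mu^\mathrm{in}$, which recovers exactly the claimed expression (\ref{prob_ref_comp}). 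It then remains to observe that positivity of $\mathrm{P}^\star(\wp_{ij})$ is automatic from the exponential form, so the dropped nonnegativity constraints are inactive, and that convexity of $\text{FE}$ together with affine constraints guarantees this stationary point is the unique global minimizer; the actual determination of $\mu^\mathrm{in},\mu^\mathrm{out}$ from the constraints is deferred to the subsequent development and is not needed to establish the functional form stated here.

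The step I expect to be the main obstacle is making the variational calculation rigorous over the \emph{countably infinite} path set $\mathcal{P}$: interchanging the differentiation with the infinite sums, and verifying that the series $\sum_{\wp_{ij}}\widehat{\pi}^\mathrm{ref}(\wp_{ij})\exp(-\beta\tilde{c}(\wp_{ij}))$ converge so that $\mathcal{L}$ and its partial derivatives are well defined. I would handle this by a domination argument: the matrix $\mathbf{W}$ with entries $w_{kl} = \widehat{p}^\mathrm{ref}_{kl}\exp(-\beta c_{kl})$ satisfies $0 \le w_{kl} \le \widehat{p}^\mathrm{ref}_{kl}$ entrywise (since costs are nonnegative, $\exp(-\beta c_{kl}) \le 1$), so $\sum_{t} \mathbf{W}^t$ is dominated by $\sum_{t}(\mathbf{\widehat{P}}^\mathrm{ref})^t$, which converges by the substochasticity of $\mathbf{\widehat{P}}^\mathrm{ref}$ on the strongly connected graph established in Lemma \ref{lemma::number_visits}. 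This legitimizes the formal manipulations above and guarantees that all the path-sums appearing in the solution are finite.
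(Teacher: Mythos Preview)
Your proposal is correct and follows essentially the same Lagrangian approach as the paper: introduce multipliers for the two families of margin constraints, differentiate with respect to $\mathrm{P}(\wp_{ij})$, solve for the Gibbs--Boltzmann form, and substitute the reference probability (\ref{prob_ref}). The only cosmetic difference is that the paper pre-shifts the multipliers by $T/2$ so the stray constant from the derivative of $\mathrm{P}\log\mathrm{P}$ cancels directly, whereas you keep the $e^{-1}$ and absorb it afterward via the gauge freedom in $(\bm{\lambda}_\mathrm{in},\bm{\lambda}_\mathrm{out})$; your added remarks on convexity, automatic nonnegativity, and convergence of the path sums are sound and in fact go beyond what the paper spells out.
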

\begin{proof}
We derive the solution for the optimal probability distribution $\mathrm{P}^\star(\wp)$ solving problem (\ref{init_prob}). By introducing Lagrange parameter vectors $\bm{\lambda}_\mathrm{in}$ and $\bm{\lambda}_\mathrm{out}$, the Lagrange function associated to (\ref{init_prob}) is
\begin{align}
\mathcal{L}(\mathrm{P},\bm{\lambda}_\mathrm{in},\bm{\lambda}_\mathrm{out})
&\triangleq \sum_{\wp \in \mathcal{P}} \mathrm{P}(\wp) \tilde{c}(\wp) + T  \sum_{\wp \in \mathcal{P}} \mathrm{P}(\wp) \log \left( \frac{\mathrm{P}(\wp)}{\mathrm{P^\mathrm{ref}}(\wp)} \right) \notag \\ 
&+ \sum_{i \in \mathcal{V}} ( \lambda^\mathrm{in}_i - \tfrac{T}{2} ) \left[ \sum_{j \in \mathcal{V}} \sum_{\wp_{ij} \in \mathcal{P}_{ij}} \mathrm{P}(\wp_{ij}) - \sigma^\mathrm{in}_i \right] \notag \\ 
&+ \sum_{j \in \mathcal{V}} ( \lambda^\mathrm{out}_j - \tfrac{T}{2} ) \left[ \sum_{i \in \mathcal{V}} \sum_{\wp_{ij} \in \mathcal{P}_{ij}} \mathrm{P}(\wp_{ij}) - \sigma^\mathrm{out}_j \right] \label{lagrangian},
\end{align}
where the Lagrange parameters are shifted by $T/2$ to simplify the notation.
Taking its partial derivative with respect to $\mathrm{P}(\wp_{ij})$, setting the result to zero, and defining the \emph{inverse temperature} $\beta \triangleq 1/T$, provides
\begin{gather*}
\mathrm{P}^\star(\wp_{ij}) = \mathrm{P}^\mathrm{ref}(\wp_{ij}) \exp\left[ - \beta \left( \tilde{c}(\wp_{ij}) + \lambda^\mathrm{in}_{i} + \lambda^\mathrm{out}_{j} \right) \right] \\
= \mu^\mathrm{in}_{i} \mu^\mathrm{out}_{j} \mathrm{P}^\mathrm{ref}(\wp_{ij}) \exp(-\beta \tilde{c}(\wp_{ij})), 
\end{gather*}
where we defined 
\begin{equation}
\mu^\mathrm{in}_i \triangleq \exp( - \beta \lambda^\mathrm{in}_i) \text{ and } \mu^\mathrm{out}_i \triangleq \exp( - \beta \lambda^\mathrm{out}_i),
\label{Eq_lagrange_parameters_mu01}
\end{equation}
which corresponds to a re-parametrization of the the Lagrange parameters that will be used instead of the original parameters.
By inserting the reference probability found in (\ref{prob_ref}) in this last equation, we get the following form for the path probabilities
\begin{equation*}
\mathrm{P}^\star(\wp_{ij}) = \mu^\mathrm{in}_{i}  \sigma^\mathrm{in}_{i} \mu^\mathrm{out}_{j}   \alpha_{j}  \widehat{\pi}^\mathrm{ref}(\wp_{ij}) \exp(-\beta \tilde{c}(\wp_{ij} )),
\end{equation*}
where the $\widehat{\pi}^\mathrm{ref}(\wp)$ are provided by Proposition \ref{prop::reference_probabiliies}.
\end{proof}

\subsubsection{Computing the Lagrange parameters}

The solution (\ref{prob_ref_comp}) requires the values of the Lagrange multipliers $\bm{\lambda}_\mathrm{in}$ and $\bm{\lambda}_\mathrm{out}$, or alternatively the vectors $\bm{\mu}_\mathrm{in}$ and $\bm{\mu}_\mathrm{out}$, which can be obtained from the equality constraints. Proposition \ref{prop::dual_vectors} shows  how to compute these vectors.
\begin{proposition} 
\label{prop::dual_vectors}
The two vectors, defined as $\mu^\mathrm{in}_i = \exp( - \beta \lambda^\mathrm{in}_i)$ and $\mu^\mathrm{out}_i = \exp( - \beta \lambda^\mathrm{out}_i)$ (see Proposition \ref{Subsec_problemDefinition01}), verify the following equations
\begin{eqnarray}
\bm{\mu}_\mathrm{in} = \mathbf{e} \div  \left( \mathbf{\widehat{Z}}\left( \bm{\mu}_\mathrm{out} \circ \bm{\alpha} \right) \right), \label{it1} \\
\bm{\mu}_\mathrm{out} = \bar{\mathbf{n}}^\mathrm{ref} \div  \left( \mathbf{\widehat{Z}}^\top \left( \bm{\mu}_\mathrm{in} \circ \bm{\sigma}_\mathrm{in} \right) \right), \label{it2}
\end{eqnarray} 
where $\mathbf{\widehat{Z}} \triangleq (\mathbf{I} - \widehat{\mathbf{W}})^{-1}$ is the \textbf{fundamental matrix}, obtained from $\widehat{\mathbf{W}} \triangleq \mathbf{\widehat{P}}^\mathrm{ref} \circ \exp[-\beta \mathbf{C}]$, and $\circ$ is the elementwise product.
\end{proposition}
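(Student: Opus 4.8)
The plan is to substitute the optimal path probabilities of Proposition \ref{prop::optimal_sol} into the two margin constraints of problem (\ref{init_prob}) and collapse the resulting path sums into matrix expressions. The crucial preliminary step is to establish the \emph{fundamental matrix identity}
\[
\sum_{\wp_{ij} \in \mathcal{P}_{ij}} \widehat{\pi}^\mathrm{ref}(\wp_{ij}) \exp\!\left(-\beta \tilde{c}(\wp_{ij})\right) = \widehat{z}_{ij},
\]
where $\widehat{z}_{ij} = [(\mathbf{I} - \widehat{\mathbf{W}})^{-1}]_{ij}$ is the $(i,j)$ entry of $\mathbf{\widehat{Z}}$. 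To see this, I would write a generic path $\wp_{ij} = (i = k_0, k_1, \dots, k_t = j)$ and exploit the multiplicative structure of the modified likelihood together with the additive structure of the cost to factor the summand as $\prod_{\tau=1}^t \widehat{p}^\mathrm{ref}_{k_{\tau-1} k_\tau} \exp(-\beta c_{k_{\tau-1} k_\tau}) = \prod_{\tau=1}^t \widehat{w}_{k_{\tau-1} k_\tau}$, with $\widehat{w}_{kl} = \widehat{p}^\mathrm{ref}_{kl}\exp(-\beta c_{kl})$ the entries of $\widehat{\mathbf{W}}$. Summing over all paths of a fixed length $t$ then yields $[(\widehat{\mathbf{W}})^t]_{ij}$, and summing over all lengths gives the Neumann series $\sum_{t=0}^\infty (\widehat{\mathbf{W}})^t = (\mathbf{I} - \widehat{\mathbf{W}})^{-1}$ (the zero-length paths contributing the identity term on the diagonal).

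With this identity available, I would substitute (\ref{prob_ref_comp}) into the starting-node constraint (\ref{constr_in}). Since $\mu^\mathrm{in}_i$ and $\sigma^\mathrm{in}_i$ are independent of the summation index $j$, they factor out, and the inner path sum over $\mathcal{P}_{ij}$ collapses to $\widehat{z}_{ij}$, giving $\sigma^\mathrm{in}_i = \mu^\mathrm{in}_i \sigma^\mathrm{in}_i \sum_{j} \widehat{z}_{ij}\, \mu^\mathrm{out}_j \alpha_j$. Cancelling the common factor $\sigma^\mathrm{in}_i$ and recognising the remaining sum as the $i$-th component of $\mathbf{\widehat{Z}}(\bm{\mu}_\mathrm{out} \circ \bm{\alpha})$ yields the elementwise relation $\mathbf{e} = \bm{\mu}_\mathrm{in} \circ (\mathbf{\widehat{Z}}(\bm{\mu}_\mathrm{out} \circ \bm{\alpha}))$, which rearranges into (\ref{it1}).

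The ending-node constraint (\ref{constr_out}) is treated analogously, but now the surviving factor depends on $j$. Substituting (\ref{prob_ref_comp}) and factoring the $j$-dependent terms out of the sum over $i$ gives $\sigma^\mathrm{out}_j = \mu^\mathrm{out}_j \alpha_j \sum_i \widehat{z}_{ij}\, \mu^\mathrm{in}_i \sigma^\mathrm{in}_i$. The key simplification is to invoke Proposition \ref{prop::piout_to_alpha} with $\bm{\pi}_\mathrm{out} = \bm{\sigma}_\mathrm{out}$, which gives $\alpha_j \bar{n}^\mathrm{ref}_j = \sigma^\mathrm{out}_j$; substituting $\sigma^\mathrm{out}_j = \alpha_j \bar{n}^\mathrm{ref}_j$ on the left and cancelling the common $\alpha_j$ leaves $\bar{n}^\mathrm{ref}_j = \mu^\mathrm{out}_j \sum_i \widehat{z}_{ij}\, \mu^\mathrm{in}_i \sigma^\mathrm{in}_i$. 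Reading the sum as the $j$-th component of $\mathbf{\widehat{Z}}^\top (\bm{\mu}_\mathrm{in} \circ \bm{\sigma}_\mathrm{in})$ gives $\bar{\mathbf{n}}^\mathrm{ref} = \bm{\mu}_\mathrm{out} \circ (\mathbf{\widehat{Z}}^\top (\bm{\mu}_\mathrm{in} \circ \bm{\sigma}_\mathrm{in}))$, which is exactly (\ref{it2}) after elementwise division. This cancellation of $\alpha_j$ against $\sigma^\mathrm{out}_j$ is precisely what produces the asymmetry between (\ref{it1}) and (\ref{it2}), namely the appearance of $\bar{\mathbf{n}}^\mathrm{ref}$ rather than $\mathbf{e}$.

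I expect the main obstacle to be the rigorous justification of the fundamental matrix identity, specifically the convergence of the Neumann series, which requires the spectral radius of $\widehat{\mathbf{W}}$ to be strictly less than one. This follows from $\widehat{\mathbf{W}} \leq \mathbf{\widehat{P}}^\mathrm{ref}$ elementwise (because $\exp(-\beta c_{ij}) \leq 1$ for nonnegative costs and $\beta > 0$), together with the substochasticity of $\mathbf{\widehat{P}}^\mathrm{ref}$ on the strongly connected graph already exploited in Lemma \ref{lemma::number_visits}, so that $\mathbf{\widehat{Z}} = (\mathbf{I} - \widehat{\mathbf{W}})^{-1}$ is well-defined. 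A secondary, more cosmetic subtlety is the division by $\sigma^\mathrm{in}_i$ or $\sigma^\mathrm{out}_j$ when these margins vanish: for such nodes the corresponding constraint holds trivially and the associated multiplier is immaterial, since it always appears multiplied by a zero margin, so the stated elementwise formulas remain valid under the usual convention.
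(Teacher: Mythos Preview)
Your proposal is correct and follows essentially the same route as the paper: substitute the optimal path probabilities into the two margin constraints, collapse the path sums via the fundamental matrix identity $\sum_{\wp_{ij}\in\mathcal{P}_{ij}} \widehat{\pi}^\mathrm{ref}(\wp_{ij})\exp(-\beta\tilde c(\wp_{ij}))=\widehat z_{ij}$, and then use $\alpha_j=\sigma^\mathrm{out}_j/\bar n^\mathrm{ref}_j$ from Proposition~\ref{prop::piout_to_alpha} to produce the asymmetric $\bar{\mathbf n}^\mathrm{ref}$ in (\ref{it2}). Your explicit Neumann-series justification and your handling of the vanishing-margin case are both in line with (and slightly more detailed than) what the paper does.
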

\begin{proof}
The Lagrange parameters can be found by enforcing the constraints (\ref{constr_in}) and (\ref{constr_out}) on $\mathrm{P}^\star(\wp)$. By injecting (\ref{prob_ref_comp}) for $\mathrm{P}^\star(\wp)$ in (\ref{constr_in}) and (\ref{constr_out}) provides
\begin{eqnarray}
\sigma^\mathrm{in}_i \mu^\mathrm{in}_i \sum_{j \in \mathcal{V}} \mu^\mathrm{out}_j \alpha_j \underbracket[0.5pt][4pt]{ \sum_{\wp_{ij} \in \mathcal{P}_{ij}} \widehat{\pi}^\mathrm{ref}(\wp_{ij}) \exp( - \beta \tilde{c}(\wp_{ij})) }_{\widehat{z}_{ij}} = \sigma^\mathrm{in}_i  & \forall i \in \mathcal{V}, \label{it1_elem} \\
\alpha_j \mu^\mathrm{out}_j \sum_{i \in \mathcal{V}} \mu^\mathrm{in}_i \sigma^\mathrm{in}_i \underbracket[0.5pt][4pt]{ \sum_{\wp_{ij} \in \mathcal{P}_{ij}} \widehat{\pi}^\mathrm{ref}(\wp_{ij}) \exp( - \beta \tilde{c}(\wp_{ij})) }_{\widehat{z}_{ij}} = \sigma^\mathrm{out}_j & \forall j \in \mathcal{V}.
\label{it2_elem}
\end{eqnarray}
By further defining the \emph{fundamental matrix} as  
\begin{equation}
\mathbf{\widehat{Z}} \triangleq (\mathbf{I} - \widehat{\mathbf{W}})^{-1}
\label{fundamental_matrix}
\end{equation}
with $\widehat{\mathbf{W}} \triangleq \mathbf{\widehat{P}}^\mathrm{ref} \circ \exp[-\beta \mathbf{C}]$, where $\circ$ is the elementwise product, it can easily be shown by using a development similar to \cite{franccoisse2016bag,mantrach2009sum} that
\begin{equation}
\sum_{\wp_{ij} \in \mathcal{P}_{ij}} \widehat{\pi}^\mathrm{ref}(\wp_{ij}) \exp( - \beta \tilde{c}(\wp_{ij})) = \widehat{z}_{ij} = [\mathbf{\widehat{Z}}]_{ij}.
\label{Eq_fundamental_matrix_regular_paths01}
\end{equation}

For computing the parameters, we use a variant \cite{Kapur-1992} of iterative proportional fitting procedure (discussed below) based on (\ref{it1_elem}) and (\ref{it2_elem}). Isolating $\sigma^\mathrm{in}_i$ and $\sigma^\mathrm{out}_i$ in (\ref{it1_elem}) and (\ref{it2_elem}) after replacing $\alpha_j$ by the result found with Proposition \ref{prop::piout_to_alpha} (Equations (\ref{a_from_n}) and (\ref{Eq_computing_number_of_visits01})), i.e. $\alpha_j  = \sigma^\mathrm{out}_j / \bar{n}^\mathrm{ref}_j$, in the second equation, we obtain
\begin{eqnarray*}
\bm{\mu}_\mathrm{in} = \mathbf{e} \div  \left( \mathbf{\widehat{Z}}\left( \bm{\mu}_\mathrm{out} \circ \bm{\alpha} \right) \right), \\
\bm{\mu}_\mathrm{out} = \bar{\mathbf{n}}^\mathrm{ref} \div  \left( \mathbf{\widehat{Z}}^\top \left( \bm{\mu}_\mathrm{in} \circ \bm{\sigma}_\mathrm{in} \right) \right).
\end{eqnarray*} 
Note that the above derivation is only valid for nodes $i$ and $j$ for which $\sigma^\mathrm{in}_i > 0$ and $\sigma^\mathrm{out}_j > 0$, respectively, and that $\mu^\mathrm{in}_i$ and $\mu^\mathrm{out}_j$ are not needed when these quantities are equal to zero because the path probabilities also vanish in this case (see Equations (\ref{prob_ref_comp})). However, defining and computing the quantities $\mu^\mathrm{in}_i$ and $\mu^\mathrm{out}_j$ for all nodes $i$ and $j$ according to (\ref{it1}) and (\ref{it2}) proves to be convenient for what follows, as these quantities appear in other meaningful expressions. For unconstrained nodes with $\sigma^\mathrm{in}_i = 0$ or $\sigma^\mathrm{out}_j = 0$, the Lagrange parameters are equal to zero, meaning that the corresponding $\mu^\mathrm{in}_i = 1$ and $\mu^\mathrm{out}_j = 1$
\end{proof}

Note that by further defining the matrix $\widehat{t}_{ij} \triangleq \sigma^\mathrm{in}_i \widehat{z}_{ij} \alpha_j$, (\ref{it1_elem}) and (\ref{it2_elem}) can be rewritten in matrix form as $\mathbf{Diag}(\bm{\mu}_\mathrm{in}) ( \mathbf{Diag}(\bm{\sigma}_\mathrm{in}) \mathbf{\widehat{Z}} \, \mathbf{Diag}(\bm{\alpha)} ) \mathbf{Diag}(\bm{\mu}_\mathrm{out}) \, \mathbf{e}$ $=$ $\mathbf{Diag}(\bm{\mu}_\mathrm{in}) \widehat{\mathbf{T}} \, \mathbf{Diag}(\bm{\mu}_\mathrm{out}) \, \mathbf{e}$ $=$ $\bm{\sigma}_\mathrm{in}$ and $\mathbf{Diag}(\bm{\mu}_\mathrm{out}) \widehat{\mathbf{T}}^\top \mathbf{Diag}(\bm{\mu}_\mathrm{in}) \, \mathbf{e}$ $=$ $\bm{\sigma}_\mathrm{out}$, where $\mathbf{Diag}(\mathbf{x})$ is a diagonal matrix with $\mathbf{x}$ on its main diagonal and $\mathbf{e}$ is a column vector full of $1$'s. Therefore, we have in matrix form
\begin{equation}
\widehat{\mathbf{T}} = \mathbf{Diag}(\bm{\sigma}_\mathrm{in}) \mathbf{\widehat{Z}} \, \mathbf{Diag}(\bm{\alpha)}.
\label{Eq_T_matrix01}
\end{equation}
Thus, the computation of the Lagrange parameters reduces to the problem of finding two nonnegative row and column scaling vectors ($\bm{\mu}_\mathrm{in}$ and $\bm{\mu}_\mathrm{out}$), reweighting the rows and the columns of $\widehat{\mathbf{T}}$ such that the row marginals and the column marginals of the new rescaled matrix are equal to $\bm{\sigma}_\mathrm{in}$ and $\bm{\sigma}_\mathrm{out}$, just like in the context of standard optimal transport with entropy regularization \cite{Cuturi2013}.

Indeed, this procedure is closely related to the solution of the standard, relaxed, optimal transport problem with entropy regularization when using the matrix $\widehat{\mathbf{T}}$ as the matrix containing the costs or rewards of transportation \cite{Wilson-1970,Kapur-1989,Erlander-1990,Kapur-1992,Cuturi2013}.
It is usually solved by iterative proportional fitting, matrix balancing or biproportional scaling (\cite{Bacharach-1965,Sinkhorn-1967}; see \cite{Kurras-2015,Pukelsheim-2014} and the references therein for a more recent discussion). The iterative proportional fitting algorithm has guaranteed convergence to a unique solution under some mild conditions (see, e.g., \cite{Kurras-2015,Pukelsheim-2014}).


Our iterative procedure for solving Equations (\ref{it1}-\ref{it2}) consists of first fixing an arbitrary $ \bm{\mu}_\mathrm{out}^{(0)}$, and then computing $\bm{\mu}_\mathrm{in}^{(\tau)}$ from (\ref{it1}) at iteration $\tau$. Thereafter, $\bm{\mu}_\mathrm{in}^{(\tau)}$ is kept fixed and $\bm{\mu}_\mathrm{out}^{(\tau + 1)}$ is computed from $\bm{\mu}_\mathrm{in}^{(\tau)}$ in (\ref{it2}).

Finally, note that the optimal probability distribution $\mathrm{P}^\star(\wp)$ given by Proposition \ref{prop::optimal_sol} is useless in practice, as there is an infinite number of paths. However, the different interesting and useful quantities can also be computed from the fundamental matrix and the Lagrange multipliers, as shown in the next section. Moreover, it will be shown in Subsection \ref{Subsec_regularized_optimal_transport01} that the optimization problem can be reduced to the estimation of the joint probabilities $\mathrm{P}^\star(S = i, E = j) = \sum_{\wp_{ij} \in \mathcal{P}_{ij}} \mathrm{P}^\star(\wp_{ij})$ (the coupling), therefore completely avoiding the introduction of the probability distribution. 



\subsection{Computation of other important quantities}

Similarly to \cite{franccoisse2016bag}, other meaningful quantities can be computed in closed form after the convergence of $\bm{\mu}_\mathrm{in}$ and $\bm{\mu}_\mathrm{out}$. This section provides expressions for computing them. The computation of the Lagrange multipliers together with the derived quantities, for the regular, non-hitting, BoP model, are summarized in Algorithm \ref{Alg_nh_bop}.


\subsubsection{The coupling matrix}
\label{Subsec_coupling_matrix_regular_paths01}
First, the definition and computation of the \emph{coupling matrix} is presented. Its name derives from the transportation science literature \cite{villani2003topics,villani2008optimal}.

\begin{proposition} 
\label{prop::coupling}
The coupling matrix, denoted by $\mathbf{\Gamma} = (\gamma_{ij})$ and defined by
\begin{equation}
\gamma_{ij} \triangleq \mathrm{P}^\star(S = i, E = j) = \sum_{\wp_{ij} \in \mathcal{P}_{ij}} \mathrm{P}^\star(\wp_{ij}),
\label{Eq_coupling_probabilities01}
\end{equation}
where $\mathcal{P}_{ij}$ is the set of paths starting in node $i$ and ending in node $j$, can be computed by
\begin{equation}
\mathbf{\Gamma} = \mathbf{Diag}(\bm{\mu}_\mathrm{in} \circ \bm{\sigma}_\mathrm{in}) \; \mathbf{\widehat{Z}} \; \mathbf{Diag}(\bm{\mu}_\mathrm{out} \circ \bm{\alpha}). 
\label{sol_pi_mat}
\end{equation}
\end{proposition}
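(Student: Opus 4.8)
The plan is to reduce the claimed matrix identity to an elementwise computation and then recognize a previously established sum-over-paths identity. First I would start from the explicit form of the optimal path probability already obtained in Proposition \ref{prop::optimal_sol}, namely
\begin{equation*}
\mathrm{P}^\star(\wp_{ij}) = \mu^\mathrm{in}_{i} \, \sigma^\mathrm{in}_{i} \, \mu^\mathrm{out}_{j} \, \alpha_{j} \, \widehat{\pi}^\mathrm{ref}(\wp_{ij}) \exp\!\left(-\beta \tilde{c}(\wp_{ij})\right),
\end{equation*}
and substitute it directly into the definition (\ref{Eq_coupling_probabilities01}) of $\gamma_{ij}$. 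The crucial observation is that the four prefactors $\mu^\mathrm{in}_{i}$, $\sigma^\mathrm{in}_{i}$, $\mu^\mathrm{out}_{j}$, $\alpha_{j}$ depend only on the endpoints $i$ and $j$, not on the particular path $\wp_{ij}$ joining them. Hence they factor out of the sum over $\mathcal{P}_{ij}$, leaving
\begin{equation*}
\gamma_{ij} = \mu^\mathrm{in}_{i} \, \sigma^\mathrm{in}_{i} \, \mu^\mathrm{out}_{j} \, \alpha_{j} \sum_{\wp_{ij} \in \mathcal{P}_{ij}} \widehat{\pi}^\mathrm{ref}(\wp_{ij}) \exp\!\left(-\beta \tilde{c}(\wp_{ij})\right).
\end{equation*}

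Next I would invoke the sum-over-paths identity (\ref{Eq_fundamental_matrix_regular_paths01}) established in the proof of Proposition \ref{prop::dual_vectors}, which evaluates precisely this residual sum as an entry of the fundamental matrix:
\begin{equation*}
\sum_{\wp_{ij} \in \mathcal{P}_{ij}} \widehat{\pi}^\mathrm{ref}(\wp_{ij}) \exp\!\left(-\beta \tilde{c}(\wp_{ij})\right) = \widehat{z}_{ij} = [\mathbf{\widehat{Z}}]_{ij}.
\end{equation*}
This immediately gives the closed-form elementwise expression $\gamma_{ij} = (\mu^\mathrm{in}_{i} \, \sigma^\mathrm{in}_{i}) \, \widehat{z}_{ij} \, (\mu^\mathrm{out}_{j} \, \alpha_{j})$, in which the $i$-dependent factor multiplies row $i$ of $\mathbf{\widehat{Z}}$ and the $j$-dependent factor multiplies column $j$.

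Finally I would translate this row-and-column scaling into matrix form. Recognizing $\mu^\mathrm{in}_{i}\sigma^\mathrm{in}_{i}$ as the $i$-th component of $\bm{\mu}_\mathrm{in}\circ\bm{\sigma}_\mathrm{in}$ and $\mu^\mathrm{out}_{j}\alpha_{j}$ as the $j$-th component of $\bm{\mu}_\mathrm{out}\circ\bm{\alpha}$, left-multiplication by $\mathbf{Diag}(\bm{\mu}_\mathrm{in}\circ\bm{\sigma}_\mathrm{in})$ scales the rows and right-multiplication by $\mathbf{Diag}(\bm{\mu}_\mathrm{out}\circ\bm{\alpha})$ scales the columns, yielding
\begin{equation*}
\mathbf{\Gamma} = \mathbf{Diag}(\bm{\mu}_\mathrm{in} \circ \bm{\sigma}_\mathrm{in}) \; \mathbf{\widehat{Z}} \; \mathbf{Diag}(\bm{\mu}_\mathrm{out} \circ \bm{\alpha}),
\end{equation*}
as claimed. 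This proof is essentially a direct bookkeeping exercise chaining together two earlier results, so I anticipate no serious obstacle; the only point requiring a little care is confirming that the prefactors are genuinely path-independent (so that they may be extracted from the sum) and that the scaling vectors are placed on the correct side of $\mathbf{\widehat{Z}}$, since swapping the two $\mathbf{Diag}$ factors would scale the wrong index. The result also dovetails with the already-derived expressions, since summing the rows of $\mathbf{\Gamma}$ and comparing with (\ref{it1_elem}) recovers the marginal constraint $\gamma_{i\bullet}=\sigma^\mathrm{in}_i$, providing a useful consistency check.
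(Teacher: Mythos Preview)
Your proposal is correct and follows essentially the same approach as the paper's own proof: substitute the optimal path probability from Proposition~\ref{prop::optimal_sol} into the definition of $\gamma_{ij}$, pull out the endpoint-dependent prefactors, apply the sum-over-paths identity~(\ref{Eq_fundamental_matrix_regular_paths01}) to obtain $\widehat{z}_{ij}$, and rewrite the resulting elementwise expression in matrix form. Your write-up is in fact slightly more explicit than the paper's about why the prefactors factor out and about the row/column scaling interpretation, but the argument is the same.
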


\begin{proof}
From the optimal probabilities (\ref{prob_ref_comp}) and the definition of the fundamental matrix (\ref{fundamental_matrix}), the probability of drawing a path starting in $i$ and ending $j$ is
\begin{align}
\mathrm{P}^\star(S = i, E = j) &= \sum_{\wp_{ij} \in \mathcal{P}_{ij}} \mathrm{P}^\star(\wp_{ij}) \nonumber  \\
&= \mu^\mathrm{in}_i \sigma^\mathrm{in}_{i} \mu^\mathrm{out}_j  \alpha_j \sum_{\wp_{ij} \in \mathcal{P}_{ij}} \widehat{\pi}^\mathrm{ref}(\wp_{ij}) \exp( - \beta \tilde{c}(\wp_{ij})) \\
&= \mu^\mathrm{in}_i \sigma^\mathrm{in}_{i} \widehat{z}_{ij} \mu^\mathrm{out}_j  \alpha_j,
\label{pi_element}
\end{align}
which in matrix form writes out as Equation (\ref{sol_pi_mat}).
\end{proof}


\subsubsection{The optimal free energy}

\begin{proposition} 
\label{prop::optimal_FE}
The value of the free energy at the optimal paths probability distribution $\mathrm{P}^\star$ is
\begin{equation}
\mathrm{FE}(\mathrm{P}^\star) = - \left( \bm{\lambda}_\mathrm{in}^\top \bm{\sigma}_\mathrm{in} + \bm{\lambda}_\mathrm{out}^\top \bm{\sigma}_\mathrm{out} \right),
\label{FE}
\end{equation}
where the Lagrange parameter vectors $\bm{\lambda}_\mathrm{in}$, $\bm{\lambda}_\mathrm{in}$ are obtained from the vectors $\bm{\mu}_\mathrm{in}$, $\bm{\mu}_\mathrm{in}$ as stated in Proposition \ref{prop::optimal_sol}.
\end{proposition}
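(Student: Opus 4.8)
The plan is to evaluate the free energy directly at the optimal distribution $\mathrm{P}^\star$ and to exploit the explicit Gibbs-Boltzmann form obtained in Proposition \ref{prop::optimal_sol} to linearize the relative-entropy term. First I would take the logarithm of the optimal solution $\mathrm{P}^\star(\wp_{ij}) = \mathrm{P}^\mathrm{ref}(\wp_{ij}) \exp[-\beta(\tilde{c}(\wp_{ij}) + \lambda^\mathrm{in}_i + \lambda^\mathrm{out}_j)]$, giving $\log(\mathrm{P}^\star(\wp_{ij})/\mathrm{P}^\mathrm{ref}(\wp_{ij})) = -\beta(\tilde{c}(\wp_{ij}) + \lambda^\mathrm{in}_i + \lambda^\mathrm{out}_j)$. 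Multiplying by $T = 1/\beta$ turns the per-path relative-entropy contribution into the linear expression $-(\tilde{c}(\wp_{ij}) + \lambda^\mathrm{in}_i + \lambda^\mathrm{out}_j)$, which is the crucial simplification.

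Next I would substitute this into $\mathrm{FE}(\mathrm{P}^\star) = \sum_\wp \mathrm{P}^\star(\wp)\tilde{c}(\wp) + T\sum_\wp \mathrm{P}^\star(\wp)\log(\mathrm{P}^\star(\wp)/\mathrm{P}^\mathrm{ref}(\wp))$. The relative-entropy term becomes $-\sum_\wp \mathrm{P}^\star(\wp)\tilde{c}(\wp) - \sum_{i,j}\sum_{\wp_{ij}} \mathrm{P}^\star(\wp_{ij})(\lambda^\mathrm{in}_i + \lambda^\mathrm{out}_j)$, so that the expected-cost term cancels exactly against the first summand of the free energy. Regrouping the remaining double sum by endpoints and pulling the multipliers out of the inner sums leaves $-\sum_i \lambda^\mathrm{in}_i \bigl(\sum_j \sum_{\wp_{ij}} \mathrm{P}^\star(\wp_{ij})\bigr) - \sum_j \lambda^\mathrm{out}_j \bigl(\sum_i \sum_{\wp_{ij}} \mathrm{P}^\star(\wp_{ij})\bigr)$. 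Finally, applying the margin constraints (\ref{constr_in}) and (\ref{constr_out}), which $\mathrm{P}^\star$ satisfies because the multipliers were fixed precisely for this purpose in Proposition \ref{prop::dual_vectors}, replaces the bracketed marginals by $\sigma^\mathrm{in}_i$ and $\sigma^\mathrm{out}_j$, yielding $\mathrm{FE}(\mathrm{P}^\star) = -(\bm{\lambda}_\mathrm{in}^\top \bm{\sigma}_\mathrm{in} + \bm{\lambda}_\mathrm{out}^\top \bm{\sigma}_\mathrm{out})$.

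There is no deep obstacle here; the statement is the familiar fact that, for an entropy-regularized program, the optimal primal value equals the negative inner product of the multipliers with the constraint right-hand sides. The only point requiring care is bookkeeping: the first-order condition used above already absorbs the $T/2$ shift introduced in the Lagrangian (\ref{lagrangian}), so that the $+T$ coming from differentiating $T\,\mathrm{P}\log\mathrm{P}$ is exactly cancelled and no stray additive constant survives. One should also check that the endpoint regrouping is legitimate, i.e.\ that every path is counted once under $\mathcal{P} = \cup_{i,j}\,\mathcal{P}_{ij}$, and invoke feasibility of $\mathrm{P}^\star$ only \emph{after} the cost cancellation, so that the constraints are applied to genuine marginals.
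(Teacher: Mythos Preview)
Your proposal is correct and follows essentially the same route as the paper: substitute the Gibbs--Boltzmann form of $\mathrm{P}^\star$ into the free energy so that the cost term cancels against the relative-entropy term, regroup by starting and ending nodes, and invoke the margin constraints to replace the marginals by $\bm{\sigma}_\mathrm{in}$ and $\bm{\sigma}_\mathrm{out}$. Your explicit remark about the $T/2$ shift absorbing the $+T$ from differentiating $T\,\mathrm{P}\log\mathrm{P}$ is a useful piece of bookkeeping that the paper leaves implicit.
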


\begin{proof}
By replacing $\mathrm{P}^\star(\wp_{ij}) = \mu^\mathrm{in}_{i} \mu^\mathrm{out}_{j} \mathrm{P}^\mathrm{ref}(\wp_{ij}) \exp(-\beta \tilde{c}(\wp_{ij}))$ into the free energy expression (\ref{init_prob}), we get
\begin{gather*}
\text{FE}(\mathrm{P}^\star) = -\sum_{i \in \mathcal{V}} \lambda^\mathrm{in}_{i}  \sum_{j \in \mathcal{V}} \sum_{\wp_{ij} \in \mathcal{P}_{ij}} \mathrm{P}^\star(\wp_{ij}) - \sum_{j \in \mathcal{V}} \lambda^\mathrm{out}_{j}  \sum_{i \in \mathcal{V}} \sum_{\wp_{ij} \in \mathcal{P}_{ij}} \mathrm{P}^\star(\wp_{ij})
\end{gather*}
and, as the margins are fixed, we get the result.
\end{proof}
An interesting interpretation of this proposition based on an optimal transportation analogy is discussed in Section \ref{sec::transp}.


\subsubsection{The expected number of passages through an edge}

Let us define the matrix $\mathbf{\bar{N}} = (\bar{n}_{ij})$ as the matrix containing the expected number of times an edge $(i,j)$ appears on a path drawn from the optimal distribution $\mathrm{P}^\star(\wp)$. Formally,
\begin{equation*}
\bar{n}_{ij}  \triangleq \sum_{k,l \in \mathcal{V}} \sum_{\wp_{kl} \in \mathcal{P}_{kl}} \mathrm{P}^\star(\wp_{kl}) \eta(i \to j \in \wp_{kl}),
\end{equation*}
where $\eta(i \to j \in \wp_{kl})$ denotes the number of times the edge $(i,j)$ appears on path $\wp_{kl}$. 

In \cite{guex2016interpolating}, $\bar{n}_{ij}$ is interpreted as the \emph{flow on edges}, creating a ``stream of matter" going from supply nodes in $\mathcal{I}n$ to destination nodes in $\mathcal{O}ut$. This interpretation will also be discussed further in Section \ref{sec::transp}.

\begin{proposition}
The matrix $\mathbf{\bar{N}}$, containing the expected number of times an edge $(i,j)$ appears on a drawn path, is given by 
\begin{equation}
\mathbf{\bar{N}}= \mathbf{Diag}(\mathbf{\bar{n}}^\mathrm{ref} \div \bm{\mu}_\mathrm{out}) \; \widehat{\mathbf{W}} \; ( \mathbf{Diag}( \bm{\mu}_\mathrm{in} ) )^{-1},
\end{equation}
where $\widehat{\mathbf{W}}$ is defined after Equation (\ref{fundamental_matrix}).
\end{proposition}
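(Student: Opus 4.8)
The plan is to reduce the edge-count expectation to a product of three independent sub-sums by exploiting the multiplicative structure of the path weights, and then to collapse the outer sums using the constraint equations already established for the Lagrange multipliers. First I would substitute the optimal path probabilities from Proposition \ref{prop::optimal_sol} (Equation (\ref{prob_ref_comp})) into the definition of $\bar{n}_{ij}$, writing
\[
\bar{n}_{ij} = \sum_{k,l \in \mathcal{V}} \mu^\mathrm{in}_k \sigma^\mathrm{in}_k \mu^\mathrm{out}_l \alpha_l \sum_{\wp_{kl} \in \mathcal{P}_{kl}} \widehat{\pi}^\mathrm{ref}(\wp_{kl}) \exp(-\beta \tilde{c}(\wp_{kl})) \, \eta(i \to j \in \wp_{kl}).
\]
The crux is the inner path sum. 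Writing $w(\wp) \triangleq \widehat{\pi}^\mathrm{ref}(\wp) \exp(-\beta \tilde{c}(\wp))$, I would note that $w$ is multiplicative under concatenation of paths at a shared node, since $\widehat{\pi}^\mathrm{ref}$ factorizes and $\tilde{c}$ adds. Every occurrence of the edge $(i,j)$ on a path $\wp_{kl}$ splits it uniquely into a prefix in $\mathcal{P}_{ki}$, the edge $(i,j)$ itself (carrying weight $\widehat{w}_{ij} = \widehat{p}^\mathrm{ref}_{ij} \exp(-\beta c_{ij}) = [\widehat{\mathbf{W}}]_{ij}$), and a suffix in $\mathcal{P}_{jl}$. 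Summing over all paths and all such occurrences therefore factorizes, and recognizing each factor as an entry of the fundamental matrix via (\ref{Eq_fundamental_matrix_regular_paths01}) yields the counting identity
\[
\sum_{\wp_{kl} \in \mathcal{P}_{kl}} w(\wp_{kl}) \, \eta(i \to j \in \wp_{kl}) = \widehat{z}_{ki} \, \widehat{w}_{ij} \, \widehat{z}_{jl}.
\]
This is the step I expect to be the main obstacle: it is the core sum-over-paths algebra, and care is needed to verify that the decomposition counts each passage exactly once (including the treatment of zero-length sub-paths and of self-intersecting paths that traverse $(i,j)$ more than once, where the weighting by $\eta$ exactly matches the number of admissible splittings), along the lines of the developments in \cite{franccoisse2016bag,mantrach2009sum}.

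With the identity in hand, I would pull $\widehat{w}_{ij}$ out and separate the double sum into independent sums over $k$ and over $l$:
\[
\bar{n}_{ij} = \widehat{w}_{ij} \left( \sum_{k \in \mathcal{V}} \mu^\mathrm{in}_k \sigma^\mathrm{in}_k \widehat{z}_{ki} \right) \left( \sum_{l \in \mathcal{V}} \widehat{z}_{jl} \, \mu^\mathrm{out}_l \alpha_l \right).
\]
The two parenthesized sums are precisely the left-hand sides appearing in the constraint equations (\ref{it2_elem}) and (\ref{it1_elem}). From (\ref{it2_elem}) I get $\sum_k \mu^\mathrm{in}_k \sigma^\mathrm{in}_k \widehat{z}_{ki} = \sigma^\mathrm{out}_i / (\alpha_i \mu^\mathrm{out}_i)$, which upon substituting $\alpha_i = \sigma^\mathrm{out}_i / \bar{n}^\mathrm{ref}_i$ from Proposition \ref{prop::piout_to_alpha} (Equation (\ref{a_from_n})) simplifies to $\bar{n}^\mathrm{ref}_i / \mu^\mathrm{out}_i$; from (\ref{it1_elem}) I get $\sum_l \widehat{z}_{jl} \mu^\mathrm{out}_l \alpha_l = 1 / \mu^\mathrm{in}_j$. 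Substituting gives the elementwise result $\bar{n}_{ij} = (\bar{n}^\mathrm{ref}_i / \mu^\mathrm{out}_i) \, \widehat{w}_{ij} \, (1 / \mu^\mathrm{in}_j)$. Finally I would collect these into matrix form, reading the left scaling as $\mathbf{Diag}(\bar{\mathbf{n}}^\mathrm{ref} \div \bm{\mu}_\mathrm{out})$ and the right scaling as $\mathbf{Diag}(\mathbf{e} \div \bm{\mu}_\mathrm{in}) = (\mathbf{Diag}(\bm{\mu}_\mathrm{in}))^{-1}$, which reproduces the stated expression. I would also remark that the derivation only invokes the constraint equations at nodes with $\sigma^\mathrm{in}_i > 0$ and $\sigma^\mathrm{out}_j > 0$, consistent with the convention on unconstrained nodes fixed in Proposition \ref{prop::dual_vectors}.
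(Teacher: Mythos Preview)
Your proof is correct and reaches the same intermediate factorization as the paper, namely Equation~(\ref{pass_ij_elem_h}),
\[
\bar{n}_{ij} = \widehat{w}_{ij} \Big( \sum_{k \in \mathcal{V}} \mu^\mathrm{in}_k \sigma^\mathrm{in}_k \widehat{z}_{ki} \Big) \Big( \sum_{l \in \mathcal{V}} \mu^\mathrm{out}_l \alpha_l \widehat{z}_{jl} \Big),
\]
after which the two arguments are identical. The difference lies in how that factorization is obtained. The paper uses a differentiation trick: it observes that $\eta(i \to j \in \wp_{kl}) = \partial \tilde{c}(\wp_{kl}) / \partial c_{ij}$, which converts the expectation into $-\tfrac{1}{\beta} \sum_{k,l} \mu^\mathrm{in}_k \sigma^\mathrm{in}_k \mu^\mathrm{out}_l \alpha_l \, \partial \widehat{z}_{kl} / \partial c_{ij}$, and then invokes the closed-form derivative $\partial \widehat{z}_{kl} / \partial c_{ij} = -\beta \widehat{w}_{ij} \widehat{z}_{ki} \widehat{z}_{jl}$ from \cite{kivimaki2016two}. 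You instead give a direct combinatorial argument, splitting each path at every traversal of $(i,j)$ and using multiplicativity of the weight $w(\wp) = \widehat{\pi}^\mathrm{ref}(\wp) \exp(-\beta \tilde{c}(\wp))$ to factor the sum. Your route is more elementary and self-contained (it does not need the derivative formula as an external input), while the paper's route is shorter once one accepts that formula and generalizes more readily to other path statistics obtainable by differentiating $\widehat{z}_{kl}$. Both are equally rigorous; indeed the derivative identity the paper imports is itself typically proved by exactly the concatenation argument you spell out.
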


\begin{proof}
Because costs are additive along paths, $\eta(i \to j \in \wp_{kl}) = \partial \tilde{c}(\wp_{kl}) / \partial c_{ij}$. Then, by using (\ref{prob_ref_comp}), we obtain
\begin{align}
\bar{n}_{ij} &= \sum_{k,l \in \mathcal{V}} \mu^\mathrm{in}_k  \sigma^\mathrm{in}_{k}  \mu^\mathrm{out}_l  \alpha_{l} \sum_{\wp_{kl} \in \mathcal{P}_{kl}} \widehat{\pi}^\mathrm{ref}(\wp_{kl})  \exp(-\beta \tilde{c}(\wp_{kl})) \frac{ \partial \tilde{c}(\wp_{kl})}{ \partial c_{ij}} \notag \\
&= - \frac{1}{\beta} \sum_{k,l \in \mathcal{V}}  \mu^\mathrm{in}_k \sigma^\mathrm{in}_{k} \mu^\mathrm{out}_l  \alpha_l \frac{ \partial  \widehat{z}_{kl} }{\partial c_{ij}}. \label{res_nij}
\end{align}
We know from direct calculus \cite[Equation (11)]{kivimaki2016two} that $ \partial  \widehat{z}_{kl} / \partial c_{ij} = - \beta \widehat{w}_{ij} \widehat{z}_{ki} \widehat{z}_{jl}$. Thus,
\begin{equation}
\bar{n}_{ij} = \widehat{w}_{ij} \left( \sum_{k \in \mathcal{V}}  \mu^\mathrm{in}_k  \sigma^\mathrm{in}_{k} \widehat{z}_{ki} \right) \left( \sum_{l \in \mathcal{V}}  \mu^\mathrm{out}_l \alpha_l \widehat{z}_{jl} \right),\label{pass_ij_elem_h}
\end{equation}
and by using (\ref{it1_elem})-(\ref{it2_elem}) with $\alpha_l  = \sigma^\mathrm{out}_l / \bar{n}^\mathrm{ref}_l$ (see Proposition \ref{prop::piout_to_alpha}),
\begin{equation}
\bar{n}_{ij} = \frac{\bar{n}^\mathrm{ref}_i \widehat{w}_{ij}}{\mu^\mathrm{out}_i \mu^\mathrm{in}_j}, \label{pass_ij_elem_s}
\end{equation}
which provides the desired result.
\end{proof}

\begin{algorithm}[t!]
\caption[Non-hitting bag-of-paths algorithm]
{Regular, non-hitting, margin-constrained bag-of-paths algorithm}
\algsetup{indent=2em, linenodelimiter=.}
\begin{algorithmic}[1]
\small
\REQUIRE $\,$ \\
 -- The $n\times n$ non-negative adjacency matrix $\mathbf{A}$ of a strongly connected directed graph, containing edge affinities.\\
 -- The $n\times n$ non-negative cost matrix $\mathbf{C}$, containing edge costs.\\
 -- An input distribution vector $\bm{\sigma}_\mathrm{in}$ of size $n$.\\
 -- An output distribution vector $\bm{\sigma}_\mathrm{out}$ of size $n$.\\
 -- A killing rate vector $\bm{\alpha}$ of size $n$.\\
 -- The expected number of visits to each node for the reference random walk, $\mathbf{\bar{n}^\mathrm{ref}}$, of size $n$.\\
 -- A scalar inverse temperature parameter $\beta > 0$.
 
\ENSURE $\,$ \\
 -- The coupling matrix $\bm{\Gamma}$ of size $n\times n$. \\
 -- The minimum free energy $\text{FE}^\mathrm{min}$ (a scalar). \\ 
 -- The matrix $\mathbf{\bar{N}}$ of size $n\times n$, containing the expected number of visits to each edge $(i,j)$. \\
 -- The vector $\bar{\mathbf{n}}$ of size $n$, containing the expected number of visits to each node. \\
 -- The resulting biased transition probabilities matrix $\mathbf{P}$ of size $n\times n$ (the policy). \\
~\\

\STATE $\mathbf{D} \leftarrow \mathbf{Diag}(\mathbf{A}\mathbf{e})$ \COMMENT{the out-degree matrix} \\
\STATE $\mathbf{P}^\mathrm{ref} \leftarrow \mathbf{D}^{-1} \mathbf{A}$ \COMMENT{the reference transition probabilities matrix} \\
\STATE $\mathbf{\widehat{P}}^\mathrm{ref} \leftarrow (\mathbf{I} - \mathbf{Diag}(\bm{\alpha}))\mathbf{P}^\mathrm{ref}$ \COMMENT{the killed reference transition probabilities matrix} \\
\STATE $\widehat{\mathbf{W}} \leftarrow \mathbf{\widehat{P}}^\mathrm{ref} \circ \exp[-\beta \mathbf{C}]$ \COMMENT{the auxiliary matrix $\widehat{\mathbf{W}}$}; $\circ$ is the elementwise product \\
\STATE $\mathbf{\widehat{Z}} \leftarrow (\mathbf{I} - \widehat{\mathbf{W}})^{-1}$ \COMMENT{the fundamental matrix} \\
\STATE $\bm{\mu}_\mathrm{out} \leftarrow \mathbf{e}$ \COMMENT{the initial Lagrange parameters vector for ouputs} \\

\REPEAT[main iteration loop]
\STATE $\bm{\mu}_\mathrm{in} \leftarrow \mathbf{e} \div  \left( \mathbf{\widehat{Z}}\left( \bm{\mu}_\mathrm{out} \circ \bm{\alpha} \right) \right)$ \COMMENT{the update of input vector; $\div$ is the elementwise division} \\ 
\STATE $\bm{\mu}_\mathrm{out} \leftarrow \bar{\mathbf{n}}^\mathrm{ref} \div  \left( \mathbf{\widehat{Z}}^\top \left( \bm{\mu}_\mathrm{in} \circ \bm{\sigma}_\mathrm{in} \right) \right)$ \COMMENT{the update of output vector} \\ 
\UNTIL{convergence}

\STATE $\mathbf{\Gamma} \leftarrow \mathbf{Diag}(\bm{\mu}_\mathrm{in} \circ \bm{\sigma}_\mathrm{in}) \; \mathbf{\widehat{Z}} \; \mathbf{Diag}(\bm{\mu}_\mathrm{out} \circ \bm{\alpha})$ \COMMENT{the coupling matrix} \\
\STATE $\text{FE}^\mathrm{min} \leftarrow \frac{1}{\beta} ( (\log[\bm{\mu}_\mathrm{in}])^\top \bm{\sigma}_\mathrm{in} + (\log[\bm{\mu}_\mathrm{out}])^\top \bm{\sigma}_\mathrm{out})$ \COMMENT{the minimum free energy value} \\
\STATE $\mathbf{\bar{N}} \leftarrow \mathbf{Diag}(\mathbf{\bar{n}}^\mathrm{ref} \div \bm{\mu}_\mathrm{out}) \; \widehat{\mathbf{W}} \; (\mathbf{Diag}( \bm{\mu}_\mathrm{in} ))^{-1}$ \COMMENT{the matrix containing the expected number of visits to each edge} \\
\STATE $\bar{\mathbf{n}} \leftarrow \bar{\mathbf{n}}^\mathrm{ref} \div ( \bm{\mu}_\mathrm{in} \circ \bm{\mu}_\mathrm{out} )$ \COMMENT{the vector containing the expected number of visits to each node} \\
\STATE $\mathbf{P} \leftarrow \mathrm{pinv}(\mathbf{Diag}(\mathbf{\bar{N}} \mathbf{e})) \, \mathbf{\bar{N}}$ \COMMENT{the biased transition matrix (the policy)} \\

\RETURN $\mathbf{\Gamma}$, $\text{FE}^\mathrm{min}$, $\mathbf{\bar{N}}$, $\bar{\mathbf{n}}$, $\mathbf{P}$
\end{algorithmic} \label{Alg_nh_bop} 
\end{algorithm}


\subsubsection{The expected number of visits to a node}
\label{sec::nnodes}

Let us further define the vector $\bar{\mathbf{n}} = (\bar{n}_i)$, containing the expected number of times node $i$ is drawn under $\mathrm{P}^\star(\wp)$, by
\begin{equation*}
\bar{n}_i \triangleq \sum_{\wp \in \mathcal{P}} \mathrm{P}^\star(\wp) \, \eta(i \in \wp),
\end{equation*}
where $\eta(i \in \wp)$ denotes the number of times node $i$ appears on path $\wp$. 

\begin{proposition}
The vector  $\bar{\mathbf{n}}$, containing  the expected number of times node $i$ is drawn, is 
\begin{equation}
\bar{\mathbf{n}} = \bar{\mathbf{n}}^\mathrm{ref} \div ( \bm{\mu}_\mathrm{in} \circ \bm{\mu}_\mathrm{out} ),
\end{equation}
where $\div$ and $\circ$ are the elementwise  division and product.
\end{proposition}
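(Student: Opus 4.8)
The plan is to avoid differentiating the path sum directly and instead reduce the node count to the edge counts $\bar{n}_{ij}$ already obtained in the previous proposition. The key combinatorial observation is a node--edge accounting identity: along any path $\wp = (k_0,\dots,k_t)$, every occurrence of node $i$ at a non-terminal position $\tau < t$ is matched bijectively with the traversal of exactly one outgoing edge $(i,k_{\tau+1})$, while the single remaining possible occurrence --- the case $k_t = i$ --- carries no outgoing edge. Hence $\eta(i \in \wp) = \sum_{j \in \mathcal{V}} \eta(i \to j \in \wp) + \eta_\mathrm{end}(i \in \wp)$, where $\eta_\mathrm{end}(i \in \wp)$ equals $1$ when $\wp$ terminates in $i$ and $0$ otherwise (the convention that a zero-length path at $i$ contributes $0+1$ is consistent). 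Taking the expectation under $\mathrm{P}^\star$ and using the margin constraint (\ref{constr_out}), I would obtain $\bar{n}_i = \sum_{j \in \mathcal{V}} \bar{n}_{ij} + \sigma^\mathrm{out}_i$.

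First I would evaluate $\sum_{j} \bar{n}_{ij}$ from the factorized expression (\ref{pass_ij_elem_h}). Writing $u_i \triangleq \sum_k \mu^\mathrm{in}_k \sigma^\mathrm{in}_k \widehat{z}_{ki} = [\mathbf{\widehat{Z}}^\top(\bm{\mu}_\mathrm{in} \circ \bm{\sigma}_\mathrm{in})]_i$ and $v_j \triangleq \sum_l \mu^\mathrm{out}_l \alpha_l \widehat{z}_{jl} = [\mathbf{\widehat{Z}}(\bm{\mu}_\mathrm{out} \circ \bm{\alpha})]_j$, so that $\bar{n}_{ij} = \widehat{w}_{ij} u_i v_j$, the row sum becomes $\sum_j \bar{n}_{ij} = u_i [\widehat{\mathbf{W}} \mathbf{v}]_i$. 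Then I would apply the resolvent identity $\widehat{\mathbf{W}} \mathbf{\widehat{Z}} = \mathbf{\widehat{Z}} - \mathbf{I}$, which follows directly from $\mathbf{\widehat{Z}} = (\mathbf{I} - \widehat{\mathbf{W}})^{-1}$ (see (\ref{fundamental_matrix})), giving $\widehat{\mathbf{W}} \mathbf{v} = \widehat{\mathbf{W}} \mathbf{\widehat{Z}}(\bm{\mu}_\mathrm{out} \circ \bm{\alpha}) = \mathbf{v} - (\bm{\mu}_\mathrm{out} \circ \bm{\alpha})$, hence $[\widehat{\mathbf{W}}\mathbf{v}]_i = v_i - \mu^\mathrm{out}_i \alpha_i$.

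The remaining step is pure substitution. The fixed-point equations (\ref{it1}) and (\ref{it2}) identify $v_i = 1/\mu^\mathrm{in}_i$ and $u_i = \bar{n}^\mathrm{ref}_i / \mu^\mathrm{out}_i$, so $\sum_j \bar{n}_{ij} = (\bar{n}^\mathrm{ref}_i / \mu^\mathrm{out}_i)(1/\mu^\mathrm{in}_i - \mu^\mathrm{out}_i \alpha_i) = \bar{n}^\mathrm{ref}_i/(\mu^\mathrm{in}_i \mu^\mathrm{out}_i) - \bar{n}^\mathrm{ref}_i \alpha_i$. Invoking $\alpha_i = \sigma^\mathrm{out}_i/\bar{n}^\mathrm{ref}_i$ from (\ref{a_from_n}) turns the last term into $\sigma^\mathrm{out}_i$, which exactly cancels the boundary term $+\sigma^\mathrm{out}_i$ from the accounting identity, leaving $\bar{n}_i = \bar{n}^\mathrm{ref}_i/(\mu^\mathrm{in}_i \mu^\mathrm{out}_i)$, i.e.\ the claimed $\bar{\mathbf{n}} = \bar{\mathbf{n}}^\mathrm{ref} \div (\bm{\mu}_\mathrm{in} \circ \bm{\mu}_\mathrm{out})$.

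I expect the only genuinely delicate point to be the node--edge accounting identity and its boundary term: one must correctly handle the terminal occurrence of $i$ and the zero-length-path convention, and verify that the surviving $+\sigma^\mathrm{out}_i$ is \emph{precisely} cancelled by the killing-rate relation rather than, say, doubled or dropped. An equivalent route, using incoming edges and the identity $\eta(i \in \wp) = \sum_j \eta(j \to i \in \wp) + \eta_\mathrm{start}(i \in \wp)$ together with constraint (\ref{constr_in}), leads to the same answer with $\sigma^\mathrm{in}_i$ playing the cancelling role, and could serve as an independent check.
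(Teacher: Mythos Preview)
Your proof is correct and follows essentially the same approach as the paper: the same outgoing-edge decomposition $\eta(i\in\wp)=\sum_j\eta(i\to j\in\wp)+\delta(E(\wp)=i)$, the same resolvent identity $\widehat{\mathbf{W}}\mathbf{\widehat{Z}}=\mathbf{\widehat{Z}}-\mathbf{I}$ (stated there elementwise as $\sum_j \widehat{w}_{ij}\widehat{z}_{jl}=\widehat{z}_{il}-\delta_{il}$), and the same cancellation of $\sigma^\mathrm{out}_i$ via the killing-rate relation $\alpha_i=\sigma^\mathrm{out}_i/\bar{n}^\mathrm{ref}_i$ together with the fixed-point equations (\ref{it1})--(\ref{it2}). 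Your $u_i,v_j$ shorthand and the mention of the incoming-edge variant are minor presentational additions, not a different route.
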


\begin{proof}
In fact, we can decompose this quantity as 
\begin{equation}
\eta(i \in \wp) = \sum_{j \in \mathcal{V}} \eta(i \to j \in \wp) + \delta(E(\wp) = i), \label{eta_dev}
\end{equation}
where $\delta(E(\wp) = i)$ is one if $i$ is the ending node of the path $\wp$. This gives
\begin{align}
\bar{n}_i &= \sum_{j \in \mathcal{V}} \sum_{\wp \in \mathcal{P}} \mathrm{P}^\star(\wp) \eta(i \to j \in \wp) + \sum_{\wp \in \mathcal{P}} \mathrm{P}^\star(\wp) \delta(E(\wp) = i) = \sum_{j \in \mathcal{V}} \bar{n}_{ij} + \sigma^\mathrm{out}_i.
\label{n_i_sum}
\end{align}
By using (\ref{pass_ij_elem_h}) we find
\begin{equation*}
\bar{n}_i = \left( \sum_{k \in \mathcal{V}}  \mu^\mathrm{in}_k  \sigma^\mathrm{in}_{k} \widehat{z}_{ki} \right) \left( \sum_{l \in \mathcal{V}}  \mu^\mathrm{out}_l \alpha_l \sum_{j \in \mathcal{V}} \widehat{w}_{ij}\widehat{z}_{jl} \right) + \sigma^\mathrm{out}_i.
\end{equation*}
However, because $(\mathbf{I} - \widehat{\mathbf{W}}) \widehat{\mathbf{Z}} = \mathbf{I}$, $\sum_{j \in \mathcal{V}} \widehat{w}_{ij} \widehat{z}_{jl}  = \widehat{z}_{il} - \delta_{il}$ as shown in \cite[Equation (13)]{kivimaki2016two}; therefore
\begin{equation*}
\bar{n}_i = \sum_{k \in \mathcal{V}}  \mu^\mathrm{in}_k  \sigma^\mathrm{in}_{k} \widehat{z}_{ki}  \left(  \sum_{l \in \mathcal{V}}  \mu^\mathrm{out}_l \alpha_l \widehat{z}_{il} - \mu^\mathrm{out}_i \alpha_i \right) + \sigma^\mathrm{out}_i,
\end{equation*} 
which, by using (\ref{it1_elem})-(\ref{it2_elem}) and $\alpha_i = \sigma^\mathrm{out}_i / \bar{n}^\mathrm{ref}_i$  (see Proposition \ref{prop::piout_to_alpha}), provides the result
\begin{equation}
\bar{n}_i =  \frac{\bar{n}^\mathrm{ref}_i}{ \mu^\mathrm{in}_i \mu^\mathrm{out}_i }. \label{pass_i_elem_s}
\end{equation}
\end{proof}


\subsubsection{The optimal randomized policy}

The expected number of times edges are visited induces a \emph{biased} random walk on the network with transition matrix $\mathbf{P} \triangleq (p_{ij})$ provided by
\begin{equation}
p_{ij} \triangleq \frac{\bar{n}_{ij} }{\sum_{k \in \mathcal{V}} \bar{n}_{ik}}.
\label{eq:BiasedProbabilities_definition}
\end{equation}

\begin{proposition} 
The biased random walk transition matrix, $\mathbf{P}$, called the randomized routing policy, is provided by
\begin{equation}
\label{eq:BiasedProbabilities}
\mathbf{P} =  \widehat{\mathbf{W}}  \div \left( \left( (\mathbf{e} \div \bm{\mu}_\mathrm{in}) - (\bm{\mu}_\mathrm{out} \circ \bm{\alpha}) \right) \bm{\mu}_\mathrm{in}^\top \right).
\end{equation}
\end{proposition}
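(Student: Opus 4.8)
The plan is to start from the definition of the policy, $p_{ij} = \bar{n}_{ij} / \sum_{k \in \mathcal{V}} \bar{n}_{ik}$, and substitute the closed-form expressions already derived for the expected number of passages through an edge and through a node. The only genuinely new ingredient is an expression for the row sum $\sum_{k \in \mathcal{V}} \bar{n}_{ik}$, which I would read off directly from the identity (\ref{n_i_sum}) established in the proof of the expected-number-of-visits-to-a-node proposition. Since that identity reads $\bar{n}_i = \sum_{k \in \mathcal{V}} \bar{n}_{ik} + \sigma^\mathrm{out}_i$, the out-flow from node $i$ is simply $\sum_{k \in \mathcal{V}} \bar{n}_{ik} = \bar{n}_i - \sigma^\mathrm{out}_i$, so that $p_{ij} = \bar{n}_{ij} / (\bar{n}_i - \sigma^\mathrm{out}_i)$.

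Next I would insert the element-wise formulas (\ref{pass_ij_elem_s}) and (\ref{pass_i_elem_s}), namely $\bar{n}_{ij} = \bar{n}^\mathrm{ref}_i \widehat{w}_{ij} / (\mu^\mathrm{out}_i \mu^\mathrm{in}_j)$ and $\bar{n}_i = \bar{n}^\mathrm{ref}_i / (\mu^\mathrm{in}_i \mu^\mathrm{out}_i)$, and then eliminate $\sigma^\mathrm{out}_i$ using the relation $\sigma^\mathrm{out}_i = \alpha_i \bar{n}^\mathrm{ref}_i$ that comes from Proposition \ref{prop::piout_to_alpha} (where $\alpha_i = \sigma^\mathrm{out}_i / \bar{n}^\mathrm{ref}_i$). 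Factoring $\bar{n}^\mathrm{ref}_i / \mu^\mathrm{out}_i$ out of the denominator $\bar{n}_i - \sigma^\mathrm{out}_i$ and noting that the same factor appears in $\bar{n}_{ij}$, this common factor cancels between numerator and denominator, leaving
\[
p_{ij} = \frac{\widehat{w}_{ij}}{\left( \tfrac{1}{\mu^\mathrm{in}_i} - \mu^\mathrm{out}_i \alpha_i \right) \mu^\mathrm{in}_j}.
\]

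The last step is to recognize this element-wise identity as the announced matrix expression. The denominator is separable, i.e.\ a product of a factor depending only on $i$ and a factor depending only on $j$: the $i$-dependent factor $\tfrac{1}{\mu^\mathrm{in}_i} - \mu^\mathrm{out}_i \alpha_i$ is precisely the $i$-th entry of the column vector $(\mathbf{e} \div \bm{\mu}_\mathrm{in}) - (\bm{\mu}_\mathrm{out} \circ \bm{\alpha})$, while the $j$-dependent factor $\mu^\mathrm{in}_j$ is the $j$-th entry of $\bm{\mu}_\mathrm{in}^\top$. Hence the denominator is the rank-one outer product $\big( (\mathbf{e} \div \bm{\mu}_\mathrm{in}) - (\bm{\mu}_\mathrm{out} \circ \bm{\alpha}) \big) \bm{\mu}_\mathrm{in}^\top$, and dividing $\widehat{\mathbf{W}}$ element-wise by it yields exactly (\ref{eq:BiasedProbabilities}).

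I do not anticipate any real difficulty, since all the substantive work (the closed forms for $\bar{n}_{ij}$ and $\bar{n}_i$, and the link $\alpha_i = \sigma^\mathrm{out}_i / \bar{n}^\mathrm{ref}_i$) is already available from earlier results; the argument is essentially bookkeeping. The one point that requires care is the cancellation: one must factor $\bar{n}^\mathrm{ref}_i / \mu^\mathrm{out}_i$ out of $\bar{n}_i - \sigma^\mathrm{out}_i$ \emph{before} simplifying and verify that the residual $i$-dependence is confined to the separable factor, so that the outer-product structure emerges cleanly.
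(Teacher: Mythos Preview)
Your proposal is correct and follows essentially the same route as the paper's own proof: combine the definition $p_{ij} = \bar{n}_{ij}/\sum_k \bar{n}_{ik}$ with (\ref{pass_ij_elem_s}), (\ref{n_i_sum}), (\ref{pass_i_elem_s}) and $\alpha_i = \sigma^\mathrm{out}_i / \bar{n}^\mathrm{ref}_i$, cancel the common factor $\bar{n}^\mathrm{ref}_i/\mu^\mathrm{out}_i$, and identify the resulting separable denominator with the rank-one outer product in (\ref{eq:BiasedProbabilities}). Your elementwise expression $\widehat{w}_{ij}/\big((1/\mu^\mathrm{in}_i - \mu^\mathrm{out}_i\alpha_i)\mu^\mathrm{in}_j\big)$ is algebraically identical to the paper's $\mu^\mathrm{in}_i \widehat{w}_{ij}/\big(\mu^\mathrm{in}_j(1-\mu^\mathrm{in}_i\mu^\mathrm{out}_i\alpha_i)\big)$.
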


\begin{proof}
The result is obtained in elementwise form by combining expression (\ref{eq:BiasedProbabilities_definition}) with (\ref{pass_ij_elem_s}), (\ref{n_i_sum}) and (\ref{pass_i_elem_s}) and using $\alpha_i = \sigma^\mathrm{out}_i / \bar{n}^\mathrm{ref}_i$ (see Proposition \ref{prop::piout_to_alpha}):
\begin{equation*}
p_{ij} = \frac{\mu^\mathrm{in}_i \widehat{w}_{ij}}{\mu^\mathrm{in}_j (1 - \mu^\mathrm{in}_i \mu^\mathrm{out}_i \alpha_i)}.
\end{equation*}
\end{proof}
We can observe that when $\beta \to 0$, then $\bm{\mu}_\mathrm{in} \to \mathbf{e}$, $\bm{\mu}_\mathrm{out} \to \mathbf{e}$, $\widehat{\mathbf{W}} \to \mathbf{\widehat{P}}^\mathrm{ref}$ and we obtain
\begin{equation*}
p_{ij} \xrightarrow[T \to \infty]{} \frac{\widehat{p}^\mathrm{ref}_{ij}}{(1 - \alpha_i)} = p_{ij}^\mathrm{ref},
\end{equation*}
as it should be.

This biased random walk is the \emph{optimal policy} that has to be followed for reaching nodes in $\mathcal{O}ut$ from nodes in $\mathcal{I}n$, and can be interpreted as follows. When $\beta \to 0$, the behavior becomes similar to the random walk defined by $\mathbf{P}^\mathrm{ref}$, but as $\beta$ increases, random walkers are more and more ``attracted" by high $\sigma^\mathrm{out}_i$ nodes. These ``pools of attraction", whose sizes are related to the components of $\bm{\sigma}_\mathrm{out}$, get less and less ``fuzzy" as $\beta$ increases, eventually forcing walkers to adopt quasi-deterministic, optimal, paths following the solution of the optimal transport on a graph problem (see Subsection \ref{sec::speed}). Therefore, this framework can be viewed as an extension of standard electrical networks, interpolating between an optimal behavior based on least cost paths and a random behavior based on the reference probabilities \cite{guex2016interpolating}.


\section{The margin-constrained bag-of-hitting-paths formalism}
\label{Sec_formalism_hitting_paths01}

In \cite{franccoisse2016bag}, the BoP formalism was defined for regular paths (as in previous section) as well as for hitting paths, i.e.\ paths where the final node $j$ appears only once as last node of the path. In this section, we will now consider the margin-constrained problem for \emph{hitting paths} and define, accordingly, the \emph{margin-constrained bag-of-hitting-paths framework} (abbreviated as cBoHP). This new model will yield interesting properties and analogies with other models, and will require less computation time in comparison to the non-hitting, regular, bag-of-paths model considered so far in this work. We will see that while both models are similar when $T \to 0$, they are quite different when $T \to \infty$, and that the hitting formalism has a somewhat more straightforward solution. Nevertheless, the hitting paths assumption can prove more practical and appropriate in practice. But, of course, the choice of whether to consider non-hitting or hitting paths depends on the application.


\subsection{Problem definition}

Let  $\mathcal{P}^\mathrm{h}_{ij}$ be the set of all \emph{hitting paths} starting in $i$ and ending in $j$, i.e.\ all paths $\wp_{ij} = (i_0,\dots,i_t)$ where $i_0 = i$, $i_t = j$ and $i_\tau \neq j, \forall \tau \neq t$ where $t$ is the length of the path. This means that, technically, the ending node is turned into a killing, absorbing, node from which we cannot escape \cite{franccoisse2016bag,kivimaki2014developments,fouss2016algorithms}. We define the set of all hitting paths, also named \emph{bag-of-hitting-paths}, by $\mathcal{P}_\mathrm{h} = \cup_{i,j \in \mathcal{V}} \mathcal{P}^\mathrm{h}_{ij}$. By analogy with (\ref{init_prob}), the problem here is to find the \emph{optimal hitting paths probability distribution}, $\mathrm{P}^\star_\mathrm{h}(\wp)$, solving
\begin{equation}
\label{init_prob_h}
\vline \begin{array}{ll@{}ll}
\underset{\{ \mathrm{P}_\mathrm{h}(\wp) \}_{\wp \in \mathcal{P}_\mathrm{h}}}{\text{minimize}} & \text{FE}_\mathrm{h}(\mathrm{P}_\mathrm{h}) =  \displaystyle\sum\limits_{\wp \in \mathcal{P}_\mathrm{h}} \mathrm{P}_\mathrm{h}(\wp) \tilde{c}(\wp) + T \sum_{\wp \in \mathcal{P}_\mathrm{h}} \mathrm{P}_\mathrm{h}(\wp) \log \left( \frac{\mathrm{P}_\mathrm{h}(\wp)}{\mathrm{P_\mathrm{h}^\mathrm{ref}}(\wp)} \right) \\
\text{subject to} & \sum_{j \in \mathcal{V}} \sum_{\wp_{ij} \in \mathcal{P}^\mathrm{h}_{ij}} \mathrm{P}_\mathrm{h}(\wp_{ij}) = \sigma^\mathrm{in}_i \qquad \forall i \in \mathcal{V},\\
          & \sum_{i \in \mathcal{V}} \sum_{\wp_{ij} \in \mathcal{P}^\mathrm{h}_{ij}} \mathrm{P}_\mathrm{h}(\wp_{ij}) = \sigma^\mathrm{out}_j \qquad \forall j \in \mathcal{V}. \\ 
\end{array}
\end{equation}
As probabilities for regular paths containing the final node more than once converge to zero in the non-hitting formalism when $T \to 0$ (it is sub-optimal to visit several times the same node), we easily see that both problems are equivalent at this limit. However, this is not the case when $T \to \infty$, due to the difference in reference probabilities between the two models and the structure of the paths, as shown in the next section.

Yet another important difference between the hitting and the non-hitting formulations is that the former model \emph{is equivalent to the standard entropy regularized optimal transport problem} \cite{Wilson-1970,Erlander-1990,Cuturi2013} based on the directed free energy distance (or potential) between nodes \cite{kivimaki2014developments,franccoisse2016bag,fouss2016algorithms}, which can easily be pre-computed for the whole graph. This is detailed in Section \ref{sec::transp} (see Equation (\ref{altern_prob_hitting})). A last difference is that the hitting paths formulation does not need the pre-processing step computing the reference probabilities with fixed margins described in Section
\ref{Sec_non_hitting_reference_probabilities01} and Algorithm \ref{Alg_kill_prob}.


\subsection{Reference probabilities with fixed margins}

Finding reference probabilities $\mathrm{P}_\mathrm{h}^\mathrm{ref}(\wp)$ is at the heart of the difference between both formalisms, and is greatly simplified in the hitting case. In fact, it was shown in \cite{franccoisse2016bag} that the sum of likelihoods over all hitting paths between two nodes $i$ and $j$ is always equal to $1$. In other words,
\begin{equation*}
\sum_{\wp \in \mathcal{P}^\mathrm{h}_{ij}} \tilde{\pi}^\mathrm{ref}(\wp) = 1, 
\end{equation*}
where $\tilde{\pi}^\mathrm{ref}(\wp) \triangleq \Pi_{\tau = 1}^t p^\mathrm{ref}_{i_{\tau-1},i_{\tau}}$ and with the $p^\mathrm{ref}_{ij}$ being the reference transition probabilities (\ref{Eq_define_transition_probabilities_elementwise01}). From this, it is easy to observe that the reference probability defined by
\begin{equation}
\mathrm{P}^\mathrm{ref}_\mathrm{h}(\wp_{ij}) \triangleq \sigma^\mathrm{in}_i \sigma^\mathrm{out}_j \tilde{\pi}^\mathrm{ref}(\wp_{ij}), 
\label{prob_ref_h}
\end{equation}
yields the correct margins as expressed in the constraints of problem (\ref{init_prob_h}).


\subsection{Computation of the optimal probability distribution over paths}

The reasoning for finding the solution follows the same rationale as before for the non-hitting case; therefore the details of the proofs are not repeated in this section. The main difference lies in the replacement of the reference probabilities with (\ref{prob_ref_h}), and the following new expression for hitting paths (the equivalent of Equation (\ref{Eq_fundamental_matrix_regular_paths01}))
\begin{equation}
\sum_{\wp_{ij} \in \mathcal{P}^\mathrm{h}_{ij}} \tilde{\pi}^\mathrm{ref}(\wp_{ij}) \exp( - \beta \tilde{c}(\wp_{ij})) = z^\mathrm{h}_{ij},
\end{equation}
with $\mathbf{Z}_\mathrm{h} = (z^\mathrm{h}_{ij})$ being the \emph{fundamental matrix for hitting paths} as introduced in (\cite{kivimaki2014developments}, Equation (12)), obtained through
\begin{equation}
\mathbf{Z}_\mathrm{h} \triangleq  \mathbf{Z} \mathbf{D}_\mathrm{h}^{-1},
\end{equation}
where $\mathbf{Z} \triangleq (\mathbf{I} - \mathbf{W})^{-1}$, $\mathbf{W} \triangleq \mathbf{P}^\mathrm{ref} \circ \exp[-\beta \mathbf{C}]$, and $\mathbf{D}_\mathrm{h}\triangleq \mathbf{Diag}(\mathbf{Z})$, the diagonal matrix containing the main diagonal of $\mathbf{Z}$. Elementwise, we have $z^\mathrm{h}_{ij} = z_{ij} / z_{jj}$.

\subsubsection{The optimal hitting-paths probabilities}

The optimal hitting-paths probabilities are obtained with the following proposition.

\begin{proposition}
\label{prop::optimal_sol_h}
When the set of paths $\mathcal{P}$ is restricted to the set of hitting paths $\mathcal{P}_{\mathrm{h}}$ and the reference path probabilities $\mathrm{P}^{\mathrm{ref}}(\wp)$ are defined according to Equation $(\ref{prob_ref_h})$, then the minimization problem $(\ref{init_prob_h})$ is solved by
\begin{equation}
\mathrm{P}^\star_\mathrm{h}(\wp_{ij}) = \mu^\mathrm{h,in}_{i}  \sigma^\mathrm{in}_{i} \mu^\mathrm{h,out}_{j} \sigma^\mathrm{out}_{j}  \tilde{\pi}^\mathrm{ref}(\wp_{ij}) \exp(-\beta \tilde{c}(\wp_{ij} )),\label{prob_ref_comp_h} \\
\end{equation}
where $\beta \triangleq 1 / T$ is the inverse temperature parameter, and $\mu^\mathrm{h,in}_i \triangleq \exp( - \beta \lambda^\mathrm{h,in}_i)$, $\mu^\mathrm{h,out}_j \triangleq \exp( - \beta \lambda^\mathrm{h,out}_j)$ are two vectors derived from the Lagrange parameter vectors $\bm{\lambda}_\mathrm{h,in}$, $\bm{\lambda}_\mathrm{h,out}$, associated with the constraints.
\end{proposition}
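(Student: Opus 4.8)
The plan is to mirror the proof of Proposition \ref{prop::optimal_sol} for regular paths, since problem (\ref{init_prob_h}) has exactly the same convex structure: the expected-cost term is linear in the path probabilities, the relative-entropy term is convex, and both families of margin constraints are linear. Consequently the Karush--Kuhn--Tucker conditions are both necessary and sufficient for the global minimizer, and it suffices to form the Lagrangian, differentiate, and solve.

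First I would introduce Lagrange multiplier vectors $\bm{\lambda}_\mathrm{h,in}$ and $\bm{\lambda}_\mathrm{h,out}$ for the two constraint families in (\ref{init_prob_h}), writing the Lagrangian exactly as in (\ref{lagrangian}) but with all sums ranging over the set of hitting paths $\mathcal{P}_\mathrm{h}$ and over the subsets $\mathcal{P}^\mathrm{h}_{ij}$. As there, I would shift each multiplier by $T/2$; this is a purely cosmetic reparametrization whose only role is to cancel the additive constant $T$ produced when differentiating the entropy term. I would then take the partial derivative with respect to a single path probability $\mathrm{P}_\mathrm{h}(\wp_{ij})$. The key structural fact, identical to the regular case, is that every hitting path $\wp_{ij}\in\mathcal{P}^\mathrm{h}_{ij}$ has a uniquely determined starting node $i$ and ending node $j$, so it contributes to exactly one term of the in-constraint and one term of the out-constraint. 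Setting the derivative to zero and writing $\beta=1/T$ yields the Gibbs--Boltzmann form
$$\mathrm{P}^\star_\mathrm{h}(\wp_{ij}) = \mu^\mathrm{h,in}_{i}\,\mu^\mathrm{h,out}_{j}\,\mathrm{P}^\mathrm{ref}_\mathrm{h}(\wp_{ij})\exp(-\beta\tilde{c}(\wp_{ij})),$$
with $\mu^\mathrm{h,in}_i$ and $\mu^\mathrm{h,out}_j$ defined as the claimed exponentials of the multipliers.

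Substituting the hitting reference probability (\ref{prob_ref_h}), namely $\mathrm{P}^\mathrm{ref}_\mathrm{h}(\wp_{ij}) = \sigma^\mathrm{in}_i\sigma^\mathrm{out}_j\tilde{\pi}^\mathrm{ref}(\wp_{ij})$, then delivers (\ref{prob_ref_comp_h}) directly. I would also note, as in the remark preceding the regular derivation, that this expression is automatically nonnegative, being a product of exponentials and nonnegative factors, so the implicit nonnegativity constraint never becomes active and may be omitted.

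I expect no serious obstacle here: the argument is essentially a transcription of the non-hitting case, the only differences being the replacement of the reference distribution by (\ref{prob_ref_h}) and the restriction of all path sums to $\mathcal{P}_\mathrm{h}$. The single point worth verifying carefully is that the hitting reference probability (\ref{prob_ref_h}) genuinely carries the prescribed margins --- which rests on the identity $\sum_{\wp\in\mathcal{P}^\mathrm{h}_{ij}}\tilde{\pi}^\mathrm{ref}(\wp)=1$ recalled just above (\ref{prob_ref_h}) --- so that the constraints in (\ref{init_prob_h}) are feasible and the Lagrangian framework applies. As in the regular case, the actual determination of the multipliers $\bm{\mu}_\mathrm{h,in}$ and $\bm{\mu}_\mathrm{h,out}$ from the constraints is a separate matter, deferred to the subsequent computational proposition.
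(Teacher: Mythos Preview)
Your proposal is correct and follows exactly the approach the paper takes: the paper's own proof of Proposition~\ref{prop::optimal_sol_h} is simply the one-line remark that it is similar to the proof of Proposition~\ref{prop::optimal_sol}, and you have carefully spelled out precisely that transcription, including the Lagrangian setup, the $T/2$ shift, the stationarity computation, and the substitution of the hitting reference (\ref{prob_ref_h}). Your additional remarks on convexity, automatic nonnegativity, and feasibility of the constraints are sound and more explicit than what the paper records.
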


\begin{proof}
The proof is similar to the proof of Proposition \ref{prop::optimal_sol}.
\end{proof}

\subsubsection{Computing the Lagrange parameters}

\begin{proposition} 
\label{prop::dual_vectors_h}
The two vectors $\bm{\mu}_\mathrm{h,in}$ and $\bm{\mu}_\mathrm{h,out}$, defined by $\mu^\mathrm{h,in}_i = \exp( - \beta \lambda^\mathrm{h,in}_i)$ and $\mu^\mathrm{h,out}_i = \exp( - \beta \lambda^\mathrm{h,out}_i)$, verify the following expressions
\begin{eqnarray}
\bm{\mu}^\mathrm{h}_\mathrm{in} = \mathbf{e} \div \left( \mathbf{Z}_\mathrm{h} \left( \bm{\mu}^\mathrm{h}_\mathrm{out} \circ \bm{\sigma}_\mathrm{out} \right) \right), \label{it1_h} \\
\bm{\mu}^\mathrm{h}_\mathrm{out} = \mathbf{e} \div \left( \mathbf{Z}_\mathrm{h}^\top \left( \bm{\mu}^\mathrm{h}_\mathrm{in} \circ \bm{\sigma}_\mathrm{in} \right) \right).
\label{it2_h}
\end{eqnarray} 
\end{proposition}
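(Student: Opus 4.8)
The plan is to mirror the argument of Proposition~\ref{prop::dual_vectors}, enforcing the two margin constraints of problem~(\ref{init_prob_h}) directly on the closed-form optimal distribution~(\ref{prob_ref_comp_h}). First I would substitute $\mathrm{P}^\star_\mathrm{h}(\wp_{ij})$ into the starting-node constraint $\sum_{j \in \mathcal{V}} \sum_{\wp_{ij} \in \mathcal{P}^\mathrm{h}_{ij}} \mathrm{P}^\star_\mathrm{h}(\wp_{ij}) = \sigma^\mathrm{in}_i$. Since the factors $\mu^\mathrm{h,in}_i$, $\sigma^\mathrm{in}_i$, $\mu^\mathrm{h,out}_j$ and $\sigma^\mathrm{out}_j$ do not depend on the individual path, they pull out of the inner path sum, leaving $\sum_{\wp_{ij} \in \mathcal{P}^\mathrm{h}_{ij}} \tilde{\pi}^\mathrm{ref}(\wp_{ij}) \exp(-\beta \tilde{c}(\wp_{ij}))$, which equals $z^\mathrm{h}_{ij}$ by the hitting-paths identity established just before this proposition. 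The constraint then reads $\mu^\mathrm{h,in}_i \sigma^\mathrm{in}_i \sum_{j \in \mathcal{V}} z^\mathrm{h}_{ij} \mu^\mathrm{h,out}_j \sigma^\mathrm{out}_j = \sigma^\mathrm{in}_i$.

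For every node with $\sigma^\mathrm{in}_i > 0$ I would cancel the common factor $\sigma^\mathrm{in}_i$ from both sides, isolate $\mu^\mathrm{h,in}_i$, and recognize the remaining sum $\sum_{j \in \mathcal{V}} z^\mathrm{h}_{ij} (\mu^\mathrm{h,out}_j \sigma^\mathrm{out}_j)$ as the $i$-th entry of $\mathbf{Z}_\mathrm{h}(\bm{\mu}^\mathrm{h}_\mathrm{out} \circ \bm{\sigma}_\mathrm{out})$, which yields~(\ref{it1_h}) in elementwise-division form. The ending-node constraint is handled identically: substituting~(\ref{prob_ref_comp_h}) and summing over $i$ gives $\mu^\mathrm{h,out}_j \sigma^\mathrm{out}_j \sum_{i \in \mathcal{V}} z^\mathrm{h}_{ij} \mu^\mathrm{h,in}_i \sigma^\mathrm{in}_i = \sigma^\mathrm{out}_j$; cancelling $\sigma^\mathrm{out}_j$ for $\sigma^\mathrm{out}_j > 0$ and reading the sum $\sum_{i \in \mathcal{V}} z^\mathrm{h}_{ij} (\mu^\mathrm{h,in}_i \sigma^\mathrm{in}_i)$ as the $j$-th entry of $\mathbf{Z}_\mathrm{h}^\top(\bm{\mu}^\mathrm{h}_\mathrm{in} \circ \bm{\sigma}_\mathrm{in})$ produces~(\ref{it2_h}).

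The derivation is essentially routine, so the only point requiring care, rather than a genuine obstacle, is the comparison with the non-hitting case. In Proposition~\ref{prop::dual_vectors} the reference-probability factor was $\alpha_j$ and the second fixed-point equation carried a $\bar{\mathbf{n}}^\mathrm{ref}$ term, whereas here the hitting reference probability~(\ref{prob_ref_h}) contributes $\sigma^\mathrm{out}_j$ directly. Because this factor cancels cleanly against the right-hand side of the ending-node constraint, both equations collapse to the symmetric form with $\mathbf{e}$ in the numerator and no expected-number-of-visits vector survives. As in the non-hitting proof, the argument is valid only for nodes with positive margins; for nodes with $\sigma^\mathrm{in}_i = 0$ or $\sigma^\mathrm{out}_j = 0$ the corresponding path probabilities vanish, so the associated $\mu^\mathrm{h,in}_i$ or $\mu^\mathrm{h,out}_j$ may be fixed to $1$ by the same convention without affecting any quantity of interest.
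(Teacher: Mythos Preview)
Your proposal is correct and follows exactly the approach the paper intends: the paper's own proof merely states that it is ``similar to the proof of Proposition~\ref{prop::dual_vectors}, but by replacing the matrix $\mathbf{Z}$ with $\mathbf{Z}_\mathrm{h}$,'' and you have spelled out precisely those details, including the correct observation that the factor $\sigma^\mathrm{out}_j$ from the hitting reference probability~(\ref{prob_ref_h}) cancels against the right-hand side so that both fixed-point equations carry $\mathbf{e}$ rather than $\bar{\mathbf{n}}^\mathrm{ref}$.
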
 
\begin{proof}
The proof is similar to the proof of Proposition \ref{prop::dual_vectors}, but by replacing the matrix $\mathbf{Z}$ with $\mathbf{Z}_\mathrm{h}$.
\end{proof}

As for regular paths, these two expressions are recomputed iteratively until convergence. Note that in the hitting paths case, the matrix that needs to be rescaled in order to satisfy the margin constraints is
\begin{equation}
\mathbf{T}_{\mathrm{h}} = \mathbf{Diag}(\bm{\sigma}_\mathrm{in}) \mathbf{Z}_{\mathrm{h}} \mathbf{Diag}(\bm{\sigma}_\mathrm{out}).
\label{Eq_T_matrix_hitting01}
\end{equation}

The computation of the Lagrange multipliers and the derived quantities for the hitting bag-of-paths model is summarized in Algorithm \ref{Alg_h_bop}.


\subsection{Computation of other important quantities}

The computation of the other interesting quantities is slightly different in the hitting formalism, and shows some interesting new properties. They are reviewed in this section.


\subsubsection{The coupling matrix}
\label{Subsec_coupling_matrix_hitting_paths01}

\begin{proposition} 
\label{prop::coupling_h}
The coupling matrix for hitting paths $\mathbf{\Gamma}_\mathrm{h} = (\gamma^\mathrm{h}_{ij})$ is given by 
\begin{equation}
\mathbf{\Gamma}_\mathrm{h} = \mathbf{Diag}(\bm{\mu}^\mathrm{h}_\mathrm{in} \circ \bm{\sigma}_\mathrm{in}) \; \mathbf{Z}_\mathrm{h} \; \mathbf{Diag}(\bm{\mu}^\mathrm{h}_\mathrm{out} \circ \bm{\sigma}_\mathrm{out}). \label{pi_h}
\end{equation}
\end{proposition}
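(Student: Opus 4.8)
The plan is to follow the same route as in the proof of Proposition~\ref{prop::coupling}, transporting each ingredient to its hitting-path counterpart. First I would start from the definition of the coupling entry as the marginal of the optimal hitting-path distribution over all hitting paths joining $i$ to $j$, namely $\gamma^\mathrm{h}_{ij} = \mathrm{P}^\star_\mathrm{h}(S=i, E=j) = \sum_{\wp_{ij} \in \mathcal{P}^\mathrm{h}_{ij}} \mathrm{P}^\star_\mathrm{h}(\wp_{ij})$, and substitute the explicit form of the optimal probabilities supplied by Proposition~\ref{prop::optimal_sol_h}, that is $\mathrm{P}^\star_\mathrm{h}(\wp_{ij}) = \mu^\mathrm{h,in}_i \sigma^\mathrm{in}_i \mu^\mathrm{h,out}_j \sigma^\mathrm{out}_j\, \tilde{\pi}^\mathrm{ref}(\wp_{ij}) \exp(-\beta \tilde{c}(\wp_{ij}))$.

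The key step is then to notice that the four scalar factors $\mu^\mathrm{h,in}_i$, $\sigma^\mathrm{in}_i$, $\mu^\mathrm{h,out}_j$ and $\sigma^\mathrm{out}_j$ depend only on the endpoints $i$ and $j$, not on the particular path $\wp_{ij}$, so they factor out of the sum over $\mathcal{P}^\mathrm{h}_{ij}$. What remains inside is precisely $\sum_{\wp_{ij} \in \mathcal{P}^\mathrm{h}_{ij}} \tilde{\pi}^\mathrm{ref}(\wp_{ij}) \exp(-\beta \tilde{c}(\wp_{ij}))$, which by the hitting-path identity recalled just before this subsection equals $z^\mathrm{h}_{ij} = [\mathbf{Z}_\mathrm{h}]_{ij}$. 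This yields the elementwise expression $\gamma^\mathrm{h}_{ij} = \mu^\mathrm{h,in}_i \sigma^\mathrm{in}_i\, z^\mathrm{h}_{ij}\, \mu^\mathrm{h,out}_j \sigma^\mathrm{out}_j$.

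Finally I would recast this in matrix form: the left factor $\mu^\mathrm{h,in}_i \sigma^\mathrm{in}_i$ becomes the diagonal matrix $\mathbf{Diag}(\bm{\mu}^\mathrm{h}_\mathrm{in} \circ \bm{\sigma}_\mathrm{in})$ rescaling the rows of $\mathbf{Z}_\mathrm{h}$, while the right factor $\mu^\mathrm{h,out}_j \sigma^\mathrm{out}_j$ becomes $\mathbf{Diag}(\bm{\mu}^\mathrm{h}_\mathrm{out} \circ \bm{\sigma}_\mathrm{out})$ rescaling the columns, giving exactly~(\ref{pi_h}). The derivation is essentially mechanical; the only place demanding attention — the ``main obstacle'' — is ensuring the correct building blocks replace their regular-path analogues. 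Here the killing rate $\alpha_j$ of~(\ref{sol_pi_mat}) is replaced by the output margin $\sigma^\mathrm{out}_j$ (a direct consequence of the different reference probability~(\ref{prob_ref_h})), and the killed fundamental matrix $\mathbf{\widehat{Z}}$ is replaced by the hitting fundamental matrix $\mathbf{Z}_\mathrm{h}$. Once these substitutions are made, the argument is identical to the non-hitting case.
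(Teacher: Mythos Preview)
Your proposal is correct and follows essentially the same approach as the paper, which simply states that the proof is similar to that of Proposition~\ref{prop::coupling}. You have accurately spelled out that argument with the appropriate hitting-path substitutions ($\bm{\alpha} \to \bm{\sigma}_\mathrm{out}$ and $\mathbf{\widehat{Z}} \to \mathbf{Z}_\mathrm{h}$), which is exactly what the paper intends.
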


\begin{proof}
The proof is similar to the proof of Proposition \ref{prop::coupling}.
\end{proof}

Notice that, for the hitting paths case, an alternative way of obtaining the coupling matrix by pre-computing the directed free energy distances is discussed in Subsection \ref{Subsec_regularized_optimal_transport01} (see Equation (\ref{altern_prob_hitting})). 


\subsubsection{The optimal free energy}

\begin{proposition} 
\label{prop::optimal_FE_h}
The value of the free energy at the optimal distribution $\mathrm{P}_{\mathrm{h}}^\star$ is
\begin{equation}
\mathrm{FE}_\mathrm{h}(\mathrm{P}^\star_\mathrm{h}) = - \left(  (\bm{\lambda}_\mathrm{in}^\mathrm{h})^\top \bm{\sigma}_\mathrm{in} + (\bm{\lambda}_\mathrm{out}^\mathrm{h})^\top \bm{\sigma}_\mathrm{out} \right).
\label{FE_h}
\end{equation}
\end{proposition}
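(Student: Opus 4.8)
The plan is to follow the same route as in the proof of Proposition~\ref{prop::optimal_FE}, substituting the closed form of the optimal distribution into the free-energy functional and then exploiting the fact that the Lagrange parameters depend only on the two endpoints of a path. First I would recall from Proposition~\ref{prop::optimal_sol_h} that the optimizer factorizes as $\mathrm{P}^\star_\mathrm{h}(\wp_{ij}) = \mu^\mathrm{h,in}_{i} \mu^\mathrm{h,out}_{j} \, \mathrm{P}^\mathrm{ref}_\mathrm{h}(\wp_{ij}) \exp(-\beta \tilde{c}(\wp_{ij}))$. Taking logarithms and using $\mu^\mathrm{h,in}_i = \exp(-\beta \lambda^\mathrm{h,in}_i)$ and $\mu^\mathrm{h,out}_j = \exp(-\beta \lambda^\mathrm{h,out}_j)$ gives the key identity
\[
\log\!\left( \frac{\mathrm{P}^\star_\mathrm{h}(\wp_{ij})}{\mathrm{P}^\mathrm{ref}_\mathrm{h}(\wp_{ij})} \right) = -\beta \left( \tilde{c}(\wp_{ij}) + \lambda^\mathrm{h,in}_i + \lambda^\mathrm{h,out}_j \right).
\]

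The second step is to insert this into the entropy term of $\text{FE}_\mathrm{h}$ from problem~(\ref{init_prob_h}). Multiplying the identity by the temperature $T = 1/\beta$, the relative-entropy contribution of each hitting path becomes $-\mathrm{P}^\star_\mathrm{h}(\wp_{ij})\big(\tilde{c}(\wp_{ij}) + \lambda^\mathrm{h,in}_i + \lambda^\mathrm{h,out}_j\big)$. The $\tilde{c}(\wp_{ij})$ piece then cancels exactly against the expected-cost term $\sum_{\wp} \mathrm{P}^\star_\mathrm{h}(\wp)\tilde{c}(\wp)$, leaving
\[
\text{FE}_\mathrm{h}(\mathrm{P}^\star_\mathrm{h}) = - \sum_{i,j \in \mathcal{V}} \sum_{\wp_{ij} \in \mathcal{P}^\mathrm{h}_{ij}} \mathrm{P}^\star_\mathrm{h}(\wp_{ij}) \left( \lambda^\mathrm{h,in}_i + \lambda^\mathrm{h,out}_j \right).
\]

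The final step is a regrouping of this sum. Since $\lambda^\mathrm{h,in}_i$ and $\lambda^\mathrm{h,out}_j$ depend only on the starting and ending nodes and not on the interior of the path, I would factor them out of the inner sum over $\mathcal{P}^\mathrm{h}_{ij}$ and split the expression into $-\sum_i \lambda^\mathrm{h,in}_i \big(\sum_j \sum_{\wp_{ij} \in \mathcal{P}^\mathrm{h}_{ij}} \mathrm{P}^\star_\mathrm{h}(\wp_{ij})\big) - \sum_j \lambda^\mathrm{h,out}_j \big(\sum_i \sum_{\wp_{ij} \in \mathcal{P}^\mathrm{h}_{ij}} \mathrm{P}^\star_\mathrm{h}(\wp_{ij})\big)$. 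Invoking the two margin constraints of problem~(\ref{init_prob_h}), which state that the bracketed inner sums equal $\sigma^\mathrm{in}_i$ and $\sigma^\mathrm{out}_j$ respectively, immediately produces $\text{FE}_\mathrm{h}(\mathrm{P}^\star_\mathrm{h}) = -\big((\bm{\lambda}_\mathrm{in}^\mathrm{h})^\top \bm{\sigma}_\mathrm{in} + (\bm{\lambda}_\mathrm{out}^\mathrm{h})^\top \bm{\sigma}_\mathrm{out}\big)$, as claimed.

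There is no genuine obstacle here: the computation is a verbatim adaptation of Proposition~\ref{prop::optimal_FE} to the hitting-path setting, and the only thing to verify is that the restriction to hitting paths $\mathcal{P}^\mathrm{h}_{ij}$ alters neither the endpoint-factorized form of the optimizer nor the shape of the constraints, so that the cancellation of the cost term and the subsequent application of the margins carry over unchanged. The single point worth a sanity check is that the $T/2$ shift of the multipliers introduced in the Lagrangian~(\ref{lagrangian}) is treated consistently; but, as in the regular case, this shift is exactly what removes the spurious $+T$ arising from differentiating $T\,\mathrm{P}\log\mathrm{P}$, and it is absorbed into the definitions of $\bm{\lambda}_\mathrm{in}^\mathrm{h}$ and $\bm{\lambda}_\mathrm{out}^\mathrm{h}$, so it leaves the final expression untouched.
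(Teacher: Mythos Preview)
Your proposal is correct and follows exactly the approach the paper takes: the paper's own proof simply states that it is similar to the proof of Proposition~\ref{prop::optimal_FE}, and that proof consists precisely of substituting $\mathrm{P}^\star(\wp_{ij}) = \mu^\mathrm{in}_{i} \mu^\mathrm{out}_{j} \mathrm{P}^\mathrm{ref}(\wp_{ij}) \exp(-\beta \tilde{c}(\wp_{ij}))$ into the free-energy functional, cancelling the cost term against the entropy term, and invoking the margin constraints to obtain the inner products with $\bm{\sigma}_\mathrm{in}$ and $\bm{\sigma}_\mathrm{out}$. Your write-up is in fact more explicit than the paper's, but the underlying argument is identical.
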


\begin{proof}
The proof is similar to the proof of Proposition \ref{prop::optimal_FE}.
\end{proof}


\subsubsection{The expected number of visits to an edge}

\begin{proposition}
The matrix $\mathbf{\bar{N}_\mathrm{h}}$, containing the expected number of times an edge $(i,j)$ appears on a hitting path, is given by 
\begin{equation}
\mathbf{\bar{N}}_\mathrm{h} = \left( (\mathbf{e}  \div \bm{\mu}^\mathrm{h}_\mathrm{out}) (\mathbf{e} \div \bm{\mu}^\mathrm{h}_\mathrm{in})^\top - (\mathbf{Z}_\mathrm{h} \mathbf{Diag}(\bm{\sigma}_\mathrm{out})\mathbf{Z}_\mathrm{h})^\top \right) \circ (\mathbf{Diag}(\mathbf{Z}) \mathbf{W}),
\end{equation}
where $\mathbf{Diag}(\mathbf{Z})$ is a diagonal matrix containing the main diagonal of $\mathbf{Z}$ and $\circ$ is the elementwise matrix product.
\end{proposition}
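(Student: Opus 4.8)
The plan is to mirror the differentiation-of-the-fundamental-matrix argument already used for the regular paths case, but to account carefully for the fact that the hitting fundamental matrix is the column-rescaled object $z^\mathrm{h}_{ij} = z_{ij}/z_{jj}$. First I would invoke additivity of the cost along a path to write $\eta(i \to j \in \wp_{kl}) = \partial \tilde{c}(\wp_{kl})/\partial c_{ij}$, so that inserting the optimal hitting probabilities (\ref{prob_ref_comp_h}) and differentiating the generating identity $\sum_{\wp_{kl}} \tilde{\pi}^\mathrm{ref}(\wp_{kl}) \exp(-\beta \tilde{c}(\wp_{kl})) = z^\mathrm{h}_{kl}$ turns the expected number of passages into
\begin{equation*}
\bar{n}_{ij} = -\frac{1}{\beta} \sum_{k,l \in \mathcal{V}} \mu^\mathrm{h,in}_k \sigma^\mathrm{in}_k \, \mu^\mathrm{h,out}_l \sigma^\mathrm{out}_l \, \frac{\partial z^\mathrm{h}_{kl}}{\partial c_{ij}},
\end{equation*}
exactly as in the regular case but with $z^\mathrm{h}_{kl}$ in place of $\widehat{z}_{kl}$.

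The crucial and genuinely new step — the one I expect to be the main obstacle — is computing $\partial z^\mathrm{h}_{kl}/\partial c_{ij}$. Starting from $z^\mathrm{h}_{kl} = z_{kl}/z_{ll}$ and the standard direct-calculus identity $\partial z_{kl}/\partial c_{ij} = -\beta w_{ij} z_{ki} z_{jl}$ for $\mathbf{Z} = (\mathbf{I}-\mathbf{W})^{-1}$, the quotient rule produces two contributions,
\begin{equation*}
\frac{\partial z^\mathrm{h}_{kl}}{\partial c_{ij}} = -\beta w_{ij} \, z^\mathrm{h}_{jl} \left( z_{ki} - z^\mathrm{h}_{kl} \, z_{li} \right),
\end{equation*}
where the subtractive term originates from differentiating the denominator $z_{ll}$ and has no analogue in the non-hitting derivation. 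This extra term is precisely what will later generate the matrix $\mathbf{Z}_\mathrm{h}\mathbf{Diag}(\bm{\sigma}_\mathrm{out})\mathbf{Z}_\mathrm{h}$ appearing in the claim.

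Substituting this derivative splits $\bar{n}_{ij}$ into two double sums carrying the common prefactor $w_{ij}$. In each sum I would collapse the inner summations using the fixed-point equations (\ref{it1_h})–(\ref{it2_h}) in elementwise form, namely $\sum_k \mu^\mathrm{h,in}_k \sigma^\mathrm{in}_k z^\mathrm{h}_{ki} = 1/\mu^\mathrm{h,out}_i$ and $\sum_l \mu^\mathrm{h,out}_l \sigma^\mathrm{out}_l z^\mathrm{h}_{jl} = 1/\mu^\mathrm{h,in}_j$, together with the factorization $z_{ki} = z^\mathrm{h}_{ki} z_{ii}$ (and likewise $z_{li} = z^\mathrm{h}_{li} z_{ii}$) that extracts a common factor $z_{ii}$. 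The first sum then reduces to $z_{ii}/(\mu^\mathrm{h,out}_i \mu^\mathrm{h,in}_j)$, while the second, after eliminating the sum over $k$ via $\sum_k \mu^\mathrm{h,in}_k \sigma^\mathrm{in}_k z^\mathrm{h}_{kl} = 1/\mu^\mathrm{h,out}_l$, collapses to $z_{ii} \sum_l z^\mathrm{h}_{jl} \sigma^\mathrm{out}_l z^\mathrm{h}_{li} = z_{ii} (\mathbf{Z}_\mathrm{h}\mathbf{Diag}(\bm{\sigma}_\mathrm{out})\mathbf{Z}_\mathrm{h})_{ji}$.

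Assembling the two pieces gives the elementwise identity
\begin{equation*}
\bar{n}_{ij} = z_{ii} \, w_{ij} \left( \frac{1}{\mu^\mathrm{h,out}_i \, \mu^\mathrm{h,in}_j} - (\mathbf{Z}_\mathrm{h}\mathbf{Diag}(\bm{\sigma}_\mathrm{out})\mathbf{Z}_\mathrm{h})_{ji} \right),
\end{equation*}
and it remains only to read this off as a matrix expression: the factor $z_{ii} w_{ij}$ is the $(i,j)$ entry of $\mathbf{Diag}(\mathbf{Z})\mathbf{W}$, the term $1/(\mu^\mathrm{h,out}_i \mu^\mathrm{h,in}_j)$ is the $(i,j)$ entry of the outer product $(\mathbf{e} \div \bm{\mu}^\mathrm{h}_\mathrm{out})(\mathbf{e} \div \bm{\mu}^\mathrm{h}_\mathrm{in})^\top$, and the index swap $(\cdot)_{ji}$ in the second term is exactly a transpose, which yields $(\mathbf{Z}_\mathrm{h}\mathbf{Diag}(\bm{\sigma}_\mathrm{out})\mathbf{Z}_\mathrm{h})^\top$. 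Collecting these reproduces the stated Hadamard-product form. Beyond the quotient-rule derivative, the only delicate part is bookkeeping: keeping straight which fundamental matrix ($\mathbf{Z}$ versus $\mathbf{Z}_\mathrm{h}$) occurs at each reduction and verifying that the surviving index transposition renders as a genuine matrix transpose rather than being absorbed silently.
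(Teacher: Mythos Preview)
Your proposal is correct and follows essentially the same route as the paper: starting from $\bar{n}^\mathrm{h}_{ij} = -\tfrac{1}{\beta}\sum_{k,l}\mu^\mathrm{h,in}_k\sigma^\mathrm{in}_k\mu^\mathrm{h,out}_l\sigma^\mathrm{out}_l\,\partial z^\mathrm{h}_{kl}/\partial c_{ij}$, applying the quotient rule to $z^\mathrm{h}_{kl}=z_{kl}/z_{ll}$, and then collapsing the two resulting sums via the fixed-point identities (\ref{it1_h})--(\ref{it2_h}) together with $z_{ki}=z^\mathrm{h}_{ki}z_{ii}$ to reach the elementwise formula (\ref{res_n_ij_h}). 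The only difference is that the paper pauses midway to record the interpretational decomposition $\bar{n}^\mathrm{h}_{ij}=\sum_{k,l}\gamma^\mathrm{h}_{kl}\,\bar{n}^{(kl)}_{ij}$ as a weighted RSP betweenness before returning to the direct computation you carry out.
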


\begin{proof}
Let $\bar{n}^\mathrm{h}_{ij}$ be the expected number of times edge $(i,j)$ appears on a path drawn according to $\mathrm{P}_\mathrm{h}(\wp)$, i.e.,
\begin{equation*}
\bar{n}^\mathrm{h}_{ij}  \triangleq \sum_{k,l \in \mathcal{V}} \sum_{\wp_{kl} \in \mathcal{P}^\mathrm{h}_{kl}} \mathrm{P}_\mathrm{h}(\wp_{kl}) \, \eta(i \to j \in \wp_{kl}),
\end{equation*}
where $\eta(i \to j \in \wp_{kl})$ denotes the number of times the edge $(i,j)$ is visited along hitting path $\wp_{kl}$. By a reasoning similar to (\ref{res_nij}) we get 
\begin{equation*}
\bar{n}^\mathrm{h}_{ij} = - \tfrac{1}{\beta} \sum_{k,l \in \mathcal{V}}  \mu^\mathrm{h,in}_k \sigma^\mathrm{in}_{k} \mu^\mathrm{h,out}_l   \sigma^\mathrm{out}_l \frac{ \partial z^\mathrm{h}_{kl} }{\partial c_{ij}}.
\end{equation*}
This time, we have \cite[Equation (11)]{kivimaki2016two},
\begin{equation*}
\frac{ \partial z^\mathrm{h}_{kl} }{ \partial c_{ij} } =    \frac{ \partial \left( z_{kl} / z_{ll} \right) }{ \partial c_{ij}} = - \beta w_{ij} \left( \frac{ z_{ki} z_{jl} }{z_{ll}} - \frac{z_{kl} z_{li} z_{jl} }{z^2_{ll}} \right),
\end{equation*}
leading to
\begin{equation}
\bar{n}^\mathrm{h}_{ij} = \sum_{{k,l \in \mathcal{V}}} \frac{\mu^\mathrm{h,in}_k \sigma^\mathrm{in}_{k} \mu^\mathrm{h,out}_l  \sigma^\mathrm{out}_l }{z_{ll}} \left( {z_{ki}} - \frac{z_{kl} z_{li}}{z_{ll}} \right) w_{ij} z_{jl}. \label{int_n_h_ij}
\end{equation}
From (\ref{pi_h}), and recalling that $z^\mathrm{h}_{kl} = z_{kl}/z_{ll}$, we get $(\mu^\mathrm{h,in}_k \sigma^\mathrm{in}_{k} \mu^\mathrm{h,out}_l  \sigma^\mathrm{out}_l ) / z_{ll} = \gamma^\mathrm{h}_{kl} / z_{kl}$, where $\gamma^\mathrm{h}_{kl}$ is element $k,l$ of the coupling matrix (see Equation (\ref{pi_h})), thus
\begin{equation}
\bar{n}^\mathrm{h}_{ij} =  \sum_{{k,l \in \mathcal{V}}} \gamma^\mathrm{h}_{kl} \left( \frac{z_{ki}}{z_{kl}} - \frac{ z_{li} }{z_{ll}} \right) w_{ij} z_{jl} =  \sum_{{k,l \in \mathcal{V}}} \gamma^\mathrm{h}_{kl} \bar{n}^{(kl)}_{ij}, \label{dec_n_h_ij}
\end{equation}
and $ \bar{n}^{(kl)}_{ij} \triangleq \big( \frac{z_{ki}}{z_{kl}} - \frac{ z_{li} }{z_{ll}} \big) w_{ij} z_{jl}$ is the expected number of times $(i,j)$ is visited when the starting node is fixed to $k$ and the ending node to $l$, as shown in \cite[Equation (12)]{kivimaki2016two}. Within the constrained bag-of-hitting-paths formalism, this quantity is simply the average over all starting and ending nodes, weighted by the coupling probabilities, $\bm{\Gamma}_\mathrm{h}$. In fact, $\bar{n}^{h}_{ij}$ can be seen as a \emph{weighted randomized shortest-paths (RSP) betweenness centrality} for edges, compared to the unweighted RSP betweenness centrality defined in \cite{kivimaki2016two}. More precisely, this quantity provides a weighted \emph{group betweenness} between the two sets of nodes, $\mathcal{I}n$ and $\mathcal{O}ut$.

While the decomposition (\ref{dec_n_h_ij}) of $\bar{n}^\mathrm{h}_{ij}$ provides an interesting analogy, we will use another expression to actually compute this quantity. From (\ref{int_n_h_ij}) with the help of (\ref{it1_h}) and (\ref{it2_h}), we get
\begin{equation}
\bar{n}^\mathrm{h}_{ij} = \left( \frac{1}{\mu^\mathrm{h,out}_i \mu^\mathrm{h,in}_j}  - \sum_{l \in \mathcal{V}} \sigma^\mathrm{out}_l z^\mathrm{h}_{li}z^\mathrm{h}_{jl} \right) z_{ii} w_{ij},
\label{res_n_ij_h}
\end{equation}
which provides the expression for computing $\mathbf{\bar{N}_\mathrm{h}} \triangleq (\bar{n}^\mathrm{h}_{ij})$.
\end{proof}

\begin{algorithm}[t!]
\caption[Bag-of-hitting-paths algorithm]
{Margin-constrained bag-of-hitting-paths algorithm}
\algsetup{indent=2em, linenodelimiter=.}
\begin{algorithmic}[1]
\small
\REQUIRE $\,$ \\
 -- The $n\times n$ non-negative adjacency matrix $\mathbf{A}$ of a strongly connected directed graph, containing edge affinities.\\
 -- The $n\times n$ non-negative cost matrix $\mathbf{C}$, containing edge costs.\\
 -- An input distribution vector $\bm{\sigma}_\mathrm{in}$ of size $n$.\\
 -- An output distribution vector $\bm{\sigma}_\mathrm{out}$ of size $n$.\\
 -- A scalar inverse temperature parameter $\beta > 0$.
 
\ENSURE $\,$ \\
 -- The coupling matrix $\bm{\Gamma}_\mathrm{h}$ of size $n\times n$. \\
 -- The minimum free energy scalar $\text{FE}_\mathrm{h}^\mathrm{min}$. \\ 
 -- The matrix $\mathbf{\bar{N}}_\mathrm{h}$ of size $n\times n$, containing the expected number of visits to each edge $(i,j)$. \\
 -- The vector $\bar{\mathbf{n}}_\mathrm{h}$ of size $n$, containing the expected number of visits to each node. \\
 -- The resulting biased transition probabilities matrix $\mathbf{P}_\mathrm{h}$ of size $n\times n$ (the policy). \\
~\\

\STATE $\mathbf{D} \leftarrow \mathbf{Diag}(\mathbf{A}\mathbf{e})$ \COMMENT{the out-degree matrix} \\
\STATE $\mathbf{P}^\mathrm{ref} \leftarrow \mathbf{D}^{-1} \mathbf{A}$ \COMMENT{the reference transition probabilities matrix} \\
\STATE $\mathbf{W} \leftarrow \mathbf{P}^\mathrm{ref} \circ \exp[-\beta \mathbf{C}]$ \COMMENT{the auxiliary matrix $\mathbf{W}$; $\circ$ is the elementwise product} \\
\STATE $\mathbf{Z} \leftarrow (\mathbf{I} - \mathbf{W})^{-1}$ \COMMENT{the fundamental matrix} \\
\STATE $\mathbf{D}_\mathrm{h} \leftarrow \mathbf{Diag}(\mathbf{Z})$ \COMMENT{the column-normalization matrix for hitting paths probabilities} \\
\STATE $\mathbf{Z}_\mathrm{h} \leftarrow \mathbf{Z} \mathbf{D}^{-1}_\mathrm{h}$ \COMMENT{the column-normalized fundamental matrix} \\
\STATE $\bm{\mu}^\mathrm{h}_\mathrm{out} \leftarrow \mathbf{e}$ \COMMENT{the initial Lagrange parameters vector for ouputs} \\

\REPEAT[main iteration loop]
\STATE $\bm{\mu}^\mathrm{h}_\mathrm{in} \leftarrow \mathbf{e} \div \left( \mathbf{Z}_\mathrm{h} \left( \bm{\mu}^\mathrm{h}_\mathrm{out} \circ \bm{\sigma}_\mathrm{out} \right) \right)$ \COMMENT{the update of input vector} \\ 
\STATE $\bm{\mu}^\mathrm{h}_\mathrm{out} \leftarrow \mathbf{e} \div \left( \mathbf{Z}_\mathrm{h}^\top \left( \bm{\mu}^\mathrm{h}_\mathrm{in} \circ \bm{\sigma}_\mathrm{in} \right) \right)$ \COMMENT{the update of output vector} \\ 
\UNTIL{convergence}

\STATE $\mathbf{\Gamma}_\mathrm{h} \leftarrow \mathbf{Diag}(\bm{\mu}^\mathrm{h}_\mathrm{in} \circ \bm{\sigma}_\mathrm{in}) \; \mathbf{Z}_\mathrm{h} \; \mathbf{Diag}(\bm{\mu}^\mathrm{h}_\mathrm{out} \circ \bm{\sigma}_\mathrm{out})$ \COMMENT{the coupling matrix} \\
\STATE $\text{FE}_\mathrm{h}^\mathrm{min} \leftarrow \frac{1}{\beta} ( (\log[\bm{\mu}^\mathrm{h}_\mathrm{in}])^\top \bm{\sigma}_\mathrm{in} + (\log[\bm{\mu}^\mathrm{h}_\mathrm{out}])^\top \bm{\sigma}_\mathrm{out})$ \COMMENT{the minimum free energy value} \\
\STATE $\mathbf{\bar{N}_\mathrm{h}} \leftarrow \left( (\mathbf{e}  \div \bm{\mu}^\mathrm{h}_\mathrm{out}) (\mathbf{e} \div \bm{\mu}^\mathrm{h}_\mathrm{in})^\top - (\mathbf{Z}_\mathrm{h} \mathbf{Diag}(\bm{\sigma}_\mathrm{out})\mathbf{Z}_\mathrm{h})^\top \right) \circ (\mathbf{Diag}(\mathbf{Z}) \mathbf{W})$ \COMMENT{the matrix containing the expected number of visits to each edge} \\
\STATE $\mathbf{\bar{n}_\mathrm{h}} \leftarrow \mathbf{Diag}(\mathbf{Z}) \left( (\mathbf{e} \div (\bm{\mu}^\mathrm{h}_\mathrm{in} \circ \bm{\mu}^\mathrm{h}_\mathrm{out} )) - \mathbf{diag}(\mathbf{Z}_\mathrm{h} \mathbf{Diag}(\bm{\sigma}_\mathrm{out})\mathbf{Z}_\mathrm{h}) \right) + \bm{\sigma}_\mathrm{out}$ \COMMENT{the vector containing the expected number of visits to each node} \\
\STATE $\mathbf{P}_\mathrm{h} \leftarrow \mathrm{pinv}(\mathbf{Diag}(\mathbf{\bar{N}}_\mathrm{h} \mathbf{e})) \, \mathbf{\bar{N}}_\mathrm{h}$ \COMMENT{the biased transition matrix (the policy) }\\

\RETURN $\mathbf{\Gamma}_\mathrm{h}$, $\text{FE}_\mathrm{h}^\mathrm{min}$, $\mathbf{\bar{N}}_\mathrm{h}$, $\bar{\mathbf{n}}_\mathrm{h}$, $\mathbf{P}_\mathrm{h}$
\end{algorithmic} \label{Alg_h_bop} 
\end{algorithm}


\subsubsection{The expected number of visits to a node}

\begin{proposition}
The vector  $\bar{\mathbf{n}}_\mathrm{h}$, containing  the expected number of times node $i$ appears on a hitting path drawn from a bag of hitting paths, is provided by 
\begin{equation}
\bar{\mathbf{n}}_\mathrm{h} = \mathbf{Diag}(\mathbf{Z}) \left[ (\mathbf{e} \div (\bm{\mu}^\mathrm{h}_\mathrm{in} \circ \bm{\mu}^\mathrm{h}_\mathrm{out} )) - \mathbf{diag}(\mathbf{Z}_\mathrm{h} \mathbf{Diag}(\bm{\sigma}_\mathrm{out})\mathbf{Z}_\mathrm{h}) \right] + \bm{\sigma}_\mathrm{out},
\end{equation}
where $\mathbf{diag}(\mathbf{X})$ is a column vector containing the main diagonal of matrix $\mathbf{X}$. 
\end{proposition}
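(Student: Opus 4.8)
The plan is to mirror exactly the argument used for the non-hitting node-count proposition, decomposing a node visit into outgoing edge traversals plus a final-node indicator. First I would write, for any hitting path $\wp$, the decomposition
\begin{equation*}
\eta(i \in \wp) = \sum_{j \in \mathcal{V}} \eta(i \to j \in \wp) + \delta(E(\wp) = i),
\end{equation*}
which is valid here for the same reason as in the regular case: each intermediate visit to $i$ is followed by exactly one outgoing edge, while a visit to $i$ as the (unique) ending node contributes no outgoing edge and is instead accounted for by the indicator. Taking the expectation under $\mathrm{P}^\star_\mathrm{h}$ and using that the ending-node margin is fixed to $\bm{\sigma}_\mathrm{out}$ (so that $\sum_{\wp \in \mathcal{P}_\mathrm{h}} \mathrm{P}^\star_\mathrm{h}(\wp)\,\delta(E(\wp)=i) = \sigma^\mathrm{out}_i$), this yields $\bar{n}^\mathrm{h}_i = \sum_{j \in \mathcal{V}} \bar{n}^\mathrm{h}_{ij} + \sigma^\mathrm{out}_i$, the hitting analogue of Equation (\ref{n_i_sum}).

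Next I would substitute the closed-form edge expression (\ref{res_n_ij_h}) and sum over $j$, separating the two summands of $\bar{n}^\mathrm{h}_{ij}$. The essential tool is the identity $(\mathbf{I}-\mathbf{W})\mathbf{Z}=\mathbf{I}$, giving $\sum_{j} w_{ij} z_{jl} = z_{il} - \delta_{il}$, and therefore, after dividing by $z_{ll}$, $\sum_{j} w_{ij} z^\mathrm{h}_{jl} = z^\mathrm{h}_{il} - \delta_{il}/z_{ll}$. Applying this to the second summand of (\ref{res_n_ij_h}) produces $\sum_{l} \sigma^\mathrm{out}_l z^\mathrm{h}_{li} z^\mathrm{h}_{il} - \sigma^\mathrm{out}_i / z_{ii}$, where I use $z^\mathrm{h}_{ii}=1$ to collapse the $\delta_{il}$ boundary term, and I would recognize $\sum_{l} z^\mathrm{h}_{il} \sigma^\mathrm{out}_l z^\mathrm{h}_{li}$ as the $(i,i)$ entry of $\mathbf{Z}_\mathrm{h}\,\mathbf{Diag}(\bm{\sigma}_\mathrm{out})\,\mathbf{Z}_\mathrm{h}$.

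For the first summand I would rewrite $\sum_{j} w_{ij}/\mu^\mathrm{h,in}_j$ using the constraint (\ref{it1_h}) in the form $1/\mu^\mathrm{h,in}_j = \sum_{l} z^\mathrm{h}_{jl} \mu^\mathrm{h,out}_l \sigma^\mathrm{out}_l$; interchanging sums and again invoking $\sum_{j} w_{ij} z^\mathrm{h}_{jl} = z^\mathrm{h}_{il} - \delta_{il}/z_{ll}$ gives $1/\mu^\mathrm{h,in}_i - \mu^\mathrm{h,out}_i \sigma^\mathrm{out}_i / z_{ii}$. Multiplying through by the prefactor $z_{ii}/\mu^\mathrm{h,out}_i$ then yields $z_{ii}/(\mu^\mathrm{h,in}_i \mu^\mathrm{h,out}_i) - \sigma^\mathrm{out}_i$. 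The crux of the computation is that the two $\delta_{il}$ boundary terms, namely the $-\sigma^\mathrm{out}_i/z_{ii}$ arising from each summand (scaled by $z_{ii}$), cancel exactly; this cancellation, together with the $+\sigma^\mathrm{out}_i$ from the ending indicator, is what produces the clean final form
\begin{equation*}
\bar{n}^\mathrm{h}_i = z_{ii}\Big( \tfrac{1}{\mu^\mathrm{h,in}_i \mu^\mathrm{h,out}_i} - [\mathbf{Z}_\mathrm{h}\,\mathbf{Diag}(\bm{\sigma}_\mathrm{out})\,\mathbf{Z}_\mathrm{h}]_{ii} \Big) + \sigma^\mathrm{out}_i,
\end{equation*}
which reassembles into the stated matrix expression. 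I expect the only real obstacle to be the careful bookkeeping of these Kronecker-delta boundary contributions (and the use of $z^\mathrm{h}_{ii}=1$) so that they cancel precisely; everything else is routine index manipulation identical in spirit to the non-hitting proof.
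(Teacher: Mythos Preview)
Your proposal is correct and follows essentially the same route as the paper: decompose $\eta(i\in\wp)$ into outgoing-edge counts plus the ending indicator, sum the edge formula over $j$ using $\sum_{j} w_{ij} z_{jl} = z_{il}-\delta_{il}$, and invoke the constraint relations (\ref{it1_h})--(\ref{it2_h}) to collapse the sums. The only cosmetic difference is that the paper starts the $j$-summation from the intermediate expression (\ref{int_n_h_ij}) and applies (\ref{it1_h})--(\ref{it2_h}) afterwards, whereas you start from the already-simplified (\ref{res_n_ij_h}) and reuse (\ref{it1_h}) once more for $\sum_j w_{ij}/\mu^\mathrm{h,in}_j$; the bookkeeping of the $\delta_{il}$ boundary terms and their cancellation is identical in both.
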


\begin{proof}
Let us define $\bar{n}^\mathrm{h}_{i}$, the expected number of times node $i$ appears on a path under $\mathrm{P}_\mathrm{h}(\wp)$, by
\begin{equation*}
\bar{n}^\mathrm{h}_{i} \triangleq \sum_{\wp \in \mathcal{P}_\mathrm{h}} \mathrm{P}_\mathrm{h}(\wp) \, \eta(i \in \wp),
\end{equation*}
where $\eta(i \in \wp)$ denotes the number of times node $i$ is visited along hitting path $\wp$. Using (\ref{eta_dev}) and (\ref{n_i_sum}) again, we get
\begin{gather}
\bar{n}^\mathrm{h}_{i} = \sum_{j \in \mathcal{V}} \bar{n}^\mathrm{h}_{ij} + \sigma^\mathrm{out}_i, \label{res_n_i_simple}
\end{gather}
which, after using (\ref{dec_n_h_ij}) and $\sum_{j \in \mathcal{V}} w_{ij} z_{jl} = z_{il} - \delta_{il}$  \cite[Equation (13)]{kivimaki2016two}, results in
\begin{equation*}
\bar{n}^\mathrm{h}_{i} = \sum_{{k,l \in \mathcal{V}}} \gamma^\mathrm{h}_{kl} \bar{n}^{(kl)}_{i} +  \sigma^\mathrm{out}_i,
\end{equation*}
where $\bar{n}^{(kl)}_{i} \triangleq \big( \frac{z_{ki}}{z_{kl}} - \frac{ z_{li} }{z_{ll}} \big) z_{il}$ is the number of times $i$ is visited when starting in $k$ and ending in $l$, as defined in \cite{kivimaki2016two}. Notice that $\bar{n}^{(kl)}_{i} = 0$ when $i=l$ and when $k=l$. Again, this quantity can be seen as a \emph{weighted RSP betweenness centrality} for sets of nodes, by analogy with the unweighted RSP betweenness centrality defined in \cite[Equation (15)]{kivimaki2016two}.

However, for computation, it is more convenient to apply the equivalent of Equation (\ref{res_n_ij_h}). Following the same path as for the derivation of (\ref{res_n_ij_h}), we first use $\sum_{j \in \mathcal{V}} {w}_{ij} {z}_{jl}  = {z}_{il} - \delta_{il}$ with (\ref{res_n_i_simple}) on (\ref{int_n_h_ij}), and then (\ref{it1_h}) and (\ref{it2_h}) in order to get
\begin{equation}
\bar{n}^\mathrm{h}_{i} =\left( \frac{1}{\mu^\mathrm{h,out}_i \mu^\mathrm{h,in}_i}  - \sum_{l \in \mathcal{V}} \sigma^\mathrm{out}_l z^\mathrm{h}_{li}z^\mathrm{h}_{il} \right) z_{ii} + \sigma^\mathrm{out}_i,\label{res_n_i_h}
\end{equation}
which provides the expression for computing the vector $\mathbf{\bar{n}_\mathrm{h}} \triangleq (\bar{n}^\mathrm{h}_{i})$.
\end{proof}


\subsubsection{The optimal randomized policy}

\begin{proposition} 
The biased random walk transition matrix, $\mathbf{P}_\mathrm{h} = (p^\mathrm{h}_{ij})$, that is, the randomized routing policy, is given by
\begin{equation}
p^\mathrm{h}_{ij} \triangleq \frac{\bar{n}^\mathrm{h}_{ij} }{\sum_{k \in \mathcal{V}} \bar{n}^\mathrm{h}_{ik}}
= \frac{ \left( \frac{1}{\mu^\mathrm{h,out}_i \mu^\mathrm{h,in}_j}  - \sum_{l \in \mathcal{V}} \sigma^\mathrm{out}_l z^\mathrm{h}_{li}z^\mathrm{h}_{jl} \right) }{\left( \frac{1}{\mu^\mathrm{h,out}_i \mu^\mathrm{h,in}_i}  - \sum_{l \in \mathcal{V}} \sigma^\mathrm{out}_l z^\mathrm{h}_{li}z^\mathrm{h}_{il} \right)} w_{ij}.
\end{equation}
\end{proposition}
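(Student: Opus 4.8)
The plan is to reduce the computation of the policy entirely to the two closed-form expressions already established for the expected numbers of edge and node visits, namely Equation (\ref{res_n_ij_h}) for $\bar{n}^\mathrm{h}_{ij}$ and Equation (\ref{res_n_i_h}) for $\bar{n}^\mathrm{h}_i$, together with the decomposition (\ref{res_n_i_simple}). Since the policy is defined by $p^\mathrm{h}_{ij} = \bar{n}^\mathrm{h}_{ij} / \sum_{k \in \mathcal{V}} \bar{n}^\mathrm{h}_{ik}$, the only quantity that still needs to be evaluated is the row sum in the denominator, so the first step is to rewrite this sum in terms of quantities already computed in closed form.

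First I would invoke the identity (\ref{res_n_i_simple}), which states $\bar{n}^\mathrm{h}_i = \sum_{j \in \mathcal{V}} \bar{n}^\mathrm{h}_{ij} + \sigma^\mathrm{out}_i$. This immediately yields $\sum_{k \in \mathcal{V}} \bar{n}^\mathrm{h}_{ik} = \bar{n}^\mathrm{h}_i - \sigma^\mathrm{out}_i$, so the denominator of the policy is just the node-visit count with its ``ending contribution'' $\sigma^\mathrm{out}_i$ removed. Substituting the closed form (\ref{res_n_i_h}) for $\bar{n}^\mathrm{h}_i$, the additive $\sigma^\mathrm{out}_i$ term cancels against the one just subtracted, leaving the denominator equal to $\big( (\mu^\mathrm{h,out}_i \mu^\mathrm{h,in}_i)^{-1} - \sum_{l \in \mathcal{V}} \sigma^\mathrm{out}_l z^\mathrm{h}_{li} z^\mathrm{h}_{il} \big) z_{ii}$.

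Then I would substitute the numerator $\bar{n}^\mathrm{h}_{ij}$ using (\ref{res_n_ij_h}), which carries the same scalar factor $z_{ii}$ together with the extra factor $w_{ij}$. Forming the ratio, the common factor $z_{ii}$ in numerator and denominator cancels, and what remains is precisely the claimed expression with $w_{ij}$ factored out in front. There is no genuine obstacle here: the argument is pure bookkeeping resting on the earlier propositions, and the only point requiring minimal care is recognizing that the $\sigma^\mathrm{out}_i$ term, which distinguishes the node-visit count $\bar{n}^\mathrm{h}_i$ from the edge row sum, cancels exactly, so that the shared factor $z_{ii}$ can be removed cleanly to expose the final form.
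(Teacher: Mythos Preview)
Your proposal is correct and follows exactly the same route as the paper's own proof, which simply says ``We get the result by combining (\ref{res_n_ij_h}), (\ref{res_n_i_simple}) and (\ref{res_n_i_h}).'' You have merely spelled out the bookkeeping that the paper leaves implicit: use (\ref{res_n_i_simple}) to express the row sum as $\bar{n}^\mathrm{h}_i - \sigma^\mathrm{out}_i$, substitute (\ref{res_n_i_h}) so that the $\sigma^\mathrm{out}_i$ term cancels, then form the ratio with (\ref{res_n_ij_h}) and cancel the common factor $z_{ii}$.
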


\begin{proof}
We get the result by combining (\ref{res_n_ij_h}), (\ref{res_n_i_simple}) and (\ref{res_n_i_h}).
\end{proof}

Again, we observed experimentally that this quantity converges to the reference transition matrix of the reference random walk when $\beta \to 0$, i.e.,
\begin{equation*}
p^\mathrm{h}_{ij} \xrightarrow[\beta \to 0]{} p_{ij}^\mathrm{ref}.
\end{equation*}
Conversely, when $\beta \to \infty$, the problem becomes an optimal transport on a graph problem (see Subsection \ref{sec::speed}).
Note that, for convenience, in Algorithm \ref{Alg_h_bop}, matrix $\mathbf{P_\mathrm{h}} \triangleq (p^\mathrm{h}_{ij})$ is computed thanks to 
\begin{equation*}
\mathbf{P_\mathrm{h}} = (\mathbf{Diag}(\mathbf{\bar{N}_\mathrm{h}} \mathbf{e}))^{-1} \mathbf{\bar{N}_\mathrm{h}}.
\end{equation*}

Let us now turn to a discussion of the relations between the proposed models and the regularized optimal transport problem.



\section{The regularized optimal transport problem  analogy}
\label{sec::transp}

In this section, we will show that both non-hitting (see Equation (\ref{init_prob})) and hitting (see Equation (\ref{init_prob_h})) problems correspond to two different kinds of \emph{regularization} for the \emph{optimal transport problem} \cite{ahuja1993network,guex2017transportation,kantorovich1942translocation,villani2003topics,villani2008optimal}. It therefore generalizes discrete entropy-regularized optimal transport problems \cite{Wilson-1970,Erlander-1990,Kapur-1992} to a graph structure.

\subsection{The optimal transport problem}
\label{SubSec_optimal_transport_problem01}

General optimal transport is a well-known problem defined, for example, in \cite{kantorovich1942translocation,villani2003topics,villani2008optimal}, and the special case where the space is a graph is easily derived from it \cite{guex2017transportation}. Assume we have a subset of nodes $\mathit{I}n \in \mathcal{V}$, called \emph{sources}, with a supply of a certain quantity of matter, while we observe a demand of the same matter in another subset of nodes, $\mathcal{O}ut \in \mathcal{V}$, called \emph{targets}. We suppose that the overall supply is equal to the overall demand, thus these quantities on nodes can be represented, without loss of generality, by their proportion of the total. In other words, supply and demand are represented respectively by two discrete distribution vector $\bm{\sigma}_\mathrm{in}$ and $\bm{\sigma}_\mathrm{out}$, with $\sum_{i \in \mathcal{V}} \sigma^\mathrm{in}_i = \sum_{i \in \mathcal{V}} \sigma^\mathrm{out}_i = 1$ and $\sigma^\mathrm{in}_i,\sigma^\mathrm{out}_i \geq 0, \; \forall i \in \mathcal{V}$.

The goal of the optimal transport problem is to find an optimal \emph{attribution plan} or optimal \emph{coupling} \cite{villani2003topics,villani2008optimal}, i.e.\ a $n\times n$ matrix $\bm{\Gamma}$, where $\gamma_{ij}$ represents the proportion of matter going from $i$ to $j$, in order to fulfill supply and demand. Optimality here means that the cost of transportation of this attribution plan, i.e. $\sum_{i,j \in \mathcal{V}} \gamma_{ij} \, c_{ij}$ where $c_{ij}$ is the cost of transportation from $i$ to $j$, must be minimal. Altogether, we have
\begin{equation}
\vline \begin{array}{ll@{}ll}
\underset{\bm{\Gamma}}{\text{minimize}} & \sum_{i,j \in \mathcal{V}} \gamma_{ij} \, c_{ij} \\
\text{subject to} & \sum_{j \in \mathcal{V}} \gamma_{ij}= \sigma^\mathrm{in}_i \qquad \; \forall i \in \mathcal{V},\\
          & \sum_{i \in \mathcal{V}} \gamma_{ij} = \sigma^\mathrm{out}_j \qquad \forall j \in \mathcal{V}, \\
          & \gamma_{ij} \geq 0 \qquad \qquad \qquad \forall i,j \in \mathcal{V}.
\end{array} \label{altern_prob}
\end{equation}
Another interesting interpretation can be found in the \emph{dual optimal transport problem} \cite{guex2017transportation,villani2003topics,villani2008optimal}, expressed by
\begin{equation}
\vline \begin{array}{ll@{}ll}
\underset{\bm{\lambda}_\mathrm{in}, \bm{\lambda}_\mathrm{out}}{\text{maximize}} & \sum_{i \in \mathcal{V}} \lambda^\mathrm{in}_i \sigma^\mathrm{in}_i + \sum_{j \in \mathcal{V}} \lambda^\mathrm{out}_j \sigma^\mathrm{out}_j \\
\text{subject to} &  \lambda^\mathrm{in}_i + \lambda^\mathrm{out}_j \leq c_{ij} \qquad \forall i,j \in \mathcal{V}.\\
\end{array} \label{dual_alt_prob}
\end{equation}
Here, the dual vectors $\bm{\lambda}_\mathrm{in} = (\lambda^\mathrm{in}_i)$ and $\bm{\lambda}_\mathrm{out} = (\lambda^\mathrm{out}_i)$ can be interpreted respectively as the dual \emph{embarkment prices} on \emph{sources} and \emph{disembarkment prices} on \emph{targets}, as shown in \cite{villani2003topics,villani2008optimal}. This is a common property of the dual problem in linear programming \cite{griva2009linear}.

\subsection{The standard optimal transport flow on a graph problem}
\label{SubSec_optimal_transport_graph_problem01}

For completeness, let us recall the standard (exact) transport flow problem on a graph.
The linear programming optimal transport flow problem is defined as \cite{ahuja1993network}
\begin{equation}
\vline \begin{array}{ll@{}ll}
\underset{\mathbf{X}}{\text{minimize}} & \mathbf{e}^\top ( \mathbf{X} \circ \mathbf{C}' ) \mathbf{e} \\
\text{subject to}& \left( \mathbf{X} -  \mathbf{X}^\top \right) \mathbf{e} = \bm{\sigma}_\mathrm{in} -  \bm{\sigma}_\mathrm{out}, \\
& x_{ij} \geq 0 \qquad \quad \forall i,j \in \mathcal{V},
\end{array}
\label{Eq_exact_optimal_transport_LP01}
\end{equation}
where $\mathbf{C}'$ is the cost matrix containing non-negative costs on the edges of the network with infinite components set to $0$. As before, it is assumed that input and output flows are non-negative as well as $(\bm{\sigma}_\mathrm{in})^\top \mathbf{e} = (\bm{\sigma}_\mathrm{out})^\top \mathbf{e} = 1$. The idea is therefore to minimize the total cost of flows while satisfying the input and output constraints. The solution $\mathbf{X} = (x_{ij})$ of this problem corresponds to the matrix containing directed flows on the edges.

\subsection{The regularized optimal transport problem}
\label{Subsec_regularized_optimal_transport01}

In order to show that the bag-of-paths formalism is closely related to the optimal transport problem, we need to compute again the minimum free energy value in terms of the elements of the  coupling matrix $\gamma_{ij}$, instead of the Lagrange parameters $\lambda_{i}^{\mathrm{in}}, \lambda_{j}^{\mathrm{out}}$ as in (\ref{FE}). 

Let us start with the regular, non-hitting paths model.
By inserting the form taken by the optimal path probability distribution $\mathrm{P}^\star(\wp)$, given by (\ref{prob_ref_comp}), in the free energy functional (\ref{init_prob}) (similarly to the proof following Proposition \ref{prop::optimal_FE}), we obtain 
\begin{gather*}
\text{FE}(\mathrm{P}^\star) = T \sum_{i,j \in \mathcal{V}} \log( \mu^\mathrm{in}_i \mu^\mathrm{out}_j) \underbracket[0.5pt][4pt]{ \sum_{\wp_{ij} \in \mathcal{P}_{ij}} \mathrm{P}^\star(\wp_{ij}) }_{\gamma_{ij}},
\end{gather*}
which directly provides from $\mu^\mathrm{in}_i \mu^\mathrm{out}_j = \gamma_{ij} / (\sigma^\mathrm{in}_{i}  \alpha_j \widehat{z}_{ij})$ (see Equation (\ref{pi_element})) and $\gamma_{ij} = \sum_{\wp_{ij} \in \mathcal{P}_{ij}} \mathrm{P}^\star(\wp_{ij})$ (Equation (\ref{Eq_coupling_probabilities01})) 
\begin{equation}
\text{FE}(\mathrm{P}^\star)
= -T \sum_{i,j \in \mathcal{V}} \log( \widehat{z}_{ij} ) \, \gamma_{ij} + T \sum_{i,j \in \mathcal{V}} \gamma_{ij} \log\left( \frac{\gamma_{ij}}{\sigma^\mathrm{in}_i \alpha_j } \right).
\label{altern_FE}
\end{equation}
Following the same reasoning for the hitting paths formalism, we obtain
\begin{equation}
\text{FE}_\mathrm{h}(\mathrm{P}^\star_\mathrm{h}) = \sum_{i,j \in \mathcal{V}} \underbracket[0.5pt][4pt]{ -T \log( z^\mathrm{h}_{ij} ) }_{\phi_T(i,j)} \, \gamma^\mathrm{h}_{ij} + T \sum_{i,j \in \mathcal{V}} \gamma^\mathrm{h}_{ij} \log\left( \frac{\gamma^\mathrm{h}_{ij}}{\sigma^\mathrm{in}_i \sigma^\mathrm{out}_j } \right). \label{altern_FE_h}
\end{equation}
This shows that, instead of working with the whole probability distribution as required by Equation (\ref{init_prob}), it is sufficient to compute the elements of the coupling matrix $\gamma_{ij}$ if the quantities $\phi_T(i,j)$ are pre-computed.

Indeed, the resulting expression in the hitting formalism has a nice interpretation. Indeed, in \cite{kivimaki2014developments} it was shown that in the simple randomized shortest-paths framework $\phi_T(i,j) \triangleq \mathrm{FE}_\mathrm{h}(\mathrm{P}^\star_\mathrm{h}) = -T \log( z^\mathrm{h}_{ij} )$ is known to be the minimum free energy when problem (\ref{old_prob}) is restricted to hitting paths connecting a single source $i$ to a single destination $j$. This quantity corresponds to the pairwise directed \emph{free energy distance} between nodes of a graph introduced in \cite{kivimaki2014developments,franccoisse2016bag}, where it is proved that it is a distance metric. This distance provided competitive results in pattern recognition tasks \cite{franccoisse2016bag,Sommer-2016,Sommer-2017}.

Moreover, it is shown in \cite{franccoisse2016bag} that $\phi_T(i,j)$ converges to the directed \emph{shortest path distance} between $i$ and $j$ when $T \to 0$, and to the \emph{average first passage time} (up to a scaling factor) between $i$ and $j$ when $T \to \infty$. The free energy distances between all pairs of nodes can easily be computed in matrix form \cite{kivimaki2014developments,franccoisse2016bag,fouss2016algorithms}. 
It has further been shown that, when computing the continuous time -- continuous state equivalent to the randomized shortest-paths model by densifying the graph, the minimum free energy becomes a \emph{potential} attracting the agents to the goal state \cite{Garcia-Diez-2011b}. 

From (\ref{altern_FE_h}), once the directed free energy distances have been computed, we observe that problem (\ref{init_prob_h}) can be restated as
\begin{equation}
\vline \begin{array}{ll@{}ll}
\underset{\bm{\Gamma_\mathrm{h}}}{\text{minimize}} & \sum_{i,j \in \mathcal{V}} \phi_T(i,j) \gamma^\mathrm{h}_{ij} + T \sum_{i,j \in \mathcal{V}} \gamma^\mathrm{h}_{ij} \log\left( \dfrac{\gamma^\mathrm{h}_{ij}}{\sigma^\mathrm{in}_i \sigma^\mathrm{out}_j } \right) \\
\text{subject to} & \sum_{j \in \mathcal{V}} \gamma^\mathrm{h}_{ij}= \sigma^\mathrm{in}_i \qquad \; \forall i \in \mathcal{V},\\
          & \sum_{i \in \mathcal{V}} \gamma^\mathrm{h}_{ij} = \sigma^\mathrm{out}_j \qquad \forall j \in \mathcal{V}, \\
          & \gamma^\mathrm{h}_{ij} \geq 0 \qquad \qquad \qquad \forall i,j \in \mathcal{V}.
\end{array} \label{altern_prob_hitting}
\end{equation}
Knowing that the directed free energy distance $\phi_T(i,j)$ converges to the directed shortest path distance when $T \to 0$, we conclude that problem (\ref{altern_prob_hitting}) reduces to the optimal transport problem at this limit. Thus, problem (\ref{altern_prob_hitting}) is actually a ``soft" (entropy regularized) optimal transport problem, similar to the one studied in, e.g., \cite{Wilson-1970,Erlander-1990,Cuturi2013} and based on the directed free energy distance \cite{kivimaki2014developments,franccoisse2016bag,fouss2016algorithms}, monitored by temperature.

Therefore, in the case of hitting paths, an alternative way of solving the entropy-regularized optimal transport on a graph problem (\ref{init_prob_h}) is to \emph{pre-compute the free energy distances} $\phi_T(i,j)$ and then solve problem (\ref{altern_prob_hitting}) (see \cite{Cuturi2013} for a recent discussion). 

The non-hitting problem is another regularization of the optimal transport problem as both (\ref{altern_FE}) and (\ref{altern_FE_h}) converge to the same limit when $T \to 0$. As a matter of fact, when $T \to 0$, $\widehat{z}_{ij}$ and $z^\mathrm{h}_{ij}$ have the same limit, as the effect of $\alpha_i$ vanishes in the first case and $z_{jj} \to 1$ in the second case. In contrast, these two formalisms diverge when $T \to \infty$. The hitting paths formalism converges to the problem described in \cite{guex2017transportation}, and thus results in the trivial, independent, coupling $\gamma^\mathrm{h}_{ij} = \sigma^\mathrm{in}_i  \sigma^\mathrm{out}_j$ when $T \to \infty$, while the non-hitting paths formalism provides a more interesting solution, though harder to interpret (see Equation (\ref{altern_FE})).
Actually, following the derivations appearing so far in this paper and results discussed in \cite{guex2016interpolating}, it appears that the high temperature limit of the non-hitting formalism corresponds, in the case of an undirected graph, to the \emph{electrical circuit formalism}, where sources and targets correspond to nodes with potentials fixed by the user (a high potential on sources and a low potential on targets) (see \cite{guex2016interpolating} for details). A study of this interesting question is left for further work.

Interestingly, Equations (\ref{FE}) and (\ref{FE_h}) give two different, alternative, expressions for the minimal free energy, which are equivalent to the objective function of the dual optimal transport problem (with Lagrange multipliers corresponding to the dual variables multiplied by $-1$). It implies that, when $T \to 0$, $-\bm{\lambda}_\mathrm{in}$ and $-\bm{\lambda}_\mathrm{out}$ (and equivalently, $-\bm{\lambda}_\mathrm{in}^\mathrm{h}$ and $-\bm{\lambda}_\mathrm{out}^\mathrm{h}$) converge respectively to the dual embarkment prices on sources and disembarkment prices on targets. Therefore, a mapping of these variables on nodes can highlight problematic sources and targets, in terms of optimal transport (see, e.g., \cite{guex2017transportation}).

\subsection{The optimal transportation flow}

Within the context of the optimal transport problem, it is interesting to discuss the interpretation of the matrix containing the expected number of visits to edges, i.e., $\mathbf{\bar{N}} = (\bar{n}_{ij})$. Note that the discussion is developed within the non-hitting formalism, but remains valid in the hitting case.

In Section \ref{sec::nnodes}, the number of times node $i$ appears on a path $\wp$ was decomposed in $\eta(i \in \wp) = \sum_{j \in \mathcal{V}} \eta(i \to j \in \wp) + \delta(E(\wp) = i)$ (Equation (\ref{eta_dev})), leading to $\bar{n}_i =  \sum_{j \in \mathcal{V}} \bar{n}_{ij} + \sigma^\mathrm{out}_i$ (Equation (\ref{n_i_sum})). However, it is also possible to write 
\begin{equation*}
\eta(i \in \wp) = \sum_{j \in \mathcal{V}} \eta(j \to i \in \wp) + \delta(S(\wp) = i),
\end{equation*}
where $\delta(S(\wp) = i)$ is 1 iff $\wp$ begins with $i$ and 0 otherwise. This second version results in having
\begin{equation*}
\bar{n}_i = \sum_{j \in \mathcal{V}} \bar{n}_{ji} + \sigma^\mathrm{in}_i,
\end{equation*}
and combining this result with (\ref{n_i_sum}), we obtain
\begin{equation*}
\sum_{j \in \mathcal{V}} \bar{n}_{ij} - \sum_{j \in \mathcal{V}} \bar{n}_{ji} =  \sigma^\mathrm{in}_i -  \sigma^\mathrm{out}_i, \; \forall i \in \mathcal{V}.
\end{equation*}
This last equation shows that $\bar{n}_{ij}$ can be interpreted as a \emph{directed flow on edges}, as shown in \cite{guex2016interpolating}. This flow is emitted by sources, absorbed by targets, and conserved everywhere else. Mapping this flow allows us to analyse the transportation of matter along the edges of the graph, and is illustrated in Section \ref{Sec_case_studies01}. In \cite{guex2016interpolating}, it is shown that the \emph{net flow} on an undirected graph, i.e.\ $\bar{n}_{ij} - \bar{n}_{ji}$, converges to the \emph{electrical flow} for the non-hitting formalism when $T \to \infty$. Furthermore, in the hitting formalism, summing the absolute values of the net flows over the edges results in a \emph{weighted randomized shortest path} (RSP) \emph{net betweenness centrality} \cite{kivimaki2016two}. For an unweighted graph, the standard, unweighted RSP net betweenness converges to the \emph{current flow betweenness} in the limit $T \to \infty$ \cite{newman2005measure, Brandes-2005b}.


\section{Derived distances and dissimilarities}
\label{Sec_distance01}

Two general families of distances are derived from our framework: distances between nodes and dissimilarities between groups of nodes (both for the non-hitting and the hitting case).

\subsection{Distances between nodes}

Let us first discuss distances between nodes.

\subsubsection{Definitions}

For both the hitting and non-hitting formalism, we can now define a distance named the \emph{surprisal distance}, generalizing the one introduced in \cite{franccoisse2016bag,kivimaki2014developments}. The particularity here is that we can attach positive \emph{weights} $\mathbf{w} = (w_i)$ (with $w_i > 0$ and summing to 1) to nodes, which affect the distances through $\bm{\sigma}_\mathrm{in}$ and $\bm{\sigma}_\mathrm{out}$. More precisely, for a strongly connected graph, we define the \emph{margin-constrained bag-of-paths surprisal distance} $\bm{\Delta}_\mathrm{sur} = (\dist^\mathrm{sur}_{ij})$, and \emph{margin-constrained bag-of-hitting-paths surprisal distance}, $\bm{\Delta}^\mathrm{h}_\mathrm{sur} = (\dist^\mathrm{h,sur}_{ij})$ by, respectively,
\begin{align}
&\dist^\mathrm{sur}_{ij} \triangleq \left\{
	\begin{array}{ll}
		-\frac{1}{2}(\log(\gamma_{ij}) + \log(\gamma_{ji}))  & \mbox{if } i\neq j, \\
		\phantom{-} 0 & \mbox{if } i = j,
	\end{array} \right. \label{surdist} \\
	~\notag \\
&\dist^\mathrm{h,sur}_{ij} \triangleq \left\{
	\begin{array}{ll}
		-\frac{1}{2}(\log(\gamma^\mathrm{h}_{ij}) + \log(\gamma^\mathrm{h}_{ji}))  & \mbox{if } i\neq j, \\
		\phantom{-} 0 & \mbox{if } i = j,
	\end{array} \right. \label{surdist_h}
\end{align}
where $\mathbf{\Gamma} = (\gamma_{ij})$ and $\mathbf{\Gamma}_\mathrm{h} = (\gamma_{ij}^\mathrm{h})$ (the elements of the coupling matrix, see Sections \ref{Subsec_coupling_matrix_regular_paths01} and \ref{Subsec_coupling_matrix_hitting_paths01}) are obtained from, respectively, the non-hitting and hitting path formalisms with $\bm{\sigma}_\mathrm{in} = \bm{\sigma}_\mathrm{out} = \mathbf{w}$. From this definition, each node acts as a source and a target and larger weights induce a stronger influence over the graph, as the flows along the paths starting and ending in a particular node will scale accordingly. 

\begin{proposition}
The margin-constrained bag-of-paths surprisal distance, $\bm{\Delta}_\mathrm{sur}$, defined in (\ref{surdist}), and the margin-constrained bag-of-hitting-paths surprisal distance, $\bm{\Delta}^\mathrm{h}_\mathrm{sur}$ defined in (\ref{surdist_h}), are both metrics.
\end{proposition}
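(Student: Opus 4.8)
The plan is to verify the four metric axioms for $\bm{\Delta}_\mathrm{sur}$ (the non-hitting case; the hitting distance $\bm{\Delta}^\mathrm{h}_\mathrm{sur}$ is entirely analogous). Symmetry is immediate from the definition, so the real work lies in non-negativity, the identity of indiscernibles, and the triangle inequality. The starting point is the factorized form of the coupling, $\gamma_{ij} = u_i\, \widehat{z}_{ij}\, v_j$ with $u_i \triangleq \mu^\mathrm{in}_i \sigma^\mathrm{in}_i$ and $v_j \triangleq \mu^\mathrm{out}_j \alpha_j$ (read off from Equation (\ref{pi_element})), together with the fact that, for $\bm{\sigma}_\mathrm{in}=\bm{\sigma}_\mathrm{out}=\mathbf{w}$ and a strongly connected graph, every $\gamma_{ij}$ is strictly positive (so all logarithms are finite) and, being an entry of a joint probability distribution, satisfies $\gamma_{ij}\le 1$. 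Non-negativity then follows at once from $\gamma_{ij}\gamma_{ji}\le 1$, and since $\sum_{k,l}\gamma_{kl}=1$ with $n\ge2$ forces $\gamma_{ij}<1$, the off-diagonal distances are strictly positive; combined with the convention $\dist^\mathrm{sur}_{ii}=0$ this gives the identity of indiscernibles.

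The crux is the triangle inequality. First I would eliminate the multipliers by the algebraic identities $\gamma_{ij}\gamma_{ji} = (u_iv_i)(u_jv_j)\,\widehat{z}_{ij}\widehat{z}_{ji}$ and $\gamma_{ii}=u_iv_i\,\widehat{z}_{ii}$, which yield, for $i\neq j$,
\[
\dist^\mathrm{sur}_{ij} = \psi_{ij} + g_i + g_j, \qquad g_i \triangleq -\tfrac12\log\gamma_{ii}, \quad \psi_{ij}\triangleq -\tfrac12\log\!\big(\widehat{z}^{\,\mathrm h}_{ij}\widehat{z}^{\,\mathrm h}_{ji}\big),
\]
where $\widehat{z}^{\,\mathrm h}_{ij}\triangleq \widehat{z}_{ij}/\widehat{z}_{jj}$ is the killed-walk analogue of the hitting fundamental entry. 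This isolates the entire contribution of the Lagrange multipliers and the node weights into a single node potential $g_i$, and, crucially, because $\gamma_{ii}\le 1$ we have $g_i\ge 0$ for every node.

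Next I would show that $\psi$ obeys the triangle inequality. The key lemma is the super-multiplicativity $\widehat{z}^{\,\mathrm h}_{ik}\ge \widehat{z}^{\,\mathrm h}_{ij}\,\widehat{z}^{\,\mathrm h}_{jk}$, proved by a first-passage decomposition: the walks from $i$ to $k$ that pass through $j$ factor as a first passage $i\to j$ (whose discounted weight sums to $\widehat{z}_{ij}/\widehat{z}_{jj}$) followed by an arbitrary walk $j\to k$ (summing to $\widehat{z}_{jk}$), and these form a subset of all $i\to k$ walks, so $\widehat{z}_{ik}\ge (\widehat{z}_{ij}/\widehat{z}_{jj})\,\widehat{z}_{jk}$; dividing by $\widehat{z}_{kk}$ gives the claim. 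For the hitting model this is precisely the statement that the directed free energy distance $\phi_T(i,j)=-T\log z^\mathrm{h}_{ij}$ satisfies the directed triangle inequality, already established in \cite{kivimaki2014developments,franccoisse2016bag}. Taking $-\tfrac12\log$ and symmetrizing then gives $\psi_{ik}\le\psi_{ij}+\psi_{jk}$, while $\psi_{ii}=0$.

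Finally I would assemble the pieces. For the only nontrivial case of three distinct nodes,
\[
\dist^\mathrm{sur}_{ij}+\dist^\mathrm{sur}_{jk}-\dist^\mathrm{sur}_{ik} = \big(\psi_{ij}+\psi_{jk}-\psi_{ik}\big) + 2g_j \ge 2g_j \ge 0,
\]
using the triangle inequality for $\psi$ and $g_j\ge 0$; the degenerate cases ($i=j$, $j=k$, or $i=k$) reduce to the already-established non-negativity and symmetry. The hitting distance is handled identically, with $\widehat{z}^{\,\mathrm h}_{ij}$ replaced by $z^\mathrm{h}_{ij}$ (for which $z^\mathrm{h}_{ii}=1$, so $\psi$ is literally the symmetrized free energy distance). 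The main obstacle is the realization that the awkward Lagrange-multiplier and weight factors collapse into the \emph{non-negative} node potentials $g_i=-\tfrac12\log\gamma_{ii}$ — non-negative exactly because $\gamma_{ii}$ is a probability — since additive node potentials would otherwise be able to break the triangle inequality; everything else rests on the classical first-passage super-multiplicativity underlying the free energy distance metric.
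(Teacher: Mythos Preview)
Your proof is correct and rests on the same two ingredients as the paper's: the first-passage super-multiplicativity $\widehat{z}_{ik}\ge(\widehat{z}_{ij}/\widehat{z}_{jj})\,\widehat{z}_{jk}$ and the fact that the diagonal coupling entries satisfy $\gamma_{kk}\le 1$. The paper combines these in one stroke by showing directly that $\gamma_{ij}\ge \gamma_{ik}\gamma_{kj}/\gamma_{kk}\ge \gamma_{ik}\gamma_{kj}$, i.e.\ that the \emph{directed} surprisal $-\log\gamma_{ij}$ already satisfies the triangle inequality, and then symmetrizes. You instead first strip off the Lagrange multipliers and weights via the decomposition $\dist^\mathrm{sur}_{ij}=\psi_{ij}+g_i+g_j$, prove the triangle inequality for the ``pure'' part $\psi$, and then add back the node potentials using $g_j\ge 0$. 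The underlying inequalities are literally the same (your $g_j\ge 0$ is exactly the paper's $\gamma_{jj}\le 1$), so the difference is organizational rather than mathematical; your route has the virtue of making explicit that the surprisal distance is the symmetrized free-energy-type distance $\psi$ shifted by non-negative node potentials, while the paper's route is a bit shorter since it never needs to introduce $\psi$ or $g$ separately.
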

\begin{proof}
The triangle inequality for both surprisal distances, i.e., $\dist^\mathrm{sur}_{ij} \leq \dist^\mathrm{sur}_{ik} + \dist^\mathrm{sur}_{kj}$ is trivially proven if $i=k$, $i = k$ or $j = k$. So we will assume here that $i \neq j \neq k \neq i$.

\paragraph{Non-hitting formalism.}

First, notice that the reasoning found in Appendix B of \cite{franccoisse2016bag} is still valid with the non-hitting reference probabilities derived in Section \ref{Sec_non_hitting_reference_probabilities01}, namely, we have $\widehat{\pi}^\mathrm{ref}(\wp_{ij}) = \widehat{\pi}^\mathrm{ref}(\wp^{\mathrm{h}}_{ij}) \widehat{\pi}^\mathrm{ref}(\wp_{jj})$ and $\tilde{c}(\wp_{ij}) = \tilde{c}(\wp^{\mathrm{h}}_{ij}) + \tilde{c}(\wp_{jj})$, where $\wp^{\mathrm{h}}_{ij}$ is the hitting path consisting of the first part of $\wp_{ij}$, until it reaches $j$ for the first time, and $\wp_{jj}$ is the remaining part of $\wp_{ij}$. Thus, we also have
\begin{gather}
\widehat{z}_{ij} = \widehat{z}^\mathrm{h}_{ij} \widehat{z}_{jj}, \label{z_trick}
\end{gather}
where $\widehat{z}^\mathrm{h}_{ij} \triangleq \sum_{\wp^\mathrm{h}_{ij} \in \mathcal{P}^\mathrm{h}_{ij}} \widehat{\pi}^\mathrm{ref}(\wp^\mathrm{h}_{ij}) \exp( - \beta \tilde{c}(\wp^\mathrm{h}_{ij}) )$. Now, it is easy to see that, for the optimal path probabilities obtained in (\ref{prob_ref_comp}),
\begin{equation}
\mathrm{P}(S = i, E = j) \geq \mathrm{P}(S = i, k \in \wp, E = j) \triangleq \sum_{\wp_{ij} \in \mathcal{P}_{ij}} \delta(k \in \wp_{ij}) \mathrm{P}^\star(\wp_{ij}), \label{int_def}
\end{equation}
where $\delta(k \in \wp_{ij})$ is equal to $1$ if node $k$ lies on path $\wp_{ij}$ and $0$ otherwise.
By developing with (\ref{prob_ref_comp}), we obtain
\begin{align*}
&\mathrm{P}(S = i, k \in \wp, E = j)
= \mu^\mathrm{in}_{i}  \sigma^\mathrm{in}_{i} \mu^\mathrm{out}_{j}   \alpha_{j} \sum_{\wp_{ij} \in \mathcal{P}_{ij}} \delta(k \in \wp_{ij}) \widehat{\pi}^\mathrm{ref}(\wp_{ij}) \exp(-\beta \tilde{c}(\wp_{ij} )) \\
&= \mu^\mathrm{in}_{i}  \sigma^\mathrm{in}_{i} \mu^\mathrm{out}_{j}   \alpha_{j} \sum_{\wp^\mathrm{h}_{ik} \in \mathcal{P}^\mathrm{h}_{ik}} \tilde{\pi}^\mathrm{ref}(\wp^\mathrm{h}_{ik}) \exp(-\beta \tilde{c}(\wp^\mathrm{h}_{ik} )) \sum_{\wp_{kj} \in \mathcal{P}_{kj}} \widehat{\pi}^\mathrm{ref}(\wp_{kj}) \exp(-\beta \tilde{c}(\wp_{kj} )) \\
&= \mu^\mathrm{in}_{i}  \sigma^\mathrm{in}_{i} \mu^\mathrm{out}_{j}  \alpha_{j} \widehat{z}^\mathrm{h}_{ik} \widehat{z}_{kj},
\end{align*}
where each path from $i$ to $j$ is again cut in two sequential sub-paths.

Then, using Equations (\ref{z_trick}) and (\ref{pi_element}) provides
\begin{align*}
\mathrm{P}(S = i, k \in \wp, E = j) &= \frac{\mu^\mathrm{in}_{i} \sigma^\mathrm{in}_{i}  \mu^\mathrm{out}_{k}  \alpha_{k} \widehat{z}_{ik} \mu^\mathrm{in}_{k} \sigma^\mathrm{in}_{k} \mu^\mathrm{out}_{j}  \alpha_{j} \widehat{z}_{kj}}{ \mu^\mathrm{in}_{k} \sigma^\mathrm{in}_{k}  \mu^\mathrm{out}_{k}  \alpha_{k}  \widehat{z}_{kk}} \\
&= \frac{ \mathrm{P}(S = i, E = k) \mathrm{P}(S = k, E = j) }{\mathrm{P}(S = k, E = k) }.
\end{align*}
Using the inequality in (\ref{int_def}) and $\mathrm{P}(S = k, E = k) \leq 1$, we get 
\begin{gather*}
-\log(\mathrm{P}(S = i, E = j)) \leq -\log(\mathrm{P}(S = i, E = k)) -\log(\mathrm{P}(S = k, E = j)),
\end{gather*}
which proves the triangle inequality for the non-hitting surprisal distance for $i \neq j \neq k \neq i$.

\paragraph{Hitting formalism.}

The reasoning is similar to the previous case. First let us consider 
\begin{align}
\mathrm{P}_\mathrm{h}(S = i, E = j) &=  \mu^\mathrm{h,in}_{i}  \sigma^\mathrm{in}_{i} \mu^\mathrm{h,out}_{j} \sigma^\mathrm{out}_{j} z^\mathrm{h}_{ij} = \mu^\mathrm{h,in}_{i}  \sigma^\mathrm{in}_{i} \mu^\mathrm{h,out}_{j} \sigma^\mathrm{out}_{j} \frac{z_{ij}}{z_{jj}}  \notag \\ 
&\geq  \mu^\mathrm{h,in}_{i}  \sigma^\mathrm{in}_{i} \mu^\mathrm{h,out}_{j} \sigma^\mathrm{out}_{j} \frac{\sum_{\wp_{ij} \in \mathcal{P}_{ij}} \delta(k \in \wp_{ij}) \tilde{\pi}^\mathrm{ref}(\wp_{ij}) \exp(-\beta \tilde{c}(\wp_{ij} )) }{z_{jj}}. \label{int_def2}
\end{align}
And again, 
\begin{align*}
&\mu^\mathrm{h,in}_{i}  \sigma^\mathrm{in}_{i} \mu^\mathrm{h,out}_{j} \sigma^\mathrm{out}_{j} \frac{\sum_{\wp_{ij} \in \mathcal{P}_{ij}} \delta(k \in \wp_{ij}) \tilde{\pi}^\mathrm{ref}(\wp_{ij}) \exp(-\beta \tilde{c}(\wp_{ij} )) }{z_{jj}} \\
&=\frac{\mu^\mathrm{h,in}_{i}  \sigma^\mathrm{in}_{i} \mu^\mathrm{h,out}_{j} \sigma^\mathrm{out}_{j}}{z_{jj}} \sum_{\wp^\mathrm{h}_{ik} \in \mathcal{P}^\mathrm{h}_{ik}} \tilde{\pi}^\mathrm{ref}(\wp^\mathrm{h}_{ik}) \exp(-\beta \tilde{c}(\wp^\mathrm{h}_{ik} ))  \sum_{\wp_{kj} \in \mathcal{P}_{kj}}  \tilde{\pi}^\mathrm{ref}(\wp_{kj}) \exp(-\beta \tilde{c}(\wp_{kj} )) \\
&= \frac{ \mu^\mathrm{h,in}_{i}  \sigma^\mathrm{in}_{i} \mu^\mathrm{h,out}_{k} \sigma^\mathrm{out}_{k} z^\mathrm{h}_{ik} \; \mu^\mathrm{h,in}_{k}  \sigma^\mathrm{in}_{k} \mu^\mathrm{h,out}_{j} \sigma^\mathrm{out}_{j} z^\mathrm{h}_{kj} }{ \mu^\mathrm{h,in}_{k}  \sigma^\mathrm{in}_{k} \mu^\mathrm{h,out}_{k} \sigma^\mathrm{out}_{k} }=  \frac{ \mathrm{P}_\mathrm{h}(S = i, E = k) \mathrm{P}_\mathrm{h}(S = k, E = j) }{\mathrm{P}_\mathrm{h}(S = k, E = k) },
\end{align*}
where we used $z^\mathrm{h}_{kk} = 1$. With (\ref{int_def2}) and $\mathrm{P}_\mathrm{h}(S = k, E = k) \leq 1$, we finally obtain
\begin{gather*}
-\log(\mathrm{P}_\mathrm{h}(S = i, E = j)) \leq -\log(\mathrm{P}_\mathrm{h}(S = i, E = k)) -\log(\mathrm{P}_\mathrm{h}(S = k, E = j)),
\end{gather*}
which shows the triangle inequality for the hitting surprisal distance for $i \neq j \neq k \neq i$.
\end{proof}

\subsection{Distances between groups of nodes}
\label{sec::dissgroups}

A different family of dissimilarities naturally arises from the optimal transport interpretation of the margin-constrained BoP formalism, namely dissimilarities between \emph{groups of nodes}. These dissimilarities can be viewed as an extension of the \emph{Wasserstein distance} between node distributions on a graph, also known under the name of the \emph{Monge-Kantorovich distance} or the \emph{earth mover distance} in the literature (see e.g. \cite{dobrushin1970prescribing,villani2003topics,villani2008optimal,zolotarev1983probability}).

This dissimilarity is defined as the total cost of transportation in order to move from the distribution on source nodes, $\bm{\sigma}_\mathrm{in}$, to the distribution on target nodes, $\bm{\sigma}_\mathrm{out}$. Similarly to the usual free energy distance described in \cite{kivimaki2014developments,franccoisse2016bag},  interpolating between the shortest path distance and the commute cost distance (which is proportional to the resistance distance for undirected graphs), the margin-constrained BoP formalism uses the value of the free energy functional in order to derive a dissimilarity which interpolates between the Wasserstein distance and an electrical circuit-based dissimilarity between groups of node.

\subsubsection{Definitions} 

Let $\mathcal{G}$ be a directed, strongly connected, graph with $n$ nodes, weighted by vector $\mathbf{w}$. Suppose we have $p$ groups of nodes, and the \emph{membership matrix} $\mathbf{M} = (m_{ig})$ with $m_{ig} \geq 0$ and $\sum_{g=1}^p m_{ig} = 1$, represents the membership degree of node $i$ to group $g$ (fuzzy memberships are allowed). From that, we can compute the \emph{node distribution in group g}, $\bm{\sigma}_g = ({\sigma}^g_i)$, as , e.g.,
\begin{equation*}
{\sigma}^g_i = \frac{w_i m_{ig}}{\sum_{j \in \mathcal{V}} w_j m_{jg}}.
\end{equation*}
Then, as for the standard free energy distance between two nodes \cite[see this paper for details]{kivimaki2014developments}, the \emph{bag-of-paths free energy dissimilarity} between groups $g$ and $h$ is defined as the symmetrized minimum free energy between these two groups of nodes 
\begin{equation}
\dist^\text{FE}_{gh} \triangleq \left\{ \begin{array}{ll}
		\frac{1}{2}( \text{FE}^\mathrm{min}(\bm{\sigma}_g,\bm{\sigma}_h) +  \text{FE}^\mathrm{min}(\bm{\sigma}_h,\bm{\sigma}_g) )  & \mbox{if } i\neq j, \\
		0 & \mbox{if } i = j,
	\end{array} \right. 
\end{equation}
and the \emph{bag-of-hitting-paths free energy dissimilarity} between groups $g$ and $h$ by 
\begin{equation}
\dist^\mathrm{h,FE}_{gh} \triangleq \left\{ \begin{array}{ll}
		\frac{1}{2}( \text{FE}_\mathrm{h}^\mathrm{min}(\bm{\sigma}_g,\bm{\sigma}_h) +  \text{FE}_\mathrm{h}^\mathrm{min}(\bm{\sigma}_h,\bm{\sigma}_g) )  & \mbox{if } i\neq j, \\
		0 & \mbox{if } i = j,
	\end{array} \right. 
\end{equation}
where $\text{FE}^\mathrm{min}(\bm{\sigma}_\mathrm{in},\bm{\sigma}_\mathrm{out})$ is the non-hitting free energy (\ref{FE}), and $\text{FE}_\mathrm{h}^\mathrm{min}(\bm{\sigma}_\mathrm{in},\bm{\sigma}_\mathrm{out})$ the hitting free energy (\ref{FE_h}), with starting and ending node flows respectively equal to $\bm{\sigma}_\mathrm{in}$ and $\bm{\sigma}_\mathrm{out}$. By definition of the free energy, we are sure that this quantity is always positive. When $T \to 0$, the dissimilarity between groups $g$ and $h$ will yield the \emph{optimal cost of transportation} from group $g$ to $h$ and from group $h$ to $g$, which is obviously a metric \cite{dobrushin1970prescribing,villani2003topics,villani2008optimal,zolotarev1983probability}. It is, however, possible that this dissimilarity is not a metric anymore for other values of $T$.

From a computational point of view, there exists an important difference between the bag-of-hitting-paths and the bag-of-paths algorithms computing their respective free energy dissimilarities. As a matter of fact, in the hitting formalism, the matrix $\mathbf{Z}_\mathrm{h}$ is only computed once, and dissimilarities between each pair of groups can be obtained afterward by solely changing the values of $\bm{\sigma}_\mathrm{in}$ and $\bm{\sigma}_\mathrm{out}$ in the iterative procedure defined by (\ref{it1_h}) and (\ref{it2_h}). It is however impossible to proceed that way for the non-hitting formalism, as the computation of $\widehat{\mathbf{Z}}$ requires the values of $\bm{\sigma}_\mathrm{in}$ and $\bm{\sigma}_\mathrm{out}$. Therefore, the bag-of-hitting-paths distances are obtained in a more efficient way than the bag-of-paths distances.


\begin{figure}
\centering
\begin{subfigure}[]{0.45\textwidth}
      \includegraphics[width = \textwidth, trim=3cm 3cm 2.3cm 2.5cm, clip=true]{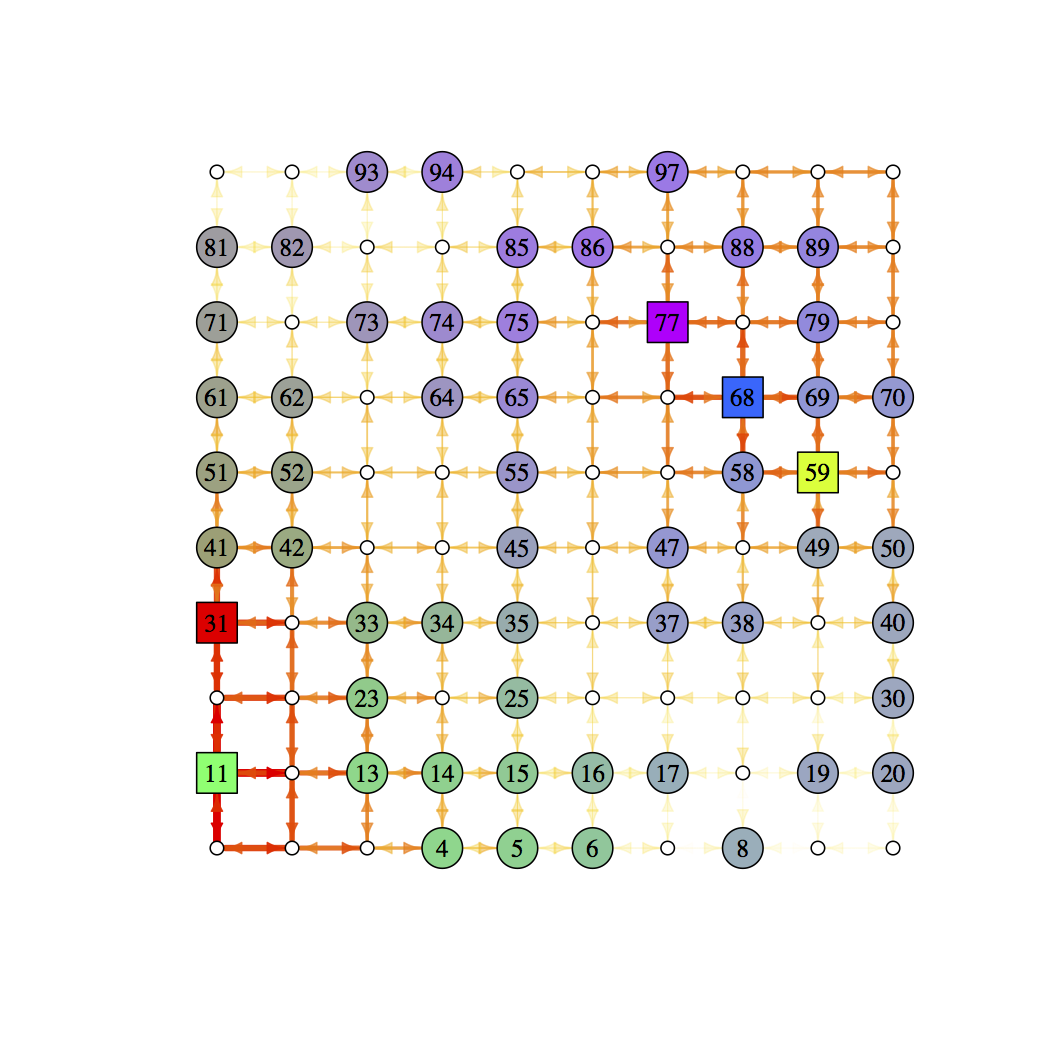}
      \caption{cBoP, $\beta=10^{-3}$}
\end{subfigure}
\qquad
\begin{subfigure}[]{0.45\textwidth}
      \includegraphics[width = \textwidth, trim=3cm 3cm 2.3cm 2.5cm, clip=true]{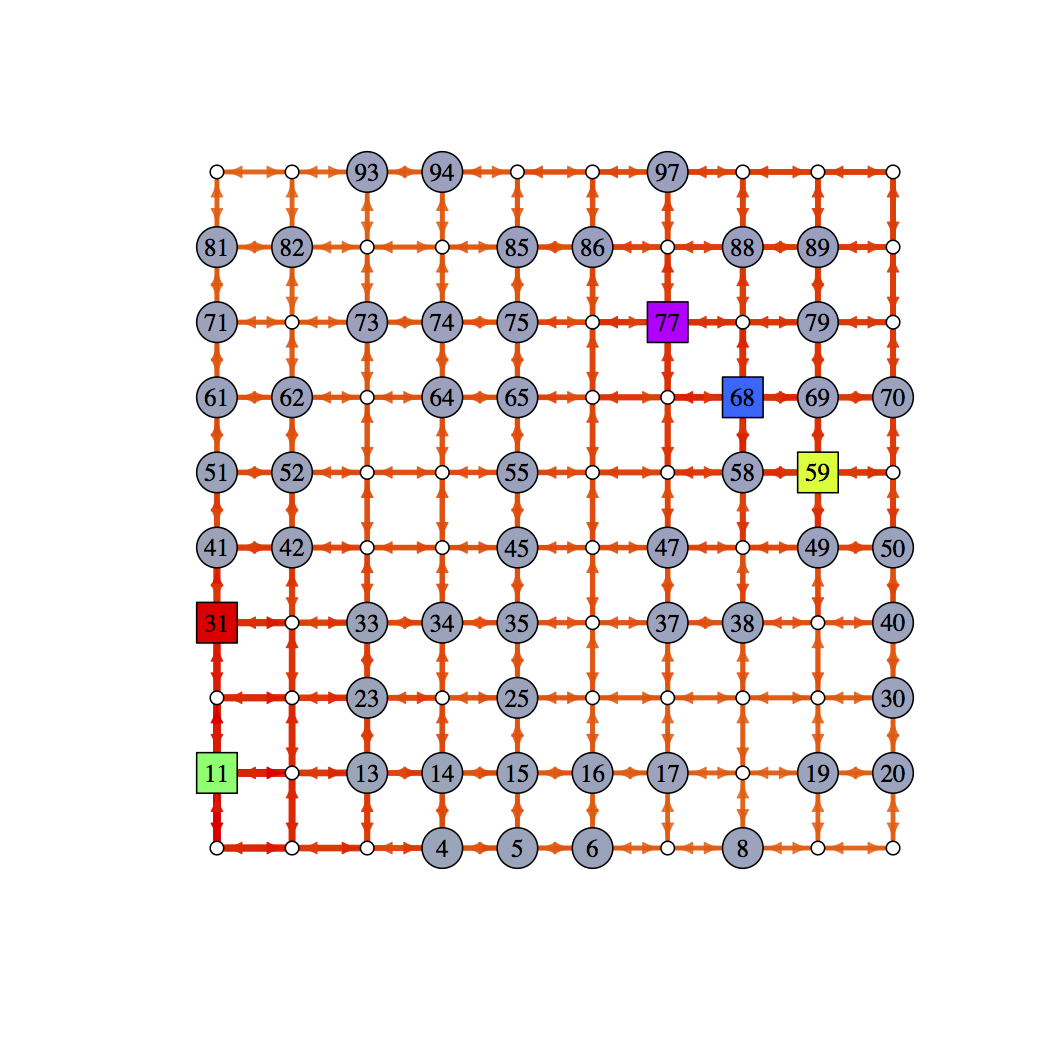}
      \caption{cBoHP, $\beta=10^{-3}$}
\end{subfigure}
\begin{subfigure}[]{0.45\textwidth}
      \includegraphics[width = \textwidth, trim=3cm 3cm 2.3cm 2.5cm, clip=true]{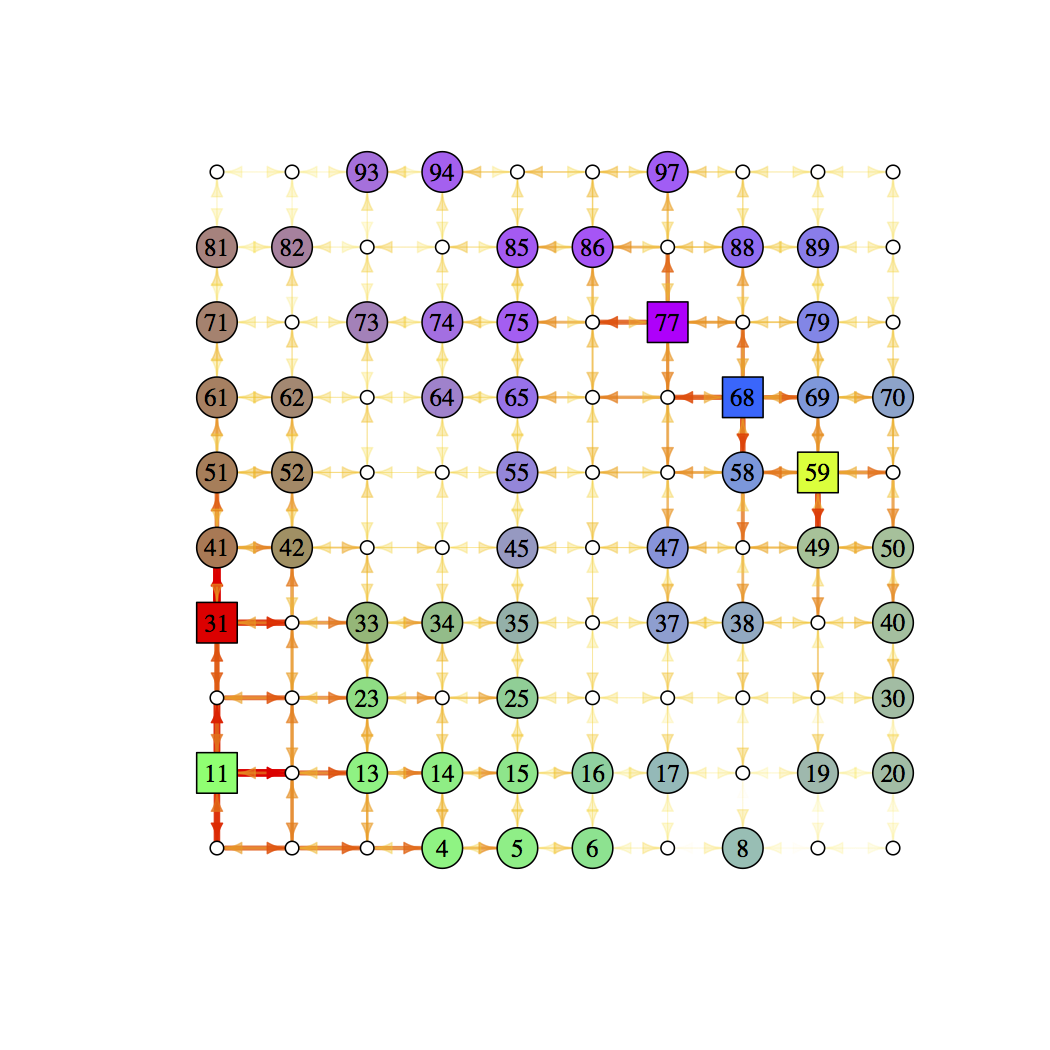} 
      \caption{cBoP, $\beta=10^{-1}$}
\end{subfigure}
\qquad
\begin{subfigure}[]{0.45\textwidth}
      \includegraphics[width = \textwidth, trim=3cm 3cm 2.3cm 2.5cm, clip=true]{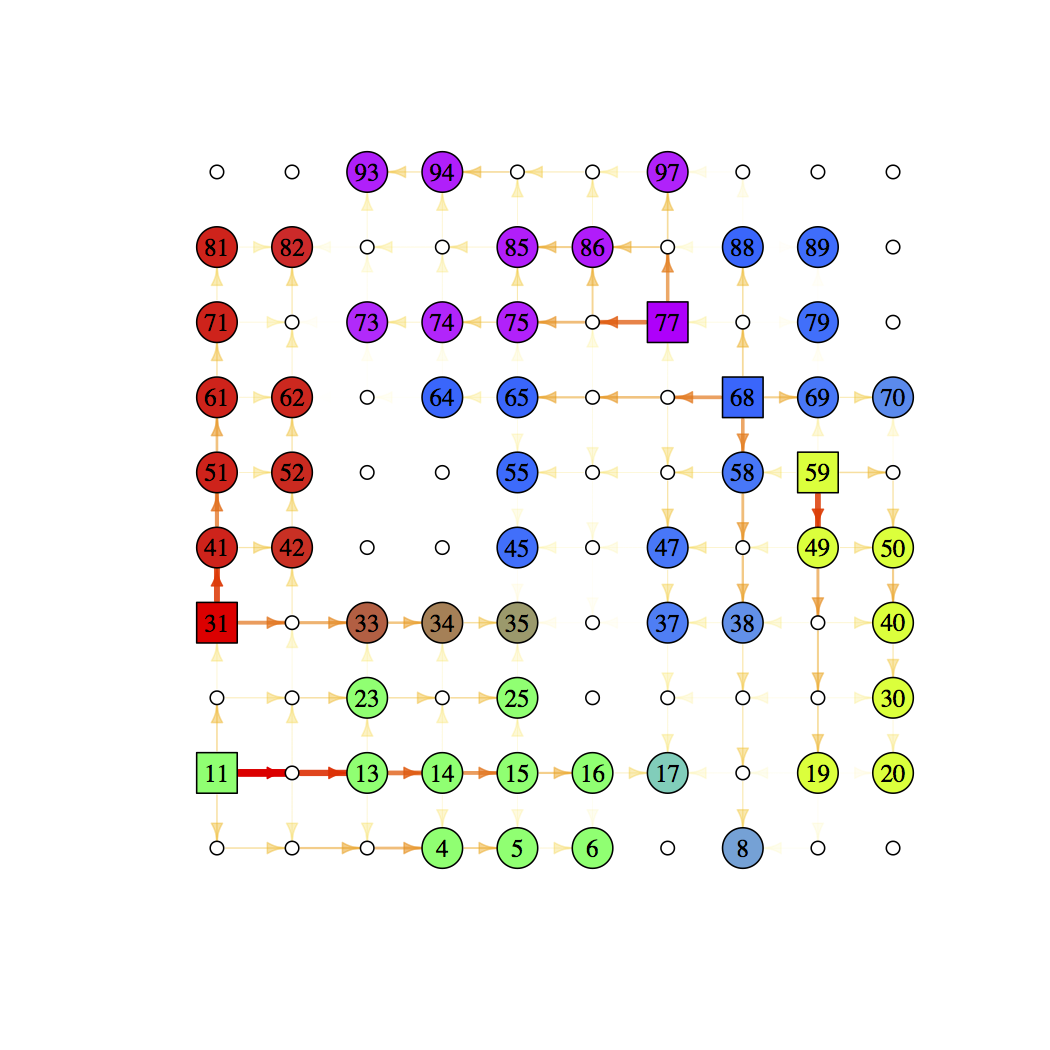}
      \caption{cBoP, $\beta=10$}
\end{subfigure}
\caption{Resulting coupling $\bm{\Gamma}$ and edge flows $\mathbf{\bar{N}}$ for the margin-constrained bag-of-paths (cBoP) and the margin-constrained bag-of-hitting-paths (cBoHP) models, for different values of $\beta$. Source nodes are represented by a square and targets by a circle. $\bm{\Gamma}$ is shown on targets by a color interpolation representing their membership to corresponding sources (the membership of target $j$ to source $i$ is $\mathrm{P}(S=i | E=j) = \gamma_{ij} / \sigma^\mathrm{out}_j$). $\mathbf{\bar{N}}$ is represented on edges.}
\label{fig:toy_res}
\end{figure}

\section{Some illustrations}
\label{Sec_case_studies01}
\label{sec::opttranspexp}

Although the main contribution of this work lies in the theoretical development of the margins constrained bag-of-paths models, we provide here an illustration of the algorithms on a toy example. The bag-of-paths formalism (as well as the bag-of-hitting-paths, as they converge to the same solution when $T\to 0$) defines an efficient way to find an approximate solution of the transportation problem on a graph. Moreover, by varying the temperature of the model, we can add uncertainty to the optimal paths and offer a flexible, stochastic, alternative to the optimal solution. With this feature, we can, e.g., pinpoint target nodes for which the optimal source coupling is most unclear. In a practical setting, when focusing on a deterministic transportation policy, this can help in evaluating the importance of each source-target coupling decision. Unlike other efficient optimal transport solvers, the algorithm here provides not only the coupling between pairs of nodes, but also the flows on edges. Knowing the most frequented edges could be a major asset for real-life applications, for example in order to forecast network traffic. This section illustrates this idea on a toy graph (a $10 \times 10$ lattice).
In addition, we evaluate the computational efficiency of Algorithms \ref{Alg_nh_bop} and \ref{Alg_h_bop} by comparing the computation times with a baseline linear solver on lattices of different sizes.

\subsection{Illustrations on a $10 \times 10$ lattice}

In this illustrative example, 5 random nodes were picked with $\sigma^\mathrm{in}_i = 0.2$ and 50 others with $\sigma^\mathrm{out}_i = 0.02$. The resulting coupling $\mathbf{\Gamma}$ and flows on edges $\mathbf{\bar{N}}$ are represented in Figure \ref{fig:toy_res} for the constrained bag-of-paths (cBoP) and the constrained bag-of-hitting-paths (cBoHP), with different values of $\beta$. In this figure, target nodes $j$ are colored to represent their distribution of membership over sources, i.e.\ $\mathrm{P}(S=i | T=j) = \gamma_{ij} / \sigma^\mathrm{out}_j$, and edge colors display the flows, $\mathbf{\bar{N}}$.

We observe that results obtained by the bag-of-paths model and the bag-of-hitting-paths model are quite different for $\beta=10^{-3}$ (Figure \ref{fig:toy_res}, top row): the bag-of-paths model displays a behavior similar to a diffusive process, with edges near sources drawn more frequently, which is known to be similar to the electrical current \cite{ahuja1993network, Snell-1984, guex2016interpolating}. On the other hand, the bag-of-hitting-paths solution for $\beta=10^{-3}$ is quite trivial, with a uniform distribution of memberships of every target $j$ to sources and the flow almost similar on every edge. In contrast, when the temperature is low, both models converge to the same solution and, to avoid redundancy, only the bag-of-paths model is shown here (Figure \ref{fig:toy_res}, bottom row). With $\beta=10$, this model displays an optimal transport solution, with only shortest paths followed and almost deterministic distributions of targets-to-sources memberships.

Therefore, in the present problem, the constrained bag-of-hitting-paths is perhaps less useful than the constrained bag-of-paths when the parameter $\beta$ is close to zero. However, this depends on the application at hand and, essentially, on the desired behaviour of the system when $\beta \rightarrow 0$, either the solution of an electrical circuit or the independence between sources and destinations.

\subsection{Comparison of computation time}
\label{sec::speed}
To assess the computation time of the constrained bag-of-paths and bag-of-hitting-paths algorithms, we compare them to the open-source \emph{Computational Infrastructure for Operations Research} (Coin-or) linear programming solver (\texttt{clp}) written in C++ \cite{heimer2003common}, which is considered as an efficient baseline algorithm for finding the coupling and the flow of the exact optimal transport problem (see Equation (\ref{Eq_exact_optimal_transport_LP01})). We run the algorithms on lattices of various dimensions in order to increase the number of nodes $n$. The number of source nodes and target nodes are set to be both $\lfloor n/3 \rfloor$ (rounded down) with similar weights, and their locations are randomly selected.

\begin{figure}[t]
\begin{center}
      \includegraphics[width = 0.9\textwidth, trim=0cm 0cm 0cm 0cm, clip=true]{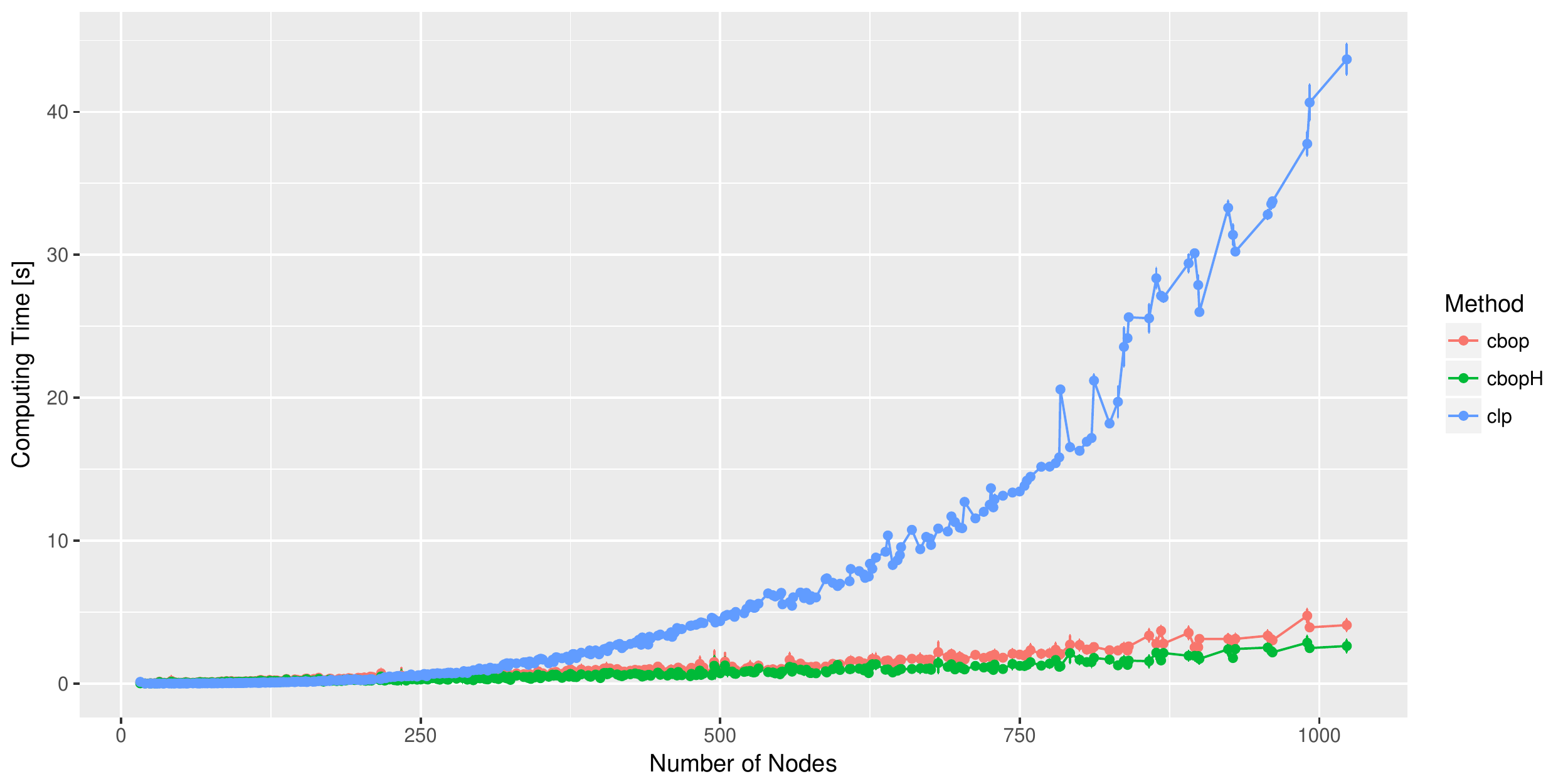}
\end{center}
\caption{Computation times for, respectively, the open-source ``Coin-or" linear programming solver (\texttt{clp}, blue), the margin-constrained bag-of-paths algorithm (\texttt{cbop}, red), and the margin-constrained bag-of-hitting-paths algorithm (\texttt{cbopH}, green) on lattices containing $n$ nodes (x-axis), $\lfloor n/3 \rfloor$ sources and $\lfloor n/3 \rfloor$ targets with uniform weights.}
\label{fig:speed}
\end{figure} 

The results are presented in Figure \ref{fig:speed}. We can observe that on large graphs, both hitting and non-hitting bag-of-paths algorithms perform much faster than the linear programming baseline, with a slight advantage for the hitting algorithm. This was already observed in \cite{Cuturi2013} for entropy regularized optimal transport problems. All results were obtained with Julia (version 0.5.0) running on an Intel Xeon with $2 \times 8$ 3.6GHz processors and 128 GB of RAM.


\section{Conclusion}
\label{Sec_conclusion01}

This work extends the bag-of-paths framework introduced in \cite{mantrach2009sum,franccoisse2016bag} by allowing the user to set constraints on the starting and ending nodes of paths, $\bm{\sigma}_\mathrm{in}$ and $\bm{\sigma}_\mathrm{out}$. Like its predecessor, this formalism is derived for two types of paths, non-hitting paths and hitting paths. It also depends on a user-defined parameter, the temperature $T > 0$, according to which the model interpolates between a deterministic optimal policy and a completely random behavior. Both the non-hitting and hitting paths formalisms allow the computation of various quantities: the coupling, $\gamma_{ij} = \mathrm{P}(S = i, E = j)$; the expected number of times a node appears on the paths (a betweenness value), $\bar{n}_{i}$; and the optimal policy defining a biased random walk with transition probabilities $p_{ij}$. All these quantities are expressed in terms of three computational elements: a fundamental matrix $\mathbf{Z}$, also found in \cite{franccoisse2016bag}, and Lagrange multipliers $\bm{\lambda}_\mathrm{in}$ and $\bm{\lambda}_\mathrm{out}$.

The addition of the set of constraints over starting and ending node distributions adds flexibility to its unconstrained predecessor, and yields interesting connections with other models. When $T \to 0^{+}$, both the non-hitting and hitting formalisms are shown to be similar, and converge to a solution of the optimal transport on a graph problem. Unlike most algorithms solving the transportation problem, both bag-of-paths formalisms not only give sources-to-targets attributions, i.e.\ the coupling $\gamma_{ij}$, but also corresponding embarkment and disembarkment prices (with  $\bm{\lambda}_\mathrm{in}$ and $\bm{\lambda}_\mathrm{out}$) and the flow on edges ($\bar{n}_{ij}$), while running with a competitive computation time compared to a baseline linear solver. In contrast, when $T \to 0$, each formalism behaves differently, each having its own merits. The non-hitting formalism converges to the electrical solution, with starting and ending node distributions corresponding to different potentials defined on nodes, and the hitting formalism, which is faster to compute, converges to the trivial, independent coupling.

These constraints also enlarge the range of applications of the bag-of-paths formalism, and it was shown here how to derive two families of dissimilarities from it. The first family of dissimilarities is defined as the surprisal distance between nodes, and constraints on starting and ending nodes provide a way to associate weights on nodes. The second family of dissimilarities is the free energy dissimilarity between groups of nodes. For the moment, these dissimilarities are quite theoretical and their applications are not explored in this paper. However, future research will investigate the use of these new dissimilarities in semi-supervised classification, hierarchical clustering, as well as other applications.

Generally speaking, the flexibility and the richness of this model could lead to different use cases, and future investigations will aim at finding various applications of the different introduced quantities. An on-going study will also investigate the introduction of flow constraints in the bag-of-paths framework.


\section*{Acknowledgements}

This work was partially supported by the Immediate and the Brufence projects funded by InnovIris (Brussels Region), as well as former projects funded by the Walloon region, Belgium. Ilkka Kivim\"aki was partially funded by Emil Aaltonen Foundation, Finland. We thank these institutions for giving us the opportunity to conduct both fundamental and applied research.

We also thank the anonymous reviewers and the editor whose remarks allowed to improve significantly the manuscript.


\bibliographystyle{apacite}
\bibliography{Randomized_OT}

\end{document}